\crefname{theorem}{Theorem}{Theroems}
\crefname{lemma}{Lemma}{Lemmas}
\crefname{corollary}{Corollary}{Corollaries}
\crefname{proposition}{Proposition}{Propositions}
\crefname{assumption}{Assumption}{Assumptions}
\crefname{algorithm}{Algorithm}{Algorithms}
\crefname{section}{Section}{Sections}
\crefname{subsection}{Section}{Sections}
\crefname{figure}{Figure}{Figures}
\crefname{table}{Table}{Tables}
\newtheorem{assumption}{Assumption}
\newcommand\independent{\protect\mathpalette{\protect\independenT}{\perp}}
\def\independenT#1#2{\mathrel{\rlap{$#1#2$}\mkern2mu{#1#2}}}
\DeclareMathOperator*{\argmax}{arg\,max}
\DeclareMathOperator*{\argmin}{arg\,min}
\begin{document}

\title{Learning Optimal Dynamic Treatment Regimens Subject to Stagewise Risk Controls}

\author{\name Mochuan Liu \email mochuan@live.unc.edu \\
       \addr Department of Biostatistics\\
       University of North Carolina at Chapel Hill\\
       Chapel Hill, NC 27599, USA
       \AND
       \name Yuanjia Wang \email yw2016@cumc.columbia.edu \\
       \addr Department of Biostatistics\\
       Columbia University\\
       New York, NY 10032, USA
       \AND 
       \name Haoda Fu \email fu\_haoda@lilly.com \\
       \addr Eli Lilly and Company\\
       Indianapolis, IN 46285, USA
       \AND 
       \name Donglin Zeng \email dzeng@umich.edu\\
       \addr Department of Biostatistics\\
       University of Michigan \\
       Ann Arbor, MI 48109, USA}

\editor{David Sontag}

\maketitle

\begin{abstract}
Dynamic treatment regimens (DTRs) aim at tailoring individualized sequential treatment rules that maximize cumulative beneficial outcomes by accommodating patients' heterogeneity in decision-making. For many chronic diseases including type 2 diabetes mellitus (T2D), treatments are usually multifaceted in the sense that aggressive treatments with a higher expected reward are also likely to elevate the risk of acute adverse events. In this paper, we propose a new weighted learning framework, namely benefit-risk dynamic treatment regimens (BR-DTRs), to address the benefit-risk trade-off. The new framework relies on a backward learning procedure by restricting the induced risk of the treatment rule to be no larger than a pre-specified risk constraint at each treatment stage. Computationally, the estimated treatment rule solves a weighted support vector machine problem with a modified smooth constraint. Theoretically, we show that the proposed DTRs are Fisher consistent, and we further obtain the convergence rates for both the value and risk functions. Finally, the performance of the proposed method is demonstrated via extensive simulation studies and application to a real study for T2D patients.
\end{abstract}

\begin{keywords}
  Dynamic treatment regimens, Precision medicine, Benefit-risk tradeoff, Acute adverse events, Weighted support vector machine.
\end{keywords}

\section{Introduction}
\label{sec:intro}


Precision medicine aims at tailoring treatments to individual patients by taking their clinical heterogeneity into consideration \citep{hodson_precision_2016,ginsburg_precision_2018}. One important treatment strategy in precision medicine is called dynamic treatment regimens (DTRs), which sequentially assign treatments to individual patients based on their evolving health status and intermediate responses \citep{chakraborty_dynamic_2014}, with the goal of maximizing their long-term rewarding outcome. Over the past years, there has been an explosive development of statistical methods and machine learning algorithms for learning DTRs using either randomized trials \citep{murphy_experimental_2005,dawson_efficient_2012,lei_smart_2012}, or observational data \citep{rosthoj_estimation_2006,moodie_q-learning_2012}. Among them, regression-based methods, such as A-learning \citep{murphy_optimal_2003,blatt_technical_2004}, G-estimation \citep{robins_optimal_2004}, regret regression \citep{henderson_regret-regression_2010}, Q-learning \citep{qian_performance_2011,ma_learning_2022}, and doubly robust regression \citep{zhang_estimating_2012,barrett_doubly_2014}, fit regression models to estimate expected future outcome at each stage, or its varied forms such as blip functions or regrets, and obtain the optimal DTRs by comparing the model-predicted outcomes among treatments in a backward fashion. To lessen the concern of model misspecification, machine learning-based approaches have also been advocated to learn the optimal DTRs by directly optimizing the so-called value function. Examples of machine learning-based methods include outcome weighted learning (OWL) \citep{zhao_estimating_2012,zhao_new_2015} and its doubly robust extension \citep{liu_augmented_2018}, which connects the value optimization problem to a weighted classification problem that can be solved efficiently through support vector machines. 

For many chronic diseases, treatments are multifaceted: the aggressive treatment with a better reward is often accompanied by higher toxicity, leading to the elevated risk of severe and acute side effects or even fatality. For example, the Standards of Medical Care in Diabetes published by the American Diabetes Association (ADA) suggests metformin as first-line initial therapy for all general T2D patients. Intensified insulin therapy should be applied to patients when the patients' A1C level is above the target \citep{american_diabetes_association_pharmacologic_2022}. However, evidence has indicated that many patients who may eventually rely on insulin therapy to achieve ideal A1C level will be likely to experience more hypoglycemic episodes \citep{ukpds_group_intensive_1998}, and the latter can cause neurological impairments, coma, or death \citep{cryer_hypoglycemia_2003}. Thus, the benefit-risk challenge presented in chronic diseases such as T2D entails that the ideal treatment rules should also consider reducing any short-term risks during each decision stage while maximizing the long-term rewarding outcome. 

Only a limited number of existing works in DTRs have ever considered the benefit-risk balance, and most of them are restricted to the single-stage decision-making problem. Among them, most of the methods prespecified a utility function to unify the benefit and risk into one composite outcome and proposed to learn optimal decision through maximizing the utility function \citep{lee_bayesian_2015,butler_incorporating_2018}. A major disadvantage of these utility-based approaches is that the choice of the utility function is often subjective and cannot yield decision rules that strictly control the risks. More recently, \citet{wang_learning_2018} reformulated the problem into a constrained optimization problem that maximizes the reward outcome subject to a risk constraint and developed a weighted learning framework for solving the optimal rule. However, no theoretical justification was provided for the proposed method, and extending the framework to study DTRs with multiple-stage risks is nontrivial. Computationally intensive methods have also been proposed to estimate the optimal rules for either competing risks or under a single safety constraint \citep{laber_set-valued_2014,laber_identifying_2018}, but these methods cannot be easily extended to multiple constraints. 

{Learning optimal DTRs under the constraints is closely related to constrained and safe reinforcement learning (RL) \citep{garcia_comprehensive_2015,zhao_state-wise_2023} and multi-objective RL \citep{hayes_practical_2022}, which has attracted much interest in the RL field recently. Most of the safe RL algorithms consider online RL problems and the multi-objective RL aims to learn a policy to achieve the so-called Pareto optimality to balance different outcomes. Furthermore, all these methods are either designed for the problems with a finite state space \citep{van_moffaert_multi-objective_2014,chow_risk-constrained_2018,fei_risk-sensitive_2020,kalagarla_sample-efficient_2021,wei_provably-efficient_2021,bura_dope_2022,deng_reinforcement_2023}, rely on strong parametric assumptions over policy, transition model and outcome \citep{mahdavi_trading_2012,achiam_constrained_2017,yang_generalized_2019,yu_convergent_2019,bohez_value_2019,ding_provably_2021,amani_safe_2021}, or study more restrictive bandit problems \citep{amani_linear_2019,moradipari_stage-wise_2020,khezeli_safe_2020}. Other methods consider deep neural networks \citep{ray_benchmarking_2019,ma_feasible_2021} but require a large sample size. Some safe RL for the offline data, including \citet{geibel_risk-sensitive_2005,le_batch_2019,paternain_safe_2023}, are either designed for infinite horizon problems \citep{geibel_risk-sensitive_2005,le_batch_2019} or rely on strong model assumptions \citep{paternain_safe_2023} under CMDP assumption. Compared to all existing work, the DTRs problem to be studied in this work is an offline RL problem with finite horizons, and the applications usually contain only a few hundred subjects.}

To address the real-world challenge of treating chronic diseases, in this work, we consider the problem of learning the optimal DTRs in a multistage study, subject to different acute risk constraints at each stage. We develop a general framework, namely benefit-risk DTRs (BR-DTRs), using the finding that under additional acute risk assumption, the stagewise benefit-risk DTRs can be decomposed into a series of single-stage benefit-risk problem only involving the risk restriction of the current stage. Numerically, we propose a backward procedure to estimate the optimal treatment rules: at each stage, we maximize the expected value function under the risk constraint imposed at the current stage, where the solution can be obtained by solving a constrained support vector machine problem. Theoretically, we show that the resulting DTRs are Fisher consistent when some proper surrogate functions are used to replace the objective function and risk constraints. We further derive the non-asymptotic error bounds for the cumulative reward and stagewise risks associated with the estimated DTRs.

Our contributions are two-fold: first, we propose a general framework to estimate the optimal DTRs under the stagewise risk constraints. We note that the proposed framework reduces to the outcome weighted learning for DTRs in \citet{zhao_new_2015} when there is no risk constraint and reduces to the method in \citet{wang_learning_2018} when there is only one stage. When stagewise risk restrictions are imposed, we show that the backward induction technique adopted in \citet{zhao_new_2015} along with the single-stage framework proposed in \citet{wang_learning_2018} can be jointly used to solve the optimal DTRs under the stagewise risk constraints. We note that such extension is nontrivial since the treatment of each stage is entangled with unknown treatments of the previous stage through risk constraints when the backward induction technique is used. Hence, additional theoretical justification is needed to rigorously prove that the problem can be decomposed into a series of constrained optimal treatment regimen problems of the current stage under acute risk assumption. Second, our work establishes the non-asymptotic results for the estimated DTRs for both value and risk functions, and such results have never been given before. In particular, we show that support vector machines still yield Fisher consistent treatment rules under a range of risk constraints. Our theory also shows that the convergence rate of the predicted value function is in the order of the cubic root of the sample size, and the convergence rate for the risk control has the order of the square root of the sample size. 

The remaining paper is organized as follows. In Section 2, we discuss the statistical framework of BR-DTRs and give the complete BR-DTRs algorithm. In Section 3, we provide further theoretical justification for BR-DRTs. We demonstrate the performance of BR-DTRs via simulation studies in Section 4 and apply the method to a real study of T2D patients in Section 5.


\section{Method}
\label{sec:method}

\subsection{DTRs under stagewise risk constraints}
\label{sec:2_1}
Consider a $T$-stage DTRs problem and we use $(Y_1,...,Y_T)$ to denote the beneficial reward and $(R_1,...,R_T)$ to denote the risk outcomes at each stage. We assume that $\{(Y_t,R_t)\}_{t=1}^T$ are bounded random variables and a series of dichotomous treatment options are available at each stage, denoted by $A_t\in\{-1,+1\}$. Let $H_1\subset\cdots\subset H_T$ be the feature variables at stage $t$, which includes the baseline prognostic variables, intermediate outcomes, and any time-dependent covariates information prior to stage $t$. {In this work, we further assume that the data is collected from a sequential multiple assignment randomized trial (SMART) \citep{murphy_experimental_2005} so the treatment assignment probability $\{p(A_t|H_t)\}_{t=1}$ is known for $t=1,...,T$. Extension to observational studies with unknown treatment assignment probability is discussed in Section 3 (e.g., Remark \ref{remark7}).} DTRs are defined as a sequence of functions 
$$\mathcal{D}=\mathcal{D}_1\times\cdots\times\mathcal{D}_T:\mathcal{H}_1\times\cdots\times\mathcal{H}_T\rightarrow \{-1,+1\}^T~~\text{where}~~\mathcal{D}_t:\mathcal{H}_t	\mapsto\{-1,+1\}.$$
The goal of BR-DTRs is to find the optimal rule $\mathcal{D}^*$ that maximizes the cumulative reward at the final stage $T$, while the risk at each stage $t$ is controlled by a pre-specified risk constraint, denoted by $\tau_t$. Mathematically, we aim to solve the following optimization problem 
\begin{equation*}
  \begin{alignedat}{4}
    \max_{\mathcal{D}}&\quad &E^{\mathcal{D}}[&\sum_{t=1}^TY_t]\\
    \text{subject}&\text{ to}\quad &E^{\mathcal{D}}&[R_1]\leq \tau_1, \ldots, E^{\mathcal{D}}&[R_T]\leq \tau_T,
  \end{alignedat}
\end{equation*}
where $E[\cdot]$ {denotes the expectation taken w.r.t. the joint distribution of $\{(A_t,H_t,Y_t,R_t)\}_{t=1}^T$ and $E^{\mathcal{D}}[\cdot]$ denotes the expectation given $A_t=\mathcal{D}_t(H_t)$ for $t=1,..,T$.} 

Additional assumptions are necessary to ensure that the above problem can be solved using the observed data. To this end, we let $\bar{A}_t=(A_1,..,A_t)$ denote the observed treatment history and $\bar{a}_t=(a_1,...,a_t)\in\{-1,+1\}^t$ denote any fixed treatment history up to time $t$, and use $X(\bar a_t)$ to denote the potential outcome of variable $X$ under treatment $\bar a_t$. 

\begin{assumption}{(Stable Unit Treatment Value (SUTV))}
  \label{SUTV}
  At each stage, subjects' outcomes are not influenced by other subjects' treatment allocation, i.e., 
  $$(Y_t,R_t)=(Y_t(\bar{a}_t),R_t(\bar{a}_t))\quad\text{given}\quad \bar{A}_t=\bar{a}_t.$$
\end{assumption}

\begin{assumption}{(No Unmeasured Confounders (NUC))}
  \label{NUC}
  For any $t=1,...,T,$
  $$A_t\independent (H_{t+1}(\bar{a}_t),...,H_{T}(\bar{a}_{T-1}),Y_T(\bar{a}_T),R_T(\bar{a}_T))\big|H_t.$$
\end{assumption}

\begin{assumption}{(Positivity)}
  \label{Positivity}
  For any $t=1,...,T$, there exists universal constants $0<c_1\leq c_2<1$ such that 
  $$c_1\leq p(A_t=1|H_t)\leq c_2\quad\text{for~}H_t\text{ a.s.}$$
\end{assumption}

\begin{assumption}{(Acute Risk)}
  \label{ACUTE}
  For any $t=1,...,T$ and $\bar{a}_t\in\{-1,1\}^t$, $R_t(\bar a_t)$ only depends on $a_t$. {In other words, for potential outcome $R_t(\bar a_t)$ we have $R_t(\bar a_t)=R_t(a_t)$. }
\end{assumption}

\cref{SUTV,NUC,Positivity} are standard causal assumptions for DTRs literature and one could refer to \citet{rubin_bayesian_1978,robins_causal_1997,chakraborty_statistical_2013} for more discussions. In particular, \cref{NUC,Positivity} hold if the data are obtained from a SMART. \cref{ACUTE} captures the acute risk property of chronic diseases. That is, for the same individual, the adverse risk in each stage is caused by his/her most recent treatment. {We note that \cref{ACUTE} does not imply $R_t$ is Markovian and independent of $H_t$. In general, $R_t$ will be a function of $H_t$ for $t=1,..,T$.} As an additional note, we can further assume that $R_t$ is positive and bounded away from zero after shifting both $R_t$ and $\tau_t$ by one same constant without changing the problem of interest.

Under all four additional assumptions and suppose $\mathcal{D}_t(H_t)=\text{sign}(f_t(H_t))$ for some measurable decision function $f_t$, we note that
\begin{equation}
  \label{equation_after_assumption}
  \begin{split}
    E^{\mathcal{D}}[R_t]=E\bigg[&\frac{R_t\prod_{t=1}^T\mathbb{I}(A_tf_t(H_t)>0)}{\prod_{t=1}^Tp(A_t|H_t)}\bigg]=E\bigg[R_t(\textrm{sign}(f_1), ....,\textrm{sign}(f_t))\bigg]\\
    &=E\bigg[R_t(\textrm{sign}(f_t))\bigg]=E\bigg[\frac{R_t \mathbb{I}(A_tf_t(H_t)>0)}{p(A_t|H_t)}\bigg].
  \end{split}
\end{equation}
Then according to \citet{zhao_new_2015}, the original problem can be reformulated as
\begin{equation}
  \label{ow_prob}
  \begin{alignedat}{4}
    \underset{{(f_1,...,f_T)\in\mathcal{F}_1\times\cdots\mathcal{F}_T}}{\max}&E\bigg[\frac{(\sum_{t=1}^TY_t)\prod_{t=1}^T\mathbb{I}(A_tf_t(H_t)>0)}{\prod_{t=1}^Tp(A_t|H_t)}\bigg]&\\
    \text{subject~to~~}&E\bigg[\frac{R_t\mathbb{I}(A_tf_t(H_t)>0)}{p(A_t|H_t)}\bigg]\leq \tau_t,&t=1,...,T,\\
  \end{alignedat}
\end{equation}
where $\mathcal{F}_t$ denotes the set of all real value measurable functions from $\mathcal{H}_t\rightarrow\mathbb{R}$ {and we note that $\{f_t\}_{t=1}^T$ here is identifiable up to a positive scale.} {To solve the problem (\ref{ow_prob}), we borrow the idea of \citet{zhao_new_2015} and introduce the backward induction technique to further decompose the BR-DTRs problem into a series of single-stage single-constraint problems. Let $\{\mathcal{O}_{t}\}_{t=1}^T$ denote the feasible region of the original problem under risk constraints $(\tau_1,...,\tau_T)$ at stage $t$, i.e.,}
$$\mathcal{O}_t=\bigg\{f\in\mathcal{F}_t\bigg|E\bigg[\frac{R_t\mathbb{I}(A_tf(H_t)>0)}{p(A_t|H_t)}\bigg]\leq \tau_t\bigg\},\quad t=1,...,T,$$
and define the $U$-function as
\begin{equation*}
  \begin{split}
    U_t(h_t;g_t,g_{t+1},...,g_T):=E\bigg[\frac{(\sum_{s=t}^TY_s)\prod_{s=t}^T\mathbb{I}(A_sg_s(H_s)>0)}{\prod_{s={t}}^Tp(A_s|H_s)}\bigg|H_t=h_t\bigg],
  \end{split}
\end{equation*}
where we set $U_{T+1}=0$, then we consider the following iterative optimization problems and their associated optimal solution, denoted by $(g_1^*,...,g_T^*)$, defined via
\begin{equation}
  \label{f_zero_one}
  g_{t}^*=\argmax_{f_t\in\mathcal{O}_t}E\bigg[\frac{(Y_t+U_{t+1}(H_{t+1};g_{t+1}^*,...,g_T^*))\mathbb{I}(A_tf_t(H_t)>0)}{p(A_t|H_t)}\bigg].
\end{equation}
{When there is no risk constraint, (\ref{f_zero_one}) will reduce to the standard OWL framework which is guaranteed to yield optimal solutions for the unconstrained problem following the similar idea as the Bellman equation and Q-learning \citep{bellman_dynamic_1966,qian_performance_2011}. However, extending the backward induction technique to risk-constrained DTRs problems is nontrivial, and the backward induction usually does not yield the optimal solutions for the general problem since the estimation of the treatment of each stage is entangled with unknown treatments from previous stages via the risk constraints. As one of our major contributions, our later proof for \cref{fisher_consistency} shows that the backward algorithm (\ref{f_zero_one}) leads to the optimal solutions of the BR-DTRs problem. To the best of our knowledge, our work is the first to provide the necessary conditions for the optimality of the implementation of the backward induction for stagewise risk-constrained DTRs problems.}

\begin{remark}
 We note that the choice of decision functions $\{f_t\}_{t=1}^T$ has no restriction and can be chosen from any function class during the estimation such as tree-based models, neural networks, or functions from a reproducing kernel Hilbert space (the last is studied in our work). Also, the definition of $\mathcal{A}_t=\{-1,+1\}$ is only a generic notation, which can depend on $H_{t-1}$ and refer to different treatments in different stages. As an extreme case, $\mathcal{A}_t$ can degenerate to a single treatment in certain stages. In this case, our method can still be applied by restricting the estimation to remaining patients who can receive alternative treatments. This allows us to extend our method to more complicated applications as shown in \cref{sec:application}.
\end{remark}

\subsection{Surrogate loss and Fisher consistency}
\label{sec:2_2}
One main difficulty of implementing framework (\ref{f_zero_one}) is the existence of the indicator functions in both the objective function and risk constraints, which makes solving the original problem NP-hard. Following the idea in \citet{wang_learning_2018}, we propose the following surrogate functions to replace both indicator functions: let $\phi(\cdot)$ denote the hinge loss function defined as $\phi(x)=(1-x)_{+}$ and $\psi(\cdot,\eta)$ denote the shifted ramp loss function given by 
\begin{equation}
  \notag
  \psi(x,\eta)=\begin{cases}
    1,&\text{if}~x\geq0\\
    \frac{x+\eta}{\eta},&\text{if}~x\in(-\eta,0)\\
        0,&\text{if}~x\leq-\eta,
  \end{cases}
\end{equation}
where $\eta\in(0,1]$ is a prespecified shifting parameter that can vary with stage. We then consider the following surrogate problem, namely the BR-DTRs problem,
\begin{equation}
  \label{f_surrogate}
  f_{t}^*=\argmin_{f_t\in\mathcal{A}_{t}}E\bigg[\frac{(Y_t+U_{t+1}(H_{t+1};f_{t+1}^*,...,f_{T}^*))\phi(A_tf_t(H_t))}{p(A_t|H_t)}\bigg],
\end{equation}
where
$$\mathcal{A}_{t}=\bigg\{f\in\mathcal{F}_t\bigg|E\bigg[\frac{R_t\psi(A_tf(H_t),\eta_t)}{p(A_t|H_t)}\bigg]\leq \tau_t\bigg\}\quad t=1,...,T.$$
Equivalently, we replace the 0-1 loss function in the objective function with the hinge loss and replace the indicator function in the risk constraint with the shifted ramp loss function. {The hinge loss function is a typical choice of the surrogate for 0-1 loss in classification problems such as SVM. The shifted ramp loss can be viewed as a smooth and conservative approximation of the indicator function in the risk constraint function when $\eta_t$ is small, which will converge to the true constraint as $\eta_{t}$ goes to 0.} {As a note,  applying BR-DTRs does not require the reward and risk variables to be on the same scale since the solution is the same after rescaling $Y_t$ and $R_t$ (so $\tau_t$ is rescaled too).}

{For a constrained optimization problem under the 0-1 loss, we say that a surrogate problem is Fisher consistent if the solution to the surrogate problem also solves the original problem under the 0-1 loss. This definition is consistent with the traditional Fisher consistency definition of the unconstrained problem.} Our next result shows that the new surrogate problem leads to the DTRs that are Fisher consistent. Before stating the theorem, we define a $t$-stage pseudo-outcome $Q_t$ as
$${Q}_t=Y_t+U_{t+1}(H_{t+1};g_{t+1}^*,...,g_{T}^*),$$
which is the cumulative reward from stage $t$ to $T$ assuming that all treatments have been optimized from stage $t+1$ to $T$. {Given $(Q_t,R_t,A_t,H_t)$ and $a=\pm1$, we introduce following notations:}
{
\begin{equation*}
  \begin{split}
    m_{Q_t}(h,a)=E[Q_t|H_t=h,A_t=a],&\quad\quad \delta_{Q_t}(h)=m_{Q_t}(h,1)-m_{Q_t}(h,-1), \\
    m_{R_t}(h,a)=E[R_t|H_t=h,A_t=a],&\quad\quad \delta_{R_t}(h)=m_{R_t}(h,1)-m_{R_t}(h,-1).
  \end{split}
\end{equation*}}
Let 
$$\tau_{t,\min}=E\bigg[R_t\frac{\mathbb{I}(A_t\delta_{R_t}(H_t)<0)}{p(A_t|H_t)}\bigg],$$
$$\tau_{t,\max}=E\bigg[R_t\frac{\mathbb{I}(A_t\delta_{Q_t}(H_t)>0)}{p(A_t|H_t)}\bigg].$$
In other words, $\tau_{t,\min}$ is the risk under the decision function given by $-\delta_{R_t}(H_t)$, which is the one minimizing the risk regardless of the reward outcome. Thus, $\tau_{t,\min}$ is the minimum risk that one can possibly achieve at stage $t$. While $\tau_{t,\max}$ is the risk for the decision function given by $\delta_{Q_t}(H_t)$, which is the one maximizing the reward regardless of the risk. Thus, $\tau_{t,\max}$ is the maximal risk.

\begin{theorem}
  \label[theorem]{fisher_consistency}
  For $t=1,..,T$ and any fixed $\tau_{t,\min}<\tau_t<\tau_{t,\max}$, suppose that $P(\delta_{Q_t}(H_t)\delta_{R_t}(H_t)=0)=0$ and random variable
  $\delta_{Q_t}(H_t)/\delta_{R_t}(H_t)$ has a distribution function with a continuous density function in the support of $H_t$. Then for any $\eta_t\in(0,1]$ and $t=1,...,T$, we have $\text{sign}(f_{t}^*)=\text{sign}(g^*_t)$ almost surely, and $(f_1^*, ..., f_T^*)$ solves the optimization problem in (\ref{ow_prob}).
\end{theorem}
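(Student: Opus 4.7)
My plan is to prove Theorem~\ref{fisher_consistency} by backward induction on $t$ from $T$ down to $1$, reducing each inductive step to a single-stage Fisher-consistency statement about the surrogate constrained problem. At the base $t = T$, $U_{T+1} \equiv 0$ so $Q_T = Y_T$ and the surrogate problem for $f_T^*$ is already single-stage. For $t < T$, I assume inductively that $\text{sign}(f_j^*) = \text{sign}(g_j^*)$ almost surely for all $j > t$. Because $U_{t+1}$ depends on its future-stage arguments only through indicators of the form $\mathbb{I}(A_j g_j(H_j) > 0)$, which are determined solely by the signs of the $g_j$'s, the hypothesis yields $U_{t+1}(H_{t+1}; f_{t+1}^*, \ldots, f_T^*) = U_{t+1}(H_{t+1}; g_{t+1}^*, \ldots, g_T^*)$ almost surely. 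Hence the pseudo-outcome $Q_t$ appearing in (\ref{f_zero_one}) and in (\ref{f_surrogate}) is the same random variable, and it suffices to establish the following single-stage claim: for bounded nonnegative $Q$ and positive $R$ and any $\tau \in (\tau_{t,\min},\tau_{t,\max})$, the minimizer $f^*$ of $E[Q\,\phi(Af)/p(A\mid H)]$ subject to $E[R\,\psi(Af,\eta)/p(A\mid H)] \le \tau$ has the same sign almost surely as the 0-1 Bayes rule $g^*$.

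I prove this single-stage claim through a Lagrangian argument. For the 0-1 problem, a direct pointwise calculation gives $\text{sign}(g^*(h)) = \text{sign}(\delta_Q(h) - \lambda_0^*\,\delta_R(h))$ for some $\lambda_0^* \ge 0$ making the 0-1 risk constraint tight; existence of $\lambda_0^*$ uses the assumption $\tau \in (\tau_{t,\min},\tau_{t,\max})$, and the continuity of $\delta_Q/\delta_R$ guarantees that the indifference set $\{\delta_Q = \lambda_0^*\delta_R\}$ is $P$-null, so $g^*$ is well-defined a.s. For the surrogate, I let $f^{\lambda}(h)$ be the pointwise minimizer of the conditional Lagrangian
\[L_h(v;\lambda) = m_Q(h,1)\phi(v) + m_Q(h,-1)\phi(-v) + \lambda\bigl[m_R(h,1)\psi(v,\eta) + m_R(h,-1)\psi(-v,\eta)\bigr],\]
and observe that as $\lambda$ grows from $0$ to $\infty$, the induced surrogate risk $E[R\,\psi(Af^\lambda,\eta)/p(A\mid H)]$ moves continuously from $\tau_{t,\max}$ (at $\lambda = 0$, where $v^*(h)$ follows the unconstrained reward maximizer) down to $\tau_{t,\min}$ (as $\lambda \to \infty$, where $v^*(h)$ follows the risk minimizer). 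An intermediate-value argument then produces $\lambda^* \ge 0$ at which the surrogate constraint holds with equality, so that $f^{\lambda^*}$ is the constrained surrogate minimizer.

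The heart of the argument is to verify the pointwise sign-matching identity $\text{sign}(f^{\lambda^*}(h)) = \text{sign}(\delta_Q(h) - \lambda^*\delta_R(h))$ a.s. The function $L_h(\cdot;\lambda^*)$ is piecewise linear with breakpoints contained in $\{-1,-\eta,0,\eta,1\}$, so its minimizer lies in that set. I perform a case analysis on the signs of $\delta_Q(h)$ and $\delta_R(h)$: in each case I compute the slope of $L_h$ on every subinterval, identify the candidate minimizers among $\{\pm 1,\pm\eta\}$, and compare the corresponding function values to show that the optimal $v^*(h)$ has the same sign as $\delta_Q(h) - \lambda^*\delta_R(h)$; the continuity of $\delta_Q/\delta_R$ discards the measure-zero set where this quantity vanishes. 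A final matching step identifies $\lambda^*$ and $\lambda_0^*$ in the sense that both induce the same a.s.\ sign partition of $H_t$, which closes the induction.

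The main technical difficulty is precisely this pointwise analysis of $L_h(\cdot;\lambda^*)$: the hinge-loss breakpoints $\pm 1$ interleave with the ramp-loss breakpoints $0,\pm\eta$, and on the inner interval $(-\eta,\eta)$ the risk-term slope contribution $\lambda^* m_R/\eta$ can easily dominate or be dominated by the reward-term slope, which forces a careful case-by-case verification that the minimizer always lands on the correct side of zero. Once this pointwise result is established, the assertion that $(f_1^*, \ldots, f_T^*)$ solves problem (\ref{ow_prob}) follows from the backward recursion (\ref{f_zero_one}): each $g_t^*$ is characterized by its sign pattern, and by the induction $f_t^*$ shares this sign pattern almost surely, so $(f_1^*,\ldots,f_T^*)$ induces the same expected value and stagewise risks as $(g_1^*,\ldots,g_T^*)$.
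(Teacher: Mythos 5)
Your treatment of the sign-consistency part is essentially the paper's own argument: a pointwise Lagrangian analysis of the piecewise-linear conditional objective with candidate minimizers in $\{-1,-\eta,0,\eta,1\}$, an intermediate-value argument (using the continuity of $\delta_{Q}/\delta_{R}$) to produce a multiplier $\lambda^*$ at which the surrogate constraint is tight, and weak duality to conclude optimality of the resulting $f^*$. That matches Section S.1 of the supplement (Theorem S.1 and its Steps 1--4), and your reduction of stage $t$ to a single-stage problem via the observation that $U_{t+1}$ depends on future rules only through their signs is also how the paper proceeds.

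There is, however, a genuine gap in your final paragraph. You assert that ``$(f_1^*,\ldots,f_T^*)$ solves problem (\ref{ow_prob})'' \emph{follows from} the backward recursion (\ref{f_zero_one}), i.e., you take for granted that the backward-defined sequence $(g_1^*,\ldots,g_T^*)$ is optimal for the joint constrained problem. This is precisely the claim the paper flags as requiring proof (``our later proof for Theorem 1 will show that such a backward algorithm leads to the optimal DTRs solving problem (\ref{ow_prob})''), and it is not automatic: for arbitrary fixed $f_1,\ldots,f_{T-1}$, the stage-$T$ rule maximizing the joint value re-weights the $H_T$-distribution by $\prod_{t<T}\mathbb{I}(A_tf_t(H_t)>0)/p(A_t|H_t)$, so a priori the constrained stage-$T$ optimizer (and in particular its Lagrange multiplier) could depend on the earlier-stage rules, in which case dynamic programming would fail. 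The paper's Section S.2 closes this by showing that the re-weighted problem has optimizer $\widetilde g_T^*$ of the same functional form as $g_T^*$ and, crucially, that the Acute Risk assumption (Assumption 4) makes the stage-$T$ risk constraint invariant to the earlier-stage indicators, so the multiplier $\widetilde\lambda$ can be taken equal to $\lambda^*$ and $\widetilde g_T^*=g_T^*$ for \emph{every} choice of $f_1,\ldots,f_{T-1}$; iterating this interchange backward is what establishes joint optimality. Your proposal never invokes Assumption 4 and never performs this uniformity-in-$(f_1,\ldots,f_{T-1})$ step, so the second assertion of the theorem is left unproven. (A smaller omission: your single-stage Lagrangian argument should also record the preliminary reductions --- truncation of $f$ to $[-1,1]$, the behavior on the set $\mathcal{M}$ where reward and risk agree, and tightness of the constraint at the optimum --- which the paper needs to pin down the explicit form of $f^*$ before the case analysis.)
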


\begin{remark}
  {\cref{ACUTE} is a key condition for obtaining \cref{fisher_consistency}, which ensures that the original multistage problem can be decomposed into a finite number of single-stage single-constraint subproblems each w.r.t. to the decision function of the current stage. Without this assumption, the solution from each stagewise problem may not necessarily control the risk and the induced risk can be either higher or lower than the risk constraint depending on the relationship between $R_t(\bar{a}_t)$ and $R_t(a_t)$.}
\end{remark}

\noindent
When $\tau_t\geq\tau_{t,\max}$, the BR-DTRs problem is reduced to a standard DTRs problem and \citet{zhao_new_2015} shows that the Fisher consistency holds without additional conditions. For $T=1$, the conditions are similar to \citet{wang_learning_2018}, but they assume $H_t$ to have a continuous distribution. \cref{fisher_consistency} basically indicates that when the risk constraints are feasible and assume that the reward difference between two treatments varies continuously with respect to the risk difference, using the surrogate loss leads to the true optimal DTRs for any shifting parameter $\eta_t\in (0, 1]$. {The proof of \cref{fisher_consistency} can be completed by first showing that the surrogate problem (\ref{f_surrogate}) yields Fisher consistent rule for $T=1$ and then proving that the backward induction algorithm (\ref{f_zero_one}) yields the optimal solution under \cref{ACUTE}. Our proof follows the same sketch where the consistency for $T=1$ is established in Section \ref{A:1} and the optimality of the backward induction is established in Section \ref{A:2}. We note that both results are nontrivial and have never been established in the existing literature.} The complete proof is presented in Section \ref{A} in Appendix A.
{
  \begin{remark}
    The key step to proving \cref{fisher_consistency} is to derive a closed-form solution to the surrogate problem for $T=1$. There are three main challenges. First, we consider the Lagrange function of the surrogate problem and obtain its closed-form solution. Second, we show that the optimal solution of the surrogate problem attains some decision boundary, and this is proved using contradiction and careful construction. Third, we show that there exists a Lagrange multiplier yielding the optimal solution to the surrogate problem. {The last step entails the continuous density assumption of $\delta_{Q_t}(H_t)/\delta_{R_t}(H_t)$, which can be implied by the continuity of $Q_t$ and $R_t$ functions.} 
  \end{remark}
}

\subsection{Estimating BR-DTRs using empirical data}
\label{sec:2_3}
Given data $\{(H_{i1}, A_{i1}, Y_{i1},R_{i1},...,H_{iT},A_{iT},Y_{iT},R_{iT})\}_{i=1}^n$ from $n$ i.i.d. patients, we propose to solve the empirical version of the surrogate problem to estimate the optimal DTRs: let
\begin{equation*}
  \mathcal{A}_{t,n}=\bigg\{f\in\mathcal{G}_t\bigg|\frac{1}{n}\sum_{i=1}^n\frac{R_{it}\psi(A_{it}f(H_{it}),\eta_t)}{p(A_{it}|H_{it})}\leq \tau_t\bigg\},
\end{equation*}
then we solve
\begin{equation}
  \label{true_problem}
  \widehat{f}_{t}=\argmin_{f\in\mathcal{A}_{t,n}}\frac{1}{n}\sum_{i=1}^n\frac{(\sum_{s=t}^TY_{is})\prod_{s=t+1}^T\mathbb{I}(A_{is}\widehat{f}_{s}(H_{is})>0)}{\prod_{s=t}^Tp(A_{is}|H_{is})}\phi(A_{it}f(H_{it}))+\lambda_{n,t}\|f\|^2_{\mathcal{G}_t}
\end{equation}
for $t=T,...,1$ in turn. Here, $\|\cdot\|_{\mathcal{G}_t}$ denotes the functional norm associated with functional space $\mathcal{G}_t$. The last term $\lambda_{n,t}\|f\|^2_{\mathcal{G}_t}$ is a typical choice of penalty term which regularizes the complexity of the estimated optimal decision function to avoid overfitting. Common choices of $\mathcal{G}_t$ include Reproducing Kernel Hilbert Space (RKHS) under a linear kernel where $k(h_i,h_j)=h_i^Th_j$, or a Gaussian radial basis kernel with $k(h_i,h_j)=\exp(-\sigma^2\|h_i-h_j\|^2)$, where $\sigma$ denotes the inverse of the bandwidth. 

{A major disadvantage of implementing (\ref{true_problem}) directly is that subjects whose future stages' observed treatment do not follow the estimated optimal treatments will be assigned with zero weights, which eliminates their contributions to the estimation of early stages and leads to a considerable loss of sample size as the estimation continues. To overcome this limitation, in this work, we adopt the augmentation technique to further improve the efficiency and stability of the estimation procedure. The augmentation technique was first proposed by \citet{liu_augmented_2018} to improve the performance of OWL where the basic idea is to predict the expected reward for subjects' whose future observed treatments are not optimal. Specifically, we replace the weights in the objective function and treatment variable by}
\begin{equation}
  \label{new_y_a}
  \widehat{Y}_{it}=|Y_{it}+\widehat{Q}_{i,t+1}-\widehat{\mu}_t(H_{it})|,\quad\widehat{A}_{it}=A_{it}*\text{sign}(Y_{it}+\widehat{Q}_{i,t+1}-\widehat{\mu}_t(H_{it})).
\end{equation}
Here, $\widehat{Q}_{i,t+1}$ is the augmented $Q$-function defined as 
\begin{equation}
  \label{equation_algorithm_2}
  \begin{split}
  \widehat{Q}_{i,t+1}=&\frac{(\sum_{s=t+1}^TY_{is})\prod_{s=t+1}^T\mathbb{I}(A_{is}\widehat{f}_{s}(H_{is})>0)}{\prod_{s=t+1}^Tp(A_{is}|H_{is})}\\
  &-\sum_{j=t+1}^T\bigg\{\frac{\prod_{s=t+1}^{j-1}\mathbb{I}(A_{is}\widehat{f}_{s}(H_{is})>0)}{\prod_{s=t+1}^{j-1}p(A_{is}|H_{is})}\bigg[\frac{\mathbb{I}(A_{ij}\widehat{f}_{j}(H_{ij})>0)}{p(A_{ij}|H_{ij})}-1\bigg]\widehat{\mu}_{t+1,j}(H_{ij})\bigg\},
  \end{split}
\end{equation}
and let $\widehat{Q}_{i,T+1}=0$. {In expressions (\ref{new_y_a}) and (\ref{equation_algorithm_2}), $\{\mu_t\}$ and $\{\mu_{t,j}\}$ are fixed nuisance functions that need to be provided in advance. Intuitively, $\{\mu_{t,j}\}$ in the augmented Q-functions are contributions to the loss function for patients whose received treatments are not optimal, and $\{\mu_t\}$ in (\ref{new_y_a}) are introduced to remove the main effect which could further reduce the weight variability without affecting the treatment rule estimation. When constructing the final weight, we flip the sign for both the weight and observed treatment for patients who have negative weights to ensure that all weights are nonnegative, which will lead to the same objective function up to a constant and thus will not affect the estimation. Due to the doubly robust design in the construction of the augmentation terms as shown in \citet{liu_augmented_2018}, both $\{\mu_t\}$ and $\{\mu_{t,j}\}$ are allowed to be misspecified and the estimated DTRs will remain to be optimal asymptotically, but a more accurate prediction can potentially lead to more reliable estimation. In practice, $\{\mu_{t,j}\}$ and $\{\mu_t\}$ usually need to be estimated from observed data. For simplicity, we propose to use the simple least square estimator and minimizing $\sum_{i=1}^n(Y_{it}+\widehat{Q}_{i,t+1}-{\mu}_t(H_{it}))^2$ to estimate $\{\mu_t\}$, and estimate $\{\mu_{t,j}\}$ via solving the weighted least square}
\begin{equation}
  \label{equation_algorithm_1}
  \frac{1}{n}\sum_{i=1}^n\frac{\prod_{s=t}^T\mathbb{I}(A_{is}\widehat{f}_{s}(H_{is})>0)}{\prod_{s=t}^Tp(A_{is}|H_{is})}\frac{1-p(A_{ij}|H_{ij})}{\prod_{s=t}^jp(A_{is}|H_{is})}(\sum_{s=t}^TY_{is}-\mu_{t,j}(H_{ij}))^2
\end{equation}
{following \citet{liu_augmented_2018}. By constructing $\widehat Q_{i,t}$ and replacing the original weight by $\widehat{Y}_{it}$, the refined procedure can utilize the information from all subjects to estimate the optimal rules across all stages, which will lead to more efficient estimation for DTRs compared with (\ref{true_problem}).}

\begin{algorithm}[t]
  \footnotesize
  \caption{BR-DTRs via Backward Induction}
  \label{algorithm:1}
  \begin{algorithmic}
    \State \textbf{Input:} Given training data $(Y_{it},R_{it},A_{it},H_{it})$ and $(\lambda_{t},\mathcal{G}_{t},\tau_t,\eta_t)$ for $i=1,...,n$ and $t=1,...,T$
    \For{$t=T$ to $1$}
    \For{$j=t+1$ to $T$}
    \State obtain estimator $\widehat{\mu}_{t,j}$ via minimizing (\ref{equation_algorithm_1})
    \EndFor
    \If{$t=T$} define $\widehat{Q}_{i,T+1}=0$
    \Else~compute $\widehat{Q}_{i,t+1}$ from (\ref{equation_algorithm_2})
    \EndIf
    \State compute $\widehat{\mu}_{t}$ via least square estimator and obtain $\{(\widehat{Y}_{it},\widehat{A}_{it})\}_{i=1}^n$ via (\ref{new_y_a})
    \State obtain $\widehat{f}_{t}$ by solving 
    \begin{equation*}
      \begin{alignedat}{2}
        \underset{f\in\mathcal{G}_t}{\min}\quad&\frac{1}{n}\sum_{i=1}^n\frac{\widehat{Y}_{it}}{p(A_{it}|H_{it})}\phi(\widehat{A}_{it}f(H_{it}))+\lambda_{n,t}\|f\|^2_{\mathcal{G}_t}\\
        \text{subject~to}\quad&\frac{1}{n}\sum_{i=1}^n\frac{R_{it}}{p(A_{it}|H_{it})}\psi(A_{it}f(H_{it}),\eta_t)\leq \tau_t\\
      \end{alignedat}
    \end{equation*}
    \State using DC algorithm
    \EndFor
    \State\textbf{Output:} $(\widehat{f}_{1},...,\widehat{f}_{T})$
  \end{algorithmic}
\end{algorithm}

{Hence, we propose a backward procedure to estimate the optimal DTRs based on the refined problem. First, we solve a single-stage problem using data at stage $t=T$, and then in turn, for $t=T-1,...,1$, we solve the constrained optimization problem (\ref{true_problem}) after plugging in $(\widehat{f}_{t+1},...,\widehat{f}_T)$ into (\ref{new_y_a}) and (\ref{equation_algorithm_2}) and replacing the weight with $\widehat{Y}_{it}$. The pseudocode of our final proposed algorithm is presented in \cref{algorithm:1}. Finally, since the objective function and the risk constraint in (\ref{true_problem}) can be both written as the difference between two convex functions, for the optimization at each stage we can apply the difference of convex functions (DC) algorithm \citep{tao_convex_1997} to iteratively solve the subproblem. In each iteration, the subproblem can be further reduced to a standard quadratic programming problem. {Details of the derivation and the implementation of the DC algorithm are presented in Appendix B.}}


\section{Theoretical Properties}
\label{sec:theory}
In this section, we establish the non-asymptotic error rate of the value function and stagewise risks under the estimated decision functions $(\widehat{f}_{1},...,\widehat{f}_{T})$. More specifically, for any arbitrary decision functions $(g_1,..., g_T)$, the value function of $(g_1,...,g_T)$ is defined as
$$\mathcal{V}(g_1,...,g_T)=E\bigg[\frac{(\sum_{t=1}^TY_t)\prod_{t=1}^T\mathbb{I}(A_tg_t(H_t)>0)}{\prod_{t=1}^Tp(A_t|H_t)}\bigg].$$
We aim at obtaining the non-asymptotic bound for the regret function given by 
$$
\mathcal{V}(g_{1}^*,...,g_{T}^*)-\mathcal{V}(\widehat{f}_{1},...,\widehat{f}_{T})$$ 
and the stagewise risk difference is given by
$$E\bigg[\frac{R_t\mathbb{I}(A_t\widehat{f}_{t}(H_t)>0)}{p(A_t|H_t)}\bigg]-\tau_t,$$
for $t=1,...,T$.

We assume that $\{\mathcal{G}_t\}_{t=1}^T$ are the RKHS generated by the Gaussian radial basis kernel, i.e. $\mathcal{G}_t:=\mathcal{G}(\sigma_{n,t})$, where $\mathcal{G}(\sigma)$ denotes the Gaussian RKHS associated with bandwidth $\sigma^{-1}$. 
Furthermore, for random variable $Q_t, R_t, A_t$ and $H_t$, we define for $a,b\in \{-1,1\}$,
$${H_{a,b,t,\tau}=\bigg\{h\in\mathcal{H}_t: a\delta_{Q_t}(h)>0,bf_{t,\tau}^*(h)>0\bigg\}}$$
 and
{$\Delta_{t,\tau}(h)=\sum_{a,b\in \{-1,1\}}
    \text{dist}(h,\mathcal{H}_t/H_{a,b,t,\tau})\mathbb{I}(h\in H_{a,b,t,\tau}),$}
where {$\mathcal{H}_t/H_{a,b,t,\tau}$ denotes the set difference between $\mathcal{H}_t$ and $H_{a,b,t,\tau}$, $\text{dist}(h,\cdot)$ denotes the Euclidean distance} from point $h$ to a set, and $f^*_{t,\tau}$ denotes optimal solution of (\ref{f_surrogate}) at stage $t$ but replace the risk constraint in $\mathcal{A}_t$ by $\tau$. Note that \cref{fisher_consistency} implies
$$Q_t=Y_t+U_{t+1}(H_{t+1};g_{t+1}^*,...,g_{T}^*)=Y_t+U_{t+1}(H_{t+1};f_{t+1}^*,...,f_{T}^*).$$
We assume
\begin{assumption}
  \label{GNE}
  Let $P_t$ denote the distribution of $H_t$. For given $(\tau_1,...,\tau_T)$ and any $t=1,..,T$, there exist universal positive constants $\delta_{0,t}>0$, $K_t>0$ and $\alpha_t>0$ such that for any $\tau'\in[\tau_t-2\delta_{0,t},\tau_t+2\delta_{0,t}]\subset(\tau_{t,\min},\tau_{t,\max})$ we have
  $$\int_{\mathcal{H}_t}\exp\bigg(-\frac{\Delta_{t,\tau'}(h)^2}{s}\bigg)P_t(dh)\leq K_t s^{\alpha_td_t/2}$$
  holds for any $s>0$.
\end{assumption}
\cref{GNE} is an extension of the Geometric Noise Exponent (GNE) assumption proposed by \citet{steinwart_fast_2007} to establish a fast convergence risk bound for standard SVM, and later adopted by \citet{zhao_estimating_2012} to derive the risk bound for the DTRs without risk constraints. The GNE assumption can be viewed as a regularization condition of the behavior of samples near the true optimal decision boundary. {We note that GNE assumption is implied by Tsybakov's noisy assumption \citep{audibert_fast_2007}, thus weaker than Tsybakov's noisy assumption \citep[see Theorem 2.6 of][]{steinwart_fast_2007}. } 

For a fixed $\tau_t$, $\alpha_t$ can be taken to 1 when $\Delta_{t,\tau}(h)$ has order less or equal to $O(h)$. When the optimal decision boundary is strictly separated, i.e. $\text{dist}(H_{a,b,t,\tau},H_{a',b',t,\tau})>0$ for any $a\neq a'$ and $b\neq b'$, by using the fact that $\exp(-t)\leq C_{s}t^{-s}$ one can check that \cref{GNE} holds for $\alpha_t=\infty$. When the optimal decision boundary is not strictly separated, it can be shown that \cref{GNE} can still hold for arbitrary $\alpha_t\in(0,\infty)$ when the marginal distribution of $H_t$ has light density near the optimal decision boundary (see Example 2.4 in \citet{steinwart_fast_2007}).

The following theorem gives the non-asymptotic error bound for the regret and risk difference for the estimated DTRs, assuming that $\{\mu_t\}$ and $\{\mu_{t,j}\}$ in the augmentation are known. The theorem allows stage-wise shifting parameters to vary with sample size, denoted by $(\eta_{n,1},...,\eta_{n,T})$.

\begin{theorem}
  \label[theorem]{risk_bound}
  Suppose that \cref{SUTV,NUC,Positivity,ACUTE,GNE} and conditions in \cref{fisher_consistency} hold, $H_t$ is defined on a compact set $\mathcal{H}_t\subset\mathbb{R}^{d_t}$ for $t=1,...,T$, and {assume that $\{\mu_t\}$ and $\{\mu_{t,j}\}$ are known functions.} Let $\{\nu_t\}_{t=1}^T$ and $\{\theta_t\}_{t=1}^T$ be two series of positive constants such that $0<\nu_t<2$ and $\theta_t>0$ for all $t=1,...,T$. Then for any $n\geq1$, $\delta_t>0$, $\lambda_{n,t}>0$, $\sigma_{n,t}>0$ and $0<\eta_{n,t}\leq1$, such that $\lambda_{n,t}\rightarrow0, \sigma_{n,t}\rightarrow \infty$ and that there exist constants $C_1,C_2,C_3$ satisfying
  $$C_1\sigma_{n,t}^{-\alpha_td_t}\eta^{-1}_{n,t}\leq\delta_{0,t},\ \
  C_2n^{-1}\sigma_{n,t}^{(1-\nu_t/2)(1+\theta_t)d_t}\leq1,$$
  and
  $\delta_{t}+C_1\sigma_{n,t}^{-\alpha_t,d_t}\eta_{n,t}^{-1}+C_3n^{-1/2}\sigma_{n,t}^{(1-\nu_t/2)(1+\theta_t)d_t/2}\big(\frac{M}{c_1\lambda_{n,t}}+\sigma_{n,t}^{d_t}\big)^{\nu_t/4}\eta_{n,t}^{-\nu_t/2}\leq2\delta_{0,t}$, it holds
  \begin{equation*}
    \begin{split}
      |\mathcal{V}(\widehat{f}_{1},...,\widehat{f}_{T})-\mathcal{V}(g_1^*,...,g_T^*)|\leq \sum_{t=1}^T(c_1/5)^{1-t}C_t\bigg(&n^{-1/2}\lambda_{n,t}^{-1/2}\sigma_{n,t}^{(1-\nu_t/2)(1+\theta_t)d_t/2}\\
      &+\lambda_{n,t}\sigma_{n,t}^{d_t}+\sigma_{n,t}^{-\alpha_td_t}\eta_{n,t}^{-1}+\eta_{n,t}+\delta_t\bigg)
    \end{split}
  \end{equation*}
  with probability of at least $1-\sum_{t=1}^Th_t(n,\sigma_{n,t})$, where
  $$h_t(n,\sigma_{n,t})=2\exp\bigg(-\frac{2n\delta_{0,t}^2c_1^2}{M^2}\bigg)+2\exp\bigg(-\frac{n\delta_{t}^2c_1^2}{2M^2}\bigg)+\exp\big(-\sigma_{n,t}^{(1-\nu_t/2)(1+\theta_t)d_t}\big).$$
  Moreover, with probability at least $1-h_t(n,\sigma_{n,t})$, the risk induced by $\widehat{f}_{t}$ satisfies
  \begin{equation*}
    \begin{split}
      E\bigg[\frac{R_t\mathbb{I}(A_t\widehat{f}_{t}(H_t)>0)}{p(A_t|H_t)}\bigg]\leq \tau_t+\delta_{t}+C_tn^{-1/2}\sigma_{n,t}^{(1-\nu_t/2)(1+\theta_t)d_t/2}\lambda_{n,t}^{-\nu_{t}/4}\eta_{n,t}^{-\nu/2}.
    \end{split}
  \end{equation*}
  Here, $C_t$ denotes some constant only depending on $\alpha_t, K_t$, $d_t$, $\nu_t$, $\theta_t$, $c_1$ and $M$.
\end{theorem}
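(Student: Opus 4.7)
The plan is to proceed by backward induction on the stage index $t=T,T-1,\dots,1$, at each stage combining a Steinwart-type approximation bound for Gaussian RKHS with a concentration argument for the empirical constrained SVM, and then propagating the value regret forward through the remaining stages. Fix $t$ and, conditionally on $(\widehat f_{t+1},\ldots,\widehat f_T)$, introduce the expected surrogate objective
\begin{equation*}
J_t(f)=E\!\left[\frac{\{Y_t+U_{t+1}(H_{t+1};\widehat f_{t+1},\ldots,\widehat f_T)\}\,\phi(A_tf(H_t))}{p(A_t|H_t)}\right]
\end{equation*}
together with its empirical counterpart $J_{n,t}$, and the analogous pair of true/empirical risks $\mathcal{R}_t(f),\mathcal{R}_{n,t}(f)$ obtained by substituting the shifted ramp $\psi(\cdot,\eta_{n,t})$ for the indicator. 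The first step is to produce a near-optimal comparator $\tilde f_t\in\mathcal{G}(\sigma_{n,t})$. Using the GNE Assumption~\ref{GNE} in the same manner as Theorem~2.7 of Steinwart--Scovel (2007), I can construct $\tilde f_t$ with hinge excess risk $J_t(\tilde f_t)-\inf_f J_t(f)\lesssim \sigma_{n,t}^{-\alpha_td_t}\eta_{n,t}^{-1}+\eta_{n,t}$ and RKHS norm $\|\tilde f_t\|_{\mathcal{G}_t}^2\lesssim \sigma_{n,t}^{d_t}$, the extra $\eta_{n,t}^{-1}+\eta_{n,t}$ arising from the Lipschitz constant $1/\eta_{n,t}$ of the ramp loss together with its one-sided bias $\eta_{n,t}$. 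This same construction can be chosen so that $\mathcal{R}_t(\tilde f_t)\le \tau_t-C_1\sigma_{n,t}^{-\alpha_td_t}\eta_{n,t}^{-1}$, giving a safety margin in the constraint.

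For the second step I bound the empirical process $\sup_{\|f\|_{\mathcal{G}_t}\le B}|J_{n,t}(f)-J_t(f)|$ and $\sup_{\|f\|_{\mathcal{G}_t}\le B}|\mathcal{R}_{n,t}(f)-\mathcal{R}_t(f)|$ over balls of radius $B^2=O(M/(c_1\lambda_{n,t})+\sigma_{n,t}^{d_t})$, using the entropy estimate for Gaussian RKHS balls from Steinwart--Scovel together with a Bernstein/Talagrand-type concentration. The free parameters $\nu_t\in(0,2)$ and $\theta_t>0$ enter through this entropy bound and produce the rate $n^{-1/2}\sigma_{n,t}^{(1-\nu_t/2)(1+\theta_t)d_t/2}B^{\nu_t/2}$; the probability of failure is $\exp(-\sigma_{n,t}^{(1-\nu_t/2)(1+\theta_t)d_t})$, matching the third term in $h_t(n,\sigma_{n,t})$. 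Under the stated conditions $C_1\sigma_{n,t}^{-\alpha_td_t}\eta_{n,t}^{-1}\le\delta_{0,t}$ and the analogous upper bound involving $\delta_t$, on a good event of probability at least $1-h_t(n,\sigma_{n,t})$ the comparator $\tilde f_t$ is feasible in the empirical problem, so the ERM optimality of $\widehat f_t$ yields $J_{n,t}(\widehat f_t)+\lambda_{n,t}\|\widehat f_t\|^2_{\mathcal{G}_t}\le J_{n,t}(\tilde f_t)+\lambda_{n,t}\|\tilde f_t\|^2_{\mathcal{G}_t}$. Combining the two empirical-process bounds then upgrades this to the hinge excess risk bound $J_t(\widehat f_t)-J_t(f_t^*)\lesssim n^{-1/2}\lambda_{n,t}^{-1/2}\sigma_{n,t}^{(1-\nu_t/2)(1+\theta_t)d_t/2}+\lambda_{n,t}\sigma_{n,t}^{d_t}+\sigma_{n,t}^{-\alpha_td_t}\eta_{n,t}^{-1}+\eta_{n,t}$, and symmetrically yields the stagewise risk bound $\mathcal{R}_t^{\,0/1}(\widehat f_t)\le\tau_t+\delta_t+C_t n^{-1/2}\sigma_{n,t}^{(1-\nu_t/2)(1+\theta_t)d_t/2}\lambda_{n,t}^{-\nu_t/4}\eta_{n,t}^{-\nu_t/2}$ after passing from the ramp to the indicator via the $\delta_t$ slack.

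For the third step I convert the hinge excess risk at stage $t$ into the contribution to the value regret. Fisher consistency (Theorem~1) together with the surrogate-to-target calibration of the hinge loss under the stated boundary conditions on $\delta_{Q_t}/\delta_{R_t}$ yields an inequality of the form $\mathcal{V}_t(f_t^*)-\mathcal{V}_t(\widehat f_t)\le c\,[J_t(\widehat f_t)-J_t(f_t^*)]$, where $\mathcal{V}_t$ is the stage-$t$ partial value (no calibration exponent smaller than one is needed because the hinge loss is linearly calibrated to the 0/1 loss under a positivity margin). Finally I run the backward induction: the stage-$t$ pseudo-outcome uses $\widehat f_{t+1},\ldots,\widehat f_T$ instead of $f_{t+1}^*,\ldots,f_T^*$, so the true value regret at stage $t$ gains an additive error equal to the accumulated regret from later stages, each inflated by a factor $1/c_1$ coming from the positivity bound in Assumption~\ref{Positivity}; unrolling the recursion produces the factor $(c_1/5)^{1-t}$ in the statement, with the numerical constant $5$ absorbing the hinge calibration constant and the union bound over the three terms of $h_t(n,\sigma_{n,t})$.

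The hardest part will be Step~2 and specifically the simultaneous control of the objective and the constraint. In unconstrained OWL one needs only the objective comparison inequality, whereas here infeasibility of the population optimizer $f_t^*$ in the empirical problem, or of a natural approximant in the shifted-ramp constraint, would break the chain $J_{n,t}(\widehat f_t)\le J_{n,t}(\tilde f_t)$. The tradeoff is delicate: shrinking $\eta_{n,t}$ sharpens the ramp surrogate (controlling the $\eta_{n,t}$ bias term) but inflates the Lipschitz constant $1/\eta_{n,t}$ and therefore both the approximation error $\sigma_{n,t}^{-\alpha_td_t}\eta_{n,t}^{-1}$ and the empirical-process variance, forcing the precise scaling between $\eta_{n,t},\sigma_{n,t},\lambda_{n,t}$ that appears in the theorem's hypotheses. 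Establishing that $\tilde f_t$ lies in $\mathcal{A}_{t,n}$ with the required probability—equivalently, choosing $\delta_t$ large enough to swallow the ramp-to-indicator bias and the empirical-process fluctuation of the constraint—is what pins down all the scaling constants $C_1,C_2,C_3,C_t$.
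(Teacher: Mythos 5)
Your architecture matches the paper's: backward induction over stages with a $c_1^{-1}$ inflation factor, a Steinwart--Scovel convolution approximant in the Gaussian RKHS with norm $O(\sigma_{n,t}^{d_t})$ and approximation error controlled by the GNE assumption, entropy/Rademacher bounds over RKHS balls of radius $O(\sqrt{M/(c_1\lambda_{n,t})+\sigma_{n,t}^{d_t}})$ to compare empirical and population objectives and constraints, the hinge-to-$0/1$ calibration costing an additive $M\eta_{n,t}$, and a comparator built with a deliberate safety margin in the ramp constraint so that it is empirically feasible. You also correctly identify the central difficulty (simultaneous control of objective and constraint). However, there are two concrete gaps. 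First, you assert that the comparator $\tilde f_t$ can be ``chosen so that'' $\mathcal{R}_t(\tilde f_t)\le\tau_t-C_1\sigma_{n,t}^{-\alpha_td_t}\eta_{n,t}^{-1}$ while still having hinge excess risk $\lesssim\sigma_{n,t}^{-\alpha_td_t}\eta_{n,t}^{-1}+\eta_{n,t}$ relative to the optimum of the \emph{original} constrained problem. This does not follow from the approximation lemma alone: the natural construction approximates $f^*_{t,\tau_t-\epsilon'_{n,t}}$, the optimizer of the \emph{tightened} problem, and one must separately show that tightening the constraint by $\epsilon'_{n,t}$ costs only $O(\epsilon'_{n,t})$ in the optimal surrogate value. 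The paper proves this as a dedicated lemma (Lipschitz continuity of $\tau\mapsto E[\mathcal{V}_\phi(f^*_\tau,H)]$, Lemma S.4), using the explicit threshold form of $f^*_\tau$ from the Fisher-consistency theorem and the continuity of the law of $\delta_{Q_t}/\delta_{R_t}$; without it the slope of the value--risk frontier is uncontrolled and the safety-margin trick could destroy the rate.

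Second, the theorem bounds $|\mathcal{V}(\widehat f_1,\ldots,\widehat f_T)-\mathcal{V}(f_1^*,\ldots,f_T^*)|$, and your sketch only establishes the direction $\mathcal{V}(f^*)-\mathcal{V}(\widehat f)\le(\cdots)$. Because $\widehat f_t$ is only guaranteed to satisfy the \emph{empirical} ramp constraint, it may slightly violate the population constraint and thereby attain a value \emph{exceeding} $\mathcal{V}(f^*)$. The paper closes this with the reverse inclusion $\mathcal{A}_{t,n}(\tau_t,\mathcal{C}_{n,t})\subset\mathcal{A}_t(\tau_t+\epsilon'_{n,t},\mathcal{C}_{n,t})$ from the same uniform concentration bound, followed by another application of the Lipschitz-in-$\tau$ lemma to show the optimal value over the enlarged feasible set exceeds that over $\mathcal{A}_t(\tau_t)$ by at most $O(\epsilon'_{n,t})$. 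Both gaps are repaired by the same missing ingredient; once you add the Lipschitz-in-$\tau$ lemma and run the inclusion argument in both directions, your outline becomes the paper's proof.
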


\cref{risk_bound} can be established by first verifying the result for $T=1$ and then extending the result to $T\geq2$ using an analogous argument of Theorem 3.4 of \citet{zhao_new_2015}. The risk bound of the value function proved in \cref{risk_bound} indicates that the error consists of four parts. The first two terms correspond to the stochastic error and approximation error resulting from using the empirical estimator to approximate the true objective function and restricting the estimated decision functions within the Gaussian RKHS in the empirical problem. The third error term $O(\sigma_{n,t}^{-\alpha_td_t}\eta_{n,t}^{-1})$ is induced by using the empirical estimator as risk constraints in (\ref{true_problem}). The remaining error has order $O(\eta_{n,t})$ and results from the property that the regret under 0-1 loss function is upper bounded by the regret under hinge loss plus an error term of order $O(\eta)$ when we use the shifted ramp loss to approximate the indicator function in constraints. Due to the existence of the last two error terms, the choice of shifting parameter must be small but bounded away from 0 in order to minimize the regret. The proof of \cref{risk_bound} and required preliminary lemmas are provided in Section A.2 in Appendix A. 

According to \cref{risk_bound}, the risk bound of the regret is minimized by setting $\eta_{n,t}=\sigma_{n,t}^{-\alpha_td_t}\eta_{n,t}^{-1}$, $\lambda_{n,t}\sigma_{n,t}^{d_t}=\sigma_{n,t}^{-\alpha_td_t}\eta^{-1}_{n,t}$ and $\eta_{n,t}=n^{-1/2}\lambda_{n,t}^{-1/2}\sigma_{n,t}^{(1-\nu_t/2)(1+\theta_t)d_t/2}$, which gives 
$$\lambda_{n,t}=O\big(\sigma_{n,t}^{-(\alpha_t+2)d_t/2}\big),\quad \eta_t=O\big(\sigma_{n,t}^{-\alpha_td_t/2}\big)$$ 
and 
$$\sigma_{n,t}=O\big(n^{\frac{1}{\alpha_td_t+(\alpha_t+2)d_t/2+(1-\nu_{t}/2)(1+\theta_t)d_t}}\big).$$
Consequently, there exists constants $k_1,k_2>0$ independent of sample size $n$ such that
  \begin{equation*}
    |\mathcal{V}(\widehat{f}_{1},...,\widehat{f}_{T})-\mathcal{V}(g_1^*,...,g_T^*)|\leq k_1\sum_{t=1}^T(c_1/5)^{1-t}n^{-\frac{\alpha_td_t}{2\alpha_td_t+(\alpha_t+2)d_t+2(1-\nu_{t}/2)(1+\theta_{t})d_t}}
  \end{equation*}
holds with probability $1-\sum_{t=1}^T\exp\big(-k_2n^{\frac{(1-\nu_{t}/2)(1+\theta_{t})d_t}{\alpha_td_t+(\alpha_t+2)d_t/2+(1-\nu_{t}/2)(1+\theta_{t})d_t}}\big)$. When $\alpha_t$ can be selected arbitrarily large in which case the data are approximately separated near the optimal decision boundary, the convergence rate of the value function is at most of order $O(n^{-1/3})$. In terms of risks, when $\alpha_t$ can be arbitrarily large and let $\nu_t$ go to 0, the risk constraint inequality indicates that the stagewise risk under the estimated rule can always be bounded by $\tau_t$ plus an error term of order up to $O(n^{-1/2})$. In terms of stage $T$, we note that the error bound is increasing exponentially with respect to the total number of stages. This result is similar to the risk bound of value function obtained in Q-learning \citep{murphy_experimental_2005} and OWL \citep{zhao_new_2015}. {In practice, the optimal choice of tuning parameters $\{\lambda_{n,t}\}_{t=1}^T$, $\{\sigma_{n,t}\}_{t=1}^T$ and $\{\eta_{n,t}\}_{t=1}^T$ can be obtained via cross-validation. }

\begin{remark}
  {Note that the result in \cref{risk_bound} is obtained under the assumption that $\{\mu_t\}$ and $\{\mu_{t,j}\}$ are known and fixed functions. As discussed in \cref{sec:2_3}, in practice functions $\{\mu_t\}$ and $\{\mu_{t,j}\}$ usually need to be estimated from observed data. Since the value function is Lipschitz continuous in terms of the model parameters $\{\mu_t\}$ and $\{\mu_{t,j}\}$, when $\{\mu_t\}$ and $\{\mu_{t,j}\}$ are estimated by prespecified parametric models as adopted in our proposed algorithm, the estimation will only induce an additional variability of order $O(n^{-\frac{1}{2}})$, which will be dominated by the error bounds in \cref{risk_bound} and, hence, will not affect the conclusion.}
\end{remark}
\begin{remark}
  \label{remark7}
  {The result obtained in \cref{risk_bound} can also be generalized to observational study when the treatment assignment probabilities are unknown and need to be estimated from the observed data. Similar to the previous remark, when $p(A_t|H_t)$ are estimated by parametric models such as logistic regression, such estimation will only induce an additional variability of order $O(n^{-\frac{1}{2}})$, which will not affect the non-asymptotic error obtained in \cref{risk_bound}. When the treatment assignment probability is estimated at a slower rate, the additional variability can be accounted for through additional expansion of the objective function on these parameters.}
\end{remark}


\section{Simulation Studies}
\label{sec:simulation}
We demonstrate the performance of BR-DTRs via simulation studies in this section. We consider two settings both of which simulate the situation when adopting preferable treatment in the early stage would immensely affect the performance of possible treatments in later stages. Specifically, in both settings, we first generate an 8-dimensional baseline prognostic variable matrix $X$ from independent uniform distribution $U[0,1]$. In the first setting, we consider a two-stage randomized trial where treatments $A_1$ and $A_2$ are randomly assigned with an equal probability of 0.5. The stage-specific rewards and risks are defined by
\begin{equation*}
  \begin{split}
    Y_1=1-X_1+A_1(-X_1-X_2+1)+\epsilon_{Y_1},\quad&R_1=2+X_1+A_1(-X_1/2+X_2+1)+\epsilon_{R_1},\\
    Y_2=1-X_1+A_2(Y_1-3X_1+A_1+1)+\epsilon_{Y_2},\quad&R_2=1+X_1+A_2(Y_2/2-X_1+A_2/2+1)+\epsilon_{R_2},
\end{split}
\end{equation*}
where $\epsilon_{Y_1}$, $\epsilon_{Y_2}$ are noises of reward outcomes generated from the independent standard normal distribution $N(0,1)$, and $\epsilon_{R_1}$, $\epsilon_{R_2}$ are noises of adverse risks generated from the independent uniform distribution $U[-0.5,0.5]$. In this setting, both $Y_1$, $Y_2$, $R_1$ and $R_2$ are the linear functions of $H_1=X$ and $H_2=(H_1,A_1,Y_1,R_1)$. In the second setting, $Y_2$ is a nonlinear function of $H_2$ and is generated according to
\begin{equation*}
  \begin{split}
    Y_1=1+A_1(-X_1-X_2/3+1.2)+\epsilon_{Y_1}, \quad &R_1=1.5+A_1(-X_1/3+1.5)+\epsilon_{R_1},\\
    Y_2=1+A_2(-X_1^2/2-X_2^2/2+3A_1/2+1.5)+\epsilon_{Y_2},\quad &R_2=1+A_2(2A_1+2)+\epsilon_{R_2},
  \end{split}
\end{equation*}
and $(A_1,A_2,\epsilon_{Y_1},\epsilon_{Y_2},\epsilon_{R_1},\epsilon_{R_2})$ are generated the same way as setting I. Note that for setting II, the optimal decision boundary in stage II is a circle w.r.t. $(X_1,X_2)$.

For each simulation setting, we implement our proposed method with training data sample size $n$ equal to 200 and 400. We let $\eta=\eta_{1}=\eta_{2}$ varying from 0.02 to 0.1 with an increment of 0.02. For the first simulation setting, we repeat the simulation for $\tau_1=\tau_2=1.4$ and $1.5$; for the second simulation setting, we repeat the simulation for $\tau_1=\tau_2=1.3$ and $1.4$. Both the linear kernel and the Gaussian kernel are employed to compare their performance. {We conduct the estimation following exactly the same description in \cref{sec:2_3}} and the tuning parameter $C_{n,t}=(2n\lambda_{n,t})^{-1}$ will be selected by a 2-fold cross-validation procedure that maximizes the Lagrange dual function from a pre-specified grid of $2^{-10}$ to $2^{10}$. To alleviate the computational burden, when using the Gaussian kernel we follow the idea of \citet{wu_robust_2010} and fix $\sigma_{n,t}^{-1}$ to be {$2*\text{median}\{\|H_{it}-H_{jt}\|:A_{it}\neq A_{jt}\}$} instead of picking $\sigma_{n,t}$ adaptively according to $n$ and other tuning parameters. In our simulations, all feature variables will be re-centered to mean 0 and rescaled into interval $[-1,1]$. When solving the optimization problem, we choose the initial values for parameters either uniformly in a bounded interval or using the estimated parameters from the unconstrained problem. We recommend the latter approach as the performance is overall better than picking the initial point randomly. All quadratic programming programs in the DC procedure will be solved by R function \textit{solve.QP()} from \textit{quadprog} package (\url{https://cran.r-project.org/web/packages/quadprog/index.html}). As a comparison, we also implement the AOWL method proposed by \citet{liu_augmented_2018} as implemented in package \textit{DTRlearn2} (\url{https://cran.r-project.org/web/packages/DTRlearn2/index.html}), which ignores the risk constraints. In addition, we also compare our method with the naive approach where in stage I, we simply use $Y_1+Y_2$ as the outcome for estimation without adjusting for any delayed treatment effects even though the risk constraints are considered. To assess the performance of each method, we calculate the stage optimal estimated reward and risk on an independent testing dataset of size $N=2\times 10^4$. We repeat the analysis with 600 replicates.

\begin{figure}[hbp]
    \centering
    \includegraphics[scale = 0.25]{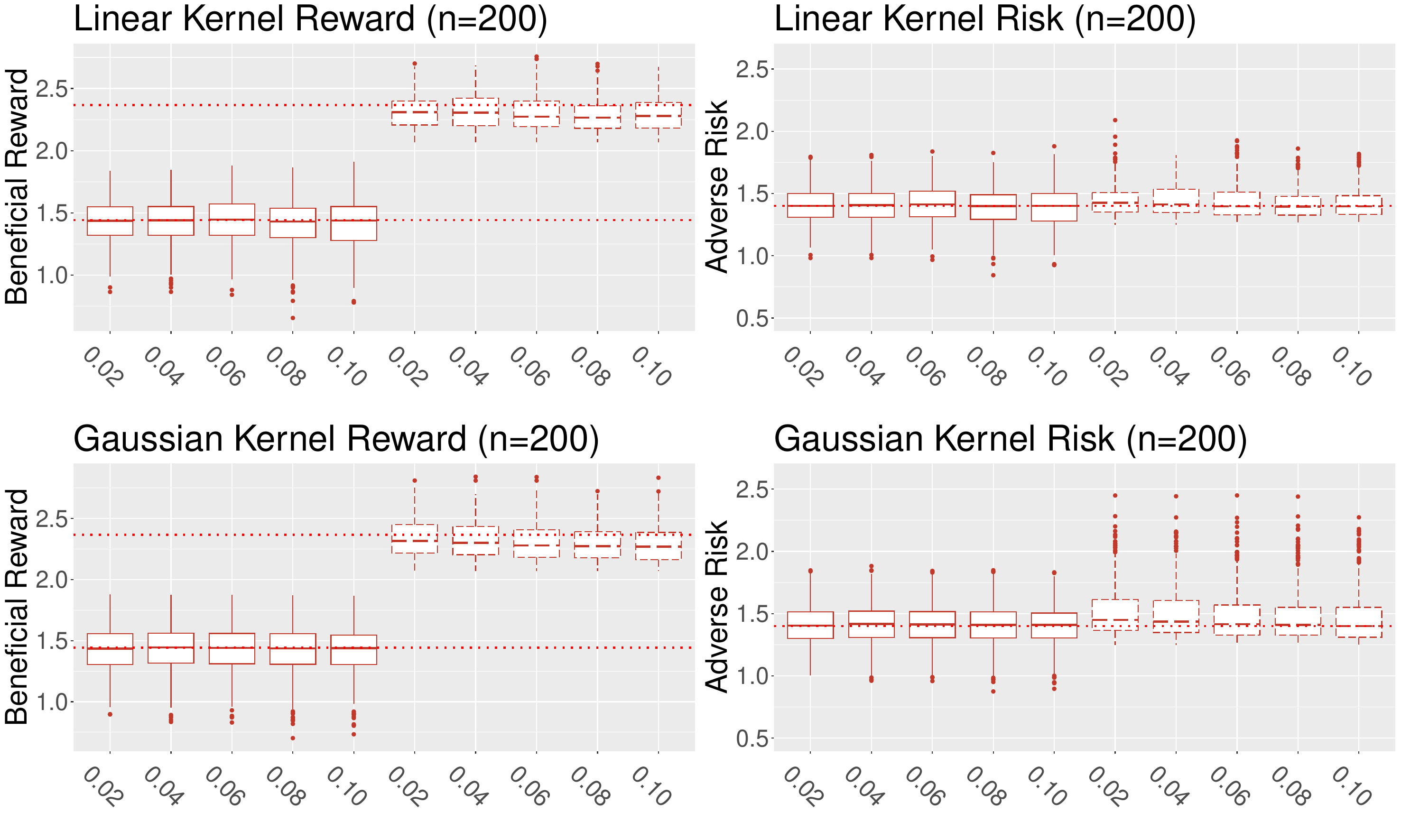}\\
    \includegraphics[scale = 0.25]{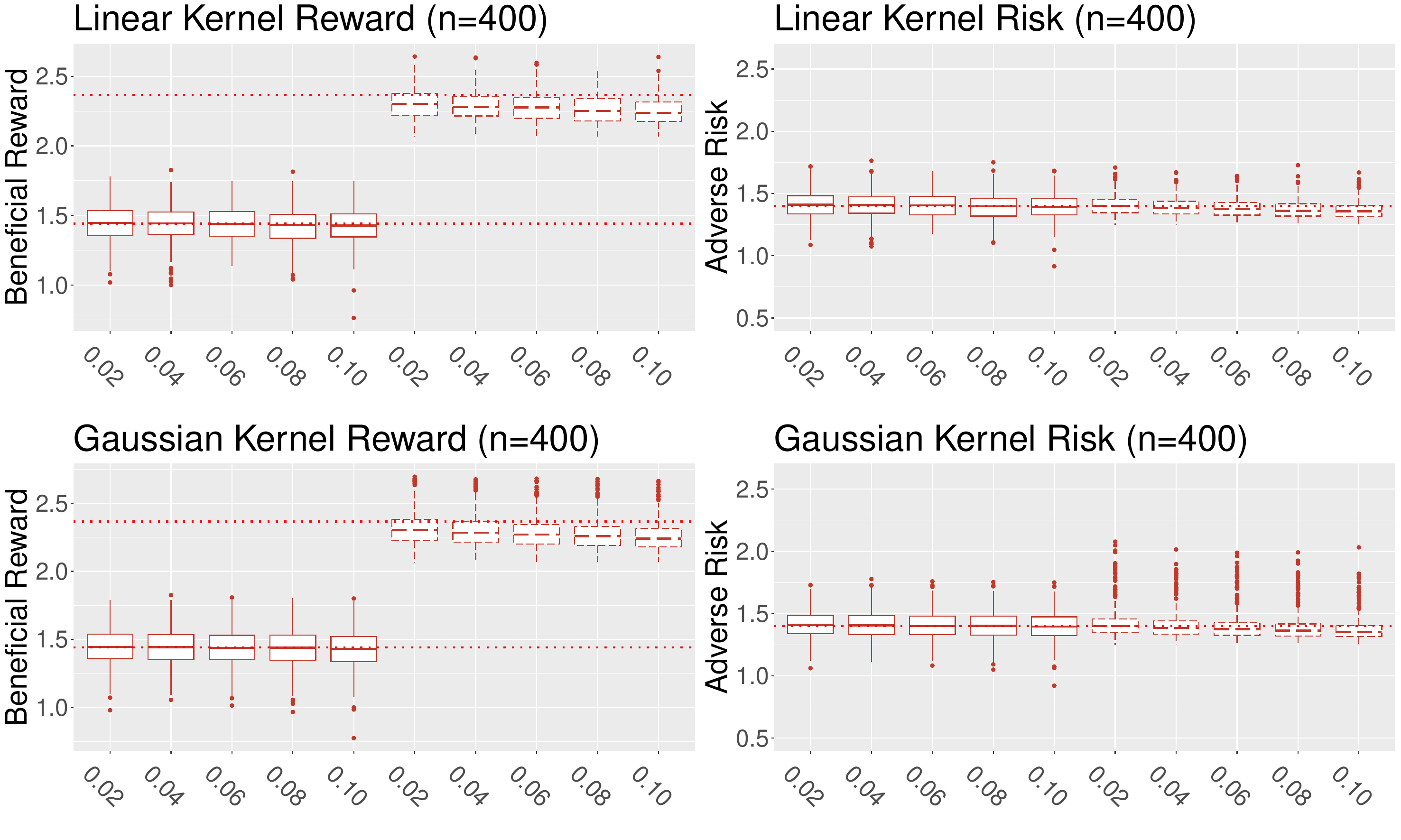}\\
    \vspace{-10pt}
    \includegraphics[scale = 0.3]{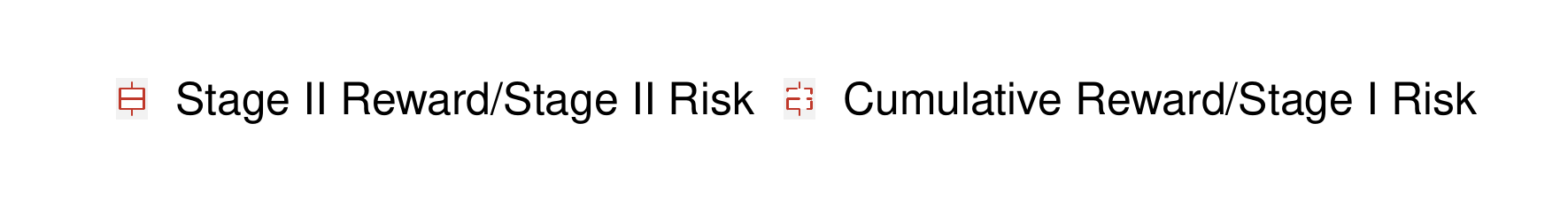}
    \vspace{-10pt}
    \caption{\footnotesize Estimated reward/risk on independent testing data set for simulation setting I, training sample size $n=\{200,400\}$ and $\eta=\{0.02,0.04,...,0.1\}$ (x-axis) under linear kernel or Gaussian kernel. The dashed line in reward plots refers to the theoretical optimal reward under given constraints. The dashed line in risk plots represents the risk constraint $\tau=1.4$.}
    \label{fig:1}
\end{figure}

\begin{figure}[hbp]
  \centering
  \includegraphics[scale = 0.25]{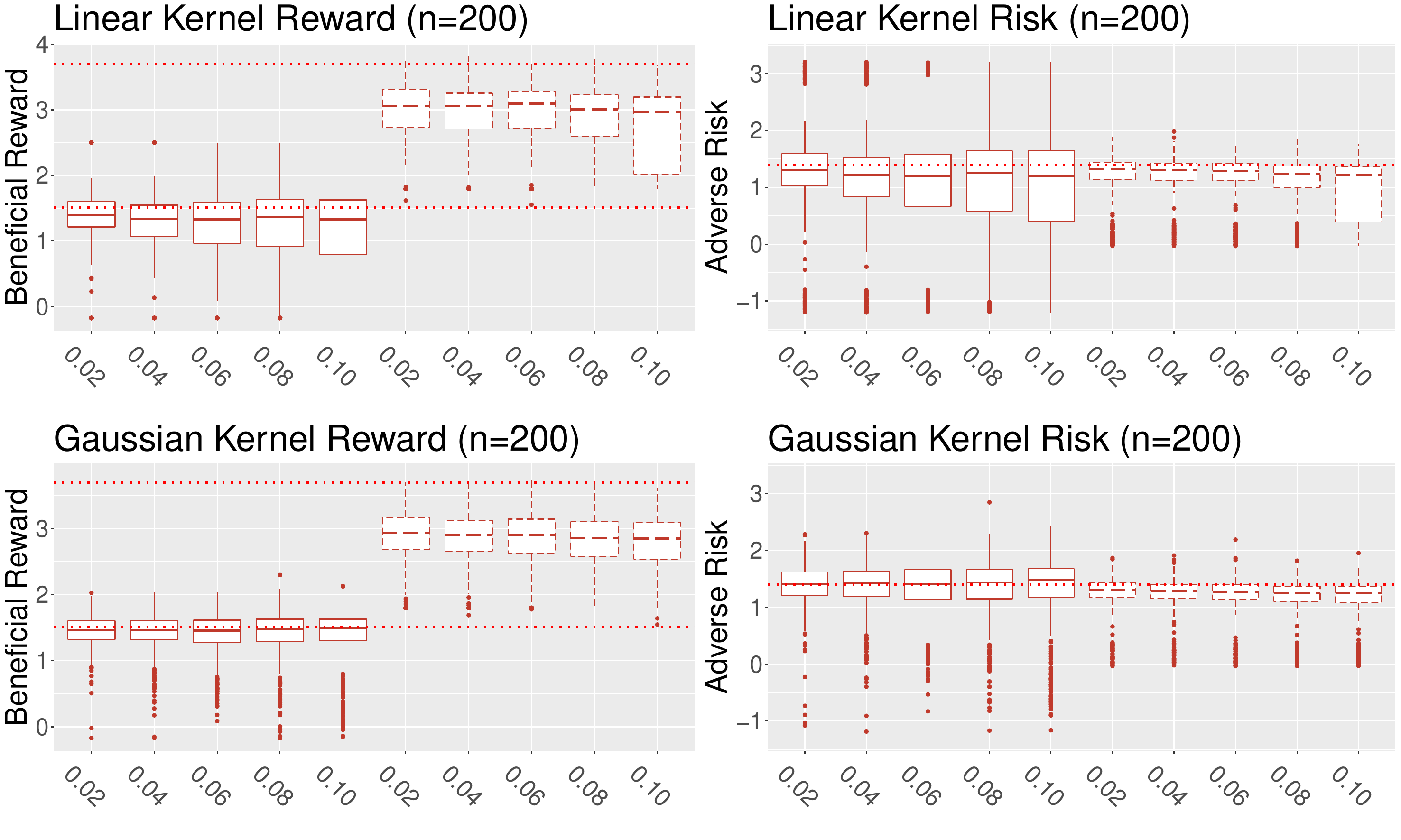}\\
  \includegraphics[scale = 0.25]{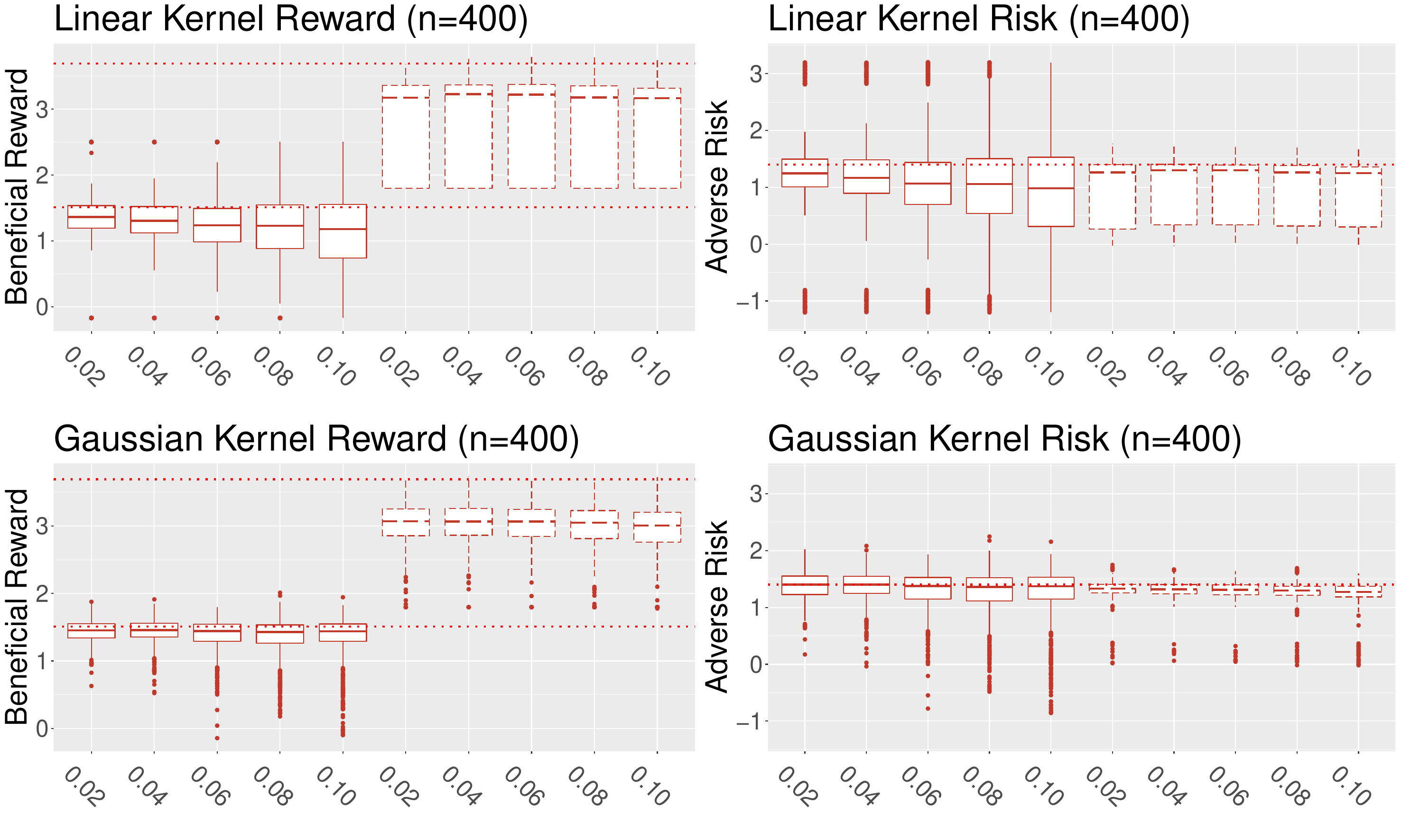}\\
  \vspace{-10pt}
  \includegraphics[scale = 0.3]{legend.pdf}
  \vspace{-10pt}
  \caption{\footnotesize Estimated reward/risk on independent testing data set for simulation setting II, training sample size $n=\{200,400\}$ and $\eta=\{0.02,0.04,...,0.1\}$ (x-axis) under linear kernel or Gaussian kernel. The dashed line in reward plots refers to the theoretical optimal reward under given constraints. The dashed line in risk plots represents the risk constraint $\tau=1.4$.}
  \label{fig:2}
\end{figure}

\cref{fig:1} displays the estimated reward and risk on the independent testing data for the first simulation setting under the different choices of training sample size, kernel basis, and shifting parameter $\eta$ for $\tau_1=\tau_2=1.4$. From the plot, we notice that for the simple linear setting, under both linear and Gaussian kernel the median values of estimated reward/risk will be close to the theoretical reward/pre-specified risk constraints. This indicates that the proposed method can successfully maximize the reward while controlling the risks across both stages. In this setting, compared with the linear kernel, using the Gaussian kernel will significantly underestimate the risk on training data, leading to somewhat exceeding risk on the testing data. Also as expected, in this setting increasing sample size would improve the performance under both kernel choices. In terms of the shifting parameter $\eta$, in setting I there is no obvious preference for choosing a small value to a large value. The result from the second nonlinear simulation setting under $\tau_1=\tau_2=1.4$ is presented in \cref{fig:2}. Under this more complicated setting and when both two stages' optimal decision boundaries are nonlinear, we notice that our method still yields a value close to the truth and the risks are reasonably controlled in both stages. The Gaussian kernel outperformed the linear kernel in both stages since using the linear kernel will misspecify the true model. When the sample size increased, the performance for the Gaussian kernel improved but it was not necessary for the linear kernel, likely due to the misspecification. We also observe that under the second simulation setting and when the Gaussian kernel is used, choosing a small shifting parameter $\eta$ will achieve better performance on the testing data with much smaller variability. {The results for $\tau=1.5$ for setting I and $\tau=1.3$ for setting II are similar to $\tau=1.4$ already discussed. An additional simulation study is conducted to investigate the performance of the proposed method in an observational study by assuming that the treatment assignment probability is unknown and estimated from data. Similar conclusions can be made in this setting. All additional results are presented in Appendix C.}

Finally, the results in \cref{table1} compare the performance of BR-DTRs to AOWL, which ignores the risk constraints, and the naive method, which considers the risk constraints but uses the immediate outcomes as the reward. Clearly, even though AOWL always gives a higher reward than BR-DTRs, the corresponding risks of applying the estimated treatment rules are much larger than the ones from BR-DTRs. In contrast, BR-DTRs can always give valid decision rules with risks close to pre-specified threshold values. When compared with the naive method, due to the nature of DTRs, the reward of the BR-DTRs method is always higher than the naive method. {In terms of the algorithm complexity, as a benchmark, the median running times for completing one estimation with fixed tuning parameters $C_{n,1}=C_{n,2}=1$ and fixed shifting parameter $\eta=0.02$ under setting I ($n=200$ and $\tau_1=\tau_2=1.4$) are 4.79 and 3.13 minutes for linear or Gaussian kernel, respectively. For setting II, the running times for $n=200$ and $\tau_1=\tau_2=1.4$ are 3.37 and 3.54 minutes, respectively. When the sample size is increased to $n=400$, the median running time will increase to 18.71 and 16.41 minutes for setting I, and 8.46 and 10.27 minutes for setting II. When the sample size is large, the running time can be reduced by using stochastic gradient-based methods to speed up the quadratic optimization in each DC iteration or implementing gradient-based methods to solve the constrained non-convex optimization directly.} 

\afterpage{
\begin{landscape}
  \renewcommand{\arraystretch}{1.25}
  \setlength{\tabcolsep}{4pt}
  \begin{table}[t]
    \centering
    \scriptsize
    \caption{\scriptsize Estimated reward/risk on independent testing data for $\tau_1=\tau_2=1.4$ and $n=400$ under 3 different methods using linear/Gaussian kernel. 
    \label{table1}}
    \begin{tabular}{lcccccccccc}
      \hline
             &      &        & \multicolumn{4}{c}{Linear Kernel}                               & \multicolumn{4}{c}{Gaussian Kernel}                             \\\cline{4-11}
    Setting    & $\eta$  & Method & Reward - II & Risk - II & Cumulative Reward & Risk - I & Reward - II & Risk - II & Cumulative Reward & Risk - I \\
    \hline\\
    Setting I  & 0.02 & BR-DTRs & 1.449(0.086)\tablefootnote{\tiny The estimated results are reported in \textit{median(dev)} format. \textit{median} denotes the median of expected risk/reward estimated via normalized estimator of 600 repeated analyses. \textit{dev} denotes the median value of the absolute difference between estimated risk/reward and \textit{median}}     & 1.410(0.072)  & 2.306(0.077)   & 1.400(0.053)                    & 1.438(0.093)    & 1.402(0.076)  & 2.301(0.078)   & 1.399(0.054) \\
             & 0.02 & Naive  &       ---       &      ---      & 2.224(0.072)   & 1.377(0.062)  &      ---        &      ---      & 2.201(0.093)   & 1.363(0.077) \\
             & 0.04 & BR-DTRs & 1.441(0.083)    & 1.404(0.067)  & 2.279(0.071)   & 1.384(0.051) & 1.437(0.094)    & 1.402(0.077)  & 2.281(0.077)   & 1.383(0.053) \\
             & 0.04 & Naive  &      ---        &      ---      & 2.207(0.086)   & 1.359(0.064)  &      ---        &      ---      & 2.196(0.085)   & 1.355(0.071) \\
             & 0.06 & BR-DTRs & 1.442(0.089)    & 1.405(0.074)  & 2.276(0.071)   & 1.377(0.050) & 1.435(0.092)    & 1.400(0.075)  & 2.268(0.070)   & 1.376(0.050) \\
             & 0.06 & Naive  &      ---        &      ---      & 2.185(0.086)   & 1.348(0.063)  &      ---        &      ---      & 2.181(0.082)   & 1.347(0.064) \\
             & 0.08 & BR-DTRs & 1.431(0.086)    & 1.393(0.070)  & 2.249(0.078)   & 1.358(0.048) & 1.437(0.093)    & 1.401(0.076)  & 2.257(0.068)   & 1.363(0.045) \\
             & 0.08 & Naive  &      ---        &      ---      & 2.164(0.088)   & 1.322(0.066)  &      ---        &      ---      & 2.168(0.082)   & 1.335(0.062) \\
             & 0.1  & BR-DTRs & 1.428(0.081)    & 1.394(0.066)  & 2.237(0.066)   & 1.357(0.045) & 1.430(0.094)    & 1.396(0.077)  & 2.239(0.065)   & 1.350(0.044) \\
             & 0.1  & Naive  &      ---        &      ---      & 2.168(0.074)   & 1.321(0.051)  &      ---        &      ---      & 2.161(0.073)   & 1.327(0.057) \\
             &      & AOWL   & 1.983(0.010)    & 2.149(0.044)  & 3.257(0.018)   & 2.678(0.096)  & 1.914(0.030)    & 2.099(0.083)  & 3.212(0.036)   & 2.584(0.218) \\
             \hline\\
    Setting II & 0.02 & BR-DTRs & 1.362(0.173)    & 1.246(0.247)  & 3.174(0.283)   & 1.262(0.198) & 1.456(0.106)    & 1.403(0.157)  & 3.069(0.192)   & 1.329(0.076) \\
             & 0.02 & Naive  &      ---        &      ---      & 1.797(0.000)   & 0.166(0.000)    &      ---        &      ---      & 1.816(0.019)   & 0.184(0.018) \\
             & 0.04 & BR-DTRs & 1.306(0.202)    & 1.166(0.288)  & 3.228(0.188)   & 1.299(0.130)   & 1.459(0.102)    & 1.402(0.153)  & 3.066(0.196)   & 1.319(0.080) \\
             & 0.04 & Naive  &      ---        &      ---      & 1.797(0.000)   & 0.166(0.000)    &      ---        &      ---      & 1.797(0.000)   & 0.166(0.000) \\
             & 0.06 & BR-DTRs & 1.238(0.252)    & 1.067(0.371)  & 3.221(0.197)   & 1.297(0.126)   & 1.444(0.123)    & 1.377(0.189)  & 3.068(0.208)   & 1.311(0.086) \\
             & 0.06 & Naive  &      ---        &      ---      & 1.797(0.000)   & 0.166(0.000)    &      ---        &      ---      & 1.797(0.000)   & 0.166(0.000) \\
             & 0.08 & BR-DTRs & 1.228(0.329)    & 1.059(0.479)  & 3.178(0.229)   & 1.260(0.149)   & 1.430(0.129)    & 1.360(0.195)  & 3.049(0.204)   & 1.297(0.078) \\
             & 0.08 & Naive  &      ---        &      ---      & 1.797(0.000)   & 0.166(0.000)    &      ---        &      ---      & 1.797(0.000)   & 0.166(0.000) \\
             & 0.1  & BR-DTRs & 1.177(0.404)    & 0.980(0.576)  & 3.169(0.239)   & 1.247(0.152)   & 1.438(0.123)    & 1.371(0.179)  & 3.009(0.206)   & 1.271(0.094) \\
             & 0.1  & Naive  &      ---        &      ---      & 1.797(0.000)   & 0.166(0.000)    &      ---        &      ---      & 1.797(0.001)   & 0.167(0.001) \\
             &      & AOWL   & 2.440(0.064)    & 3.017(0.002)  & 5.188(0.000)   & 2.839(0.000)    & 2.424(0.080)    & 3.018(0.002)  & 5.188(0.000)   & 2.839(0.000)  \\
             \hline
    \end{tabular}
  \end{table}
\end{landscape}}


\section{Real Data Application}
\label{sec:application}
We apply BR-DTRs to analyze the data from the DURABLE study \citep{fahrbach_durable_2008}. The DURABLE study is a two-phase trial designed to compare the safety and efficacy of insulin glargine versus insulin lispro mix in addition to oral antihyperglycemic agents in T2D patients. During the first phase trial, patients were randomly assigned to the daily insulin glargine group or twice daily insulin lispro mix 75/25 (LMx2) group for 24 weeks. By the end of 24 weeks, patients who failed to reach an HbA1c level lower than 7.0\% would enter the second phase intensification study and be randomly reassigned with either basal-bolus therapy (BBT) or LMx2 for insulin glargine group or basal-bolus therapy (BBT) or three times daily insulin lispro mix 50/50 (MMx3) therapy for LMx2 group. Any other patients who reached HbA1c 7.0\% or lower would enter the maintenance study and keep the initial therapy for another 2 years. {A flowchart of the study design of the DURABLE trial is provided in Appendix D for reference.}

In the DURABLE study, the major objective is lowering patients' endpoint blood glucose level measured in HbA1c level, and in this analysis, {we use the reduction of HbA1c level at 24 weeks since baseline as the reward outcome for the first stage and use the reduction of HbA1c level at 48 weeks since 24 weeks as the reward outcome for the second stage.} The risk outcome is set to be hypoglycemia frequency encountered by patients, which reflects the potential risk induced by adopting assigned treatment. {Patients who achieved an HbA1c level lower than 7\% and entered the maintenance study would not be re-randomized with new treatments during the second stage. To accommodate these patients in our proposed framework, we make the additional assumption that for patients in the maintenance study, their first-stage treatment is already optimal and should not be adjusted. This assumption is consistent with the general guidance of treating T2D patients suggested by ADA where the patient's treatment should be unchanged if the patient's HbA1c level can be maintained lower than 7\% \citep{american_diabetes_association_pharmacologic_2022}. Under this assumption, in the second stage, patients in the maintenance study are already receiving optimal treatment so it is not necessary to estimate their optimal decision rules. Consequently, the second stage analysis will only involve patients who entered the intensification study, and only in the first stage will all patients be included in the analysis. In the first stage estimation, for patients in the maintenance study, their future reward outcome (reduction of HbA1c) is assumed to be maintained.} That is, in Stage I, the reward outcome becomes
$$Y'=\begin{cases}
  Y,& \text{if subject is from the maintenance study}\\
  Y\frac{\mathbb{I}(A_{2}\widehat{f}_2(H_{2})>0)}{0.5},& \text{if subject is from the intensification study}.\\
\end{cases}$$
Finally, the second stage risk outcome is the total frequency of hyperglycemia events during the intensification study (from 24 weeks to 48 weeks) and the first stage risk outcome is defined to be the total hypoglycemia events from week 0-24 for patients who entered intensification study, and the total hypoglycemia events from week 0-48 rescaled to 24 weeks for the remaining patients who entered maintenance study. In the analysis, we eventually apply the logarithm transformation to these counts to handle some extremely large counts in the data.

We consider 20 relevant covariates as the baseline predictors $H_1$, including HbA1c testing result, heart rate, systolic/diastolic blood pressures, body weight, body height, BMI, and 7 points self-monitored blood glucose measured at baseline (week 0) along with patient's age, gender, duration of T2D and 3 indicator variables indicating whether patients were taking metformin, thiazolidinedione, or sulfonylureas. The second stage predictors $H_2$ include all predictors in $H_1$, patient's treatment assignment, the cumulative number of hyperglycemia events during the first stage, along with heart rate, systolic/diastolic blood pressures, HbA1c and same 7 points self-monitored blood glucose measured at the initial time of the second stage (24 weeks). All covariates are centered at mean 0 and rescaled to be within $[-1,1]$.

The final study cohort includes 579 patients from the intensification study and another 781 from the maintenance study. To compare the performance, we randomly sample 50\% patients from the intensification study as the training sample for stage II and an additional 50\% patients from the maintenance study as the training sample for stage I. The remaining patients will be treated as the testing data to assess the performance of the estimated rules. We consider different risk constraints $\tau_2=(0.334, \infty)$ and $\tau_1=(0.893,0.948,1.005)$ where we rescale the risk to hypoglycemia events per 4 weeks. We note that $0.334$ and $0.948$ are the mean risks of stage II and stage I, respectively, and $1.005$ is close to the median estimated risk on testing data under the unconstrained case. We repeat the analysis 100 times for random splitting of the training and testing data. For our method, we also conduct the estimation following the description in \cref{sec:2_3} and use the Gaussian kernel and choose $\eta=0.02$, while tuning parameter $\{C_{n,t}\}_{t=1}^2$ for each stage will be selected by two-fold cross-validation similar to the simulation studies. The bandwidth of the Gaussian kernel is also selected similar to the simulation studies.

\begin{table}[t]
  \renewcommand{\arraystretch}{1.25}
  \footnotesize
  \caption{Estimated reward/risk under different risk constraints for DURABLE study analysis \label{table2}. Results are reported in the same format as \cref{table1}.}
  \begin{tabular}{cccccccc}
    \hline
    \multicolumn{2}{c}{Risk Constraint} & \multicolumn{3}{c}{BR-DTRs} & \multicolumn{3}{c}{Naive}\\
    $\tau_2$ & $\tau_1$ & Reward       & Stage II Risk & Stage I Risk & Reward       & Stage II Risk & Stage I Risk \\
    \hline
    0.334 & 0.893 & 1.471(0.072) & 0.311(0.033) & 0.844(0.044) & 1.460(0.087) & 0.311(0.033) & 0.842(0.049) \\
          & 0.948 & 1.520(0.078) & 0.311(0.033) & 0.874(0.067) & 1.499(0.091) & 0.311(0.033) & 0.868(0.066) \\
          & 1.005 & 1.547(0.089) & 0.311(0.033) & 0.929(0.102) & 1.527(0.098) & 0.311(0.033) & 0.923(0.111) \\
 $\infty$ & 0.893 & 1.598(0.043) & 0.347(0.028) & 0.832(0.039) & 1.604(0.048) & 0.347(0.028) & 0.840(0.040) \\
          & 0.948 & 1.605(0.053) & 0.347(0.028) & 0.832(0.040) & 1.607(0.056) & 0.347(0.028) & 0.850(0.056) \\
          & 1.005 & 1.620(0.068) & 0.347(0.028) & 0.922(0.107) & 1.625(0.062) & 0.347(0.028) & 0.888(0.103) \\
          & $\infty$ & 1.713(0.052) & 0.347(0.025) & 1.040(0.047) &        -     &        -     &       -     \\
    \hline   
  \end{tabular}
\end{table}

All real data analysis results are displayed in \cref{table2}. From \cref{table2} we first notice that in each stage, the median estimated risk on testing data is tightly controlled by the prespecified risk constraints. This demonstrates that BR-DTRs can also successfully control adverse risks in real applications. Under each risk constraint, the cumulative reward estimated by BR-DTRs is only slightly better or closed against the estimated reward using the naive method. One reason is that the majority of the patients in stage I would not enter the intensification study and, hence, have no delayed treatment effect at all. 

Among all 7 constraint settings, the uncontrolled setting, as expected, produces the estimated rules with both the highest reward and risks, and the estimated reward decreases as the risk constraint of either stage decreases. Under the unconstrained estimated optimal rules, all patients are recommended to receive LMx2 in the first stage and later switch to MMx3 after 24 weeks if patients' HbA1c level is greater than 7.0\% by the end of the first phase. As a comparison, when the risk constraint is imposed in stage II, the optimal rules will instead recommend all patients to receive BBT when patients fail to reach HbA1c lower than 7.0\% in the second stage at a price of significantly lower reduction in HbA1c by the end of 48 months. Similar treatment preference change happens in stage I as the optimal estimated rule becomes less favorable to LMx2 against insulin glargine when $\tau_1$ decreases. 

Comparing the reward and risks under different choices of risk constraint, $\tau_1=1.005$ and $\tau_2=\infty$ produce the second highest reward with moderate risk in the second stage and 10\% lower risk in the first stage compared to the unconstrained setting. Under this suboptimal setting, the estimated rules recommend only 50.7\% of patients start with LMx2 therapy and later switch to MMx3 therapy if patients fail to reach an HbA1c level of less than 7.0\% by the end of the first phase of treatment. By checking the baseline covariates between the patients who received different treatment recommendations, under this estimated rule for the patients whose baseline HbA1c falls in the range $[7,8)$, $[8,9)$ and $[9,10)$, the proportion of the patients who are recommended with LMx2 therapy drops from 62.7\% to 56.3\% and 46.3\%; similarly, for the patients whose baseline BMI falls in the range $[28,32)$, $[32,34)$ to $[34, 36)$, the proportion of patients recommended with LMx2 also drops from 59.3\% to 53.8\% and 51.3\%. The negative correlation between the increment of baseline HbA1c/BMI against the proportion of patients recommended with LMx2 as the first phase treatment indicates that the patients with a worse initial health condition are less likely to be recommended with LMx2 therapy as the initial treatment when the risk impact is considered. This is consistent with the fact that LMx2 is a more intense therapy compared with insulin glargine therapy and would cause more hypoglycemia events among unhealthier T2D patients. In particular, the suboptimal rules obtained from BR-DTRs meet the ADA guidance which suggests that intensive insulin therapy should be prescribed to patients according to patients' health condition to reduce potential hypoglycemia events. In conclusion, the real data application demonstrates that, by evaluating the impact of adverse risks along with beneficial reward, BR-DTRs can produce better personalized, more practically implementable treatment recommendations compared with standard OWL which only takes beneficial reward into consideration.


\section{Discussion}
\label{sec:discussion}
{In this work, we introduce a new statistical framework BR-DTRs to estimate the optimal dynamic treatment rules under the stagewise risk constraints. Sufficient conditions are provided to guarantee the Fisher consistency of using backward induction to learn the optimal decision rules of DTRs problems under stagewise risk constraints. The backward induction technique provides an algorithm to solve BR-DTRs efficiently through iteratively solving a series of single-stage, single-constraint sub-problems. In addition, we establish the non-asymptotic risk bound for the value and stagewise risks under the estimated decision functions. Our theoretical contributions include providing sufficient conditions for implementing backward induction for the constrained decision-making problem and non-asymptotic performance guarantee under the estimated rules.}

{To tackle the numerical challenge due to the 0-1 loss, we introduced the hinge loss and shifted ramp loss as the surrogate losses in this work. We note that although the shifted ramp loss could also be used as the surrogate function for the objective function, it does not reduce to a standard SVM problem when $\tau_t$ is infinity and involves an additional tuning parameter.
More numerical comparisons with the alternative choices of the surrogate functions are necessary.}

{It is worth noting that even though we focus on handling DTRs problems, the proposed method is applicable and can be generalized to other sequential decision-making problems beyond biomedical research. One example is the promotion recommendation in E-commerce,  where the goal is to learn a personalized strategy that maximizes customers' buying willingness at a tolerable loss of revenue \citep{goldenberg_optimization_2021,wang_multi-stage_2023}. In this application, multiple waves of promotions are scheduled to be delivered to customers in a cycle \citep{chen_bcrlsp_2022} and BR-DTRs can be applied to learn the optimal strategies at each stage. In Appendix C.3, an additional simulation study mimicking such promotion recommendation problem has been conducted for $T=4$ and the results indicate that the BR-DTRs method still performs well.} Moreover, even though we assumed treatments to be dichotomous and only one risk constraint is imposed at each stage in BR-DTRs, our method can also be extended to problems with more treatment options and risk constraints at each stage. One can achieve this by imposing multiple smooth risk constraints to multicategory learning algorithms, such as angle-based learning methods \citet{qi_multi-armed_2020,ma_learning_2023}. However, verifying the Fisher consistency of generalized problems is not trivial and is beyond the scope of this work. {In addition, for many real world applications, finding the most influential feature variables that drive the optimal decisions is of equal importance as obtaining the explicit rules that maximize the beneficial reward under the constraints. Thus, BR-DTRs can also be extended to incorporate feature selection during the estimation. For example, when the RKHS is generated by the linear kernel, the optimal decision boundary is linear, and one can introduce an additional penalty term with a group structure to impose sparsity over feature variables. } 

{There are several limitations of the proposed method. One limitation is that the proposed method may not perform well for a very large number of horizons. For example, the uncertainty for the objective maximization is accumulated over stages in the backward algorithm, so it will increase for large  $T$. In contrast, as shown in \cref{risk_bound}, the uncertainty for the risk control at each stage will remain independent of $T$. Consequently, the risk constraint will mainly drive the decision rules for large $T$, which may not be the ideal solution in practice. Possible extensions can be to impose appropriate parametric assumptions on the DTRs, or less strict control on the risk function.} {Another limitation is the acute risk assumption, which requires the stagewise risk to be solely determined by the most recent action. However, this assumption may be violated in some applications when risks are expected to be affected by earlier actions. {For example, the stagewise risks can be defined as the total number of the most toxic treatments received since the beginning of the treatments.} Therefore, further extensions are necessary when the delayed risks exist.}





\appendix

\renewcommand{\thesection}{A.\arabic{section}}
\renewcommand{\thesubsection}{A.\arabic{section}.\arabic{subsection}} 
\renewcommand{\thesubsubsection}{A.\arabic{section}.\arabic{subsection}.\arabic{subsubsection}} 
\renewcommand{\thefigure}{S.\arabic{figure}}
\renewcommand{\thetable}{S.\arabic{table}}


\section*{Appendix A: Proof of \cref{fisher_consistency} and \cref{risk_bound}}
\label{appendix:A}

We summarize the additional notations used in the proofs below:
\begin{longtable}{p{0.16\textwidth}p{0.84\textwidth}}
$f_{t}^*$ & the true optimal decision function solving the BR-DTRs (\ref{f_surrogate}),\\
& and we use $f_{t,\tau}^*$ for $f_t^*$ whenever $\tau$ is necessary in the context;\\
$\mathcal{V}_{t,\phi}(s,h)$ & $-\left\{\phi(s)E[Q_t|H_t=h,A_t=1]+\phi(-s)E[Q_t|H_t=h,A_t=-1]\right\}$;\\
$\mathcal{R}_{t, \psi}(s,\eta,h)$ & $ \psi(s,\eta)E[R_t|H_t=h,A_t=1]+\psi(-s,\eta)E[R_t|H_t=h, A_t=-1]$.
\end{longtable}
\noindent When $T=1$, we omit subscript $t$ from all these notations.


\section{Proof of \cref{fisher_consistency}}
\label{A}
\subsection{{Verification of equation (\ref{equation_after_assumption})}}
\label{A:0}
In this section, we provide detailed intermediate steps for establishing (\ref{equation_after_assumption}). In other word, we show that equations $(i)-(iv)$ hold
\begin{equation*}
  \begin{split}
    E^{\mathcal{D}}[R_t]\overset{(i)}{=}E\bigg[&\frac{R_t\prod_{t=1}^T\mathbb{I}(A_tf_t(H_t)>0)}{\prod_{t=1}^Tp(A_t|H_t)}\bigg]\overset{(ii)}{=}E\bigg[R_t(\textrm{sign}(f_1), ....,\textrm{sign}(f_t))\bigg]\\
    &\overset{(iii)}{=}E\bigg[R_t(\textrm{sign}(f_t))\bigg]\overset{(iv)}{=}E\bigg[\frac{R_t \mathbb{I}(A_tf_t(H_t)>0)}{p(A_t|H_t)}\bigg].
  \end{split}
\end{equation*}
The proof uses the fact from \citet{qian_performance_2011} and \citet{zhao_new_2015}, which shows that under \cref{SUTV,NUC,Positivity}, we have 
\begin{equation}
  \label{derivatives_1}
  \frac{d P_\mathcal{D}}{d P}=\frac{\prod_{s=1}^T\mathbb{I}(A_sf_s(H_s)>0)}{\prod_{s=1}^Tp(A_s|H_s)},
\end{equation}
and 
\begin{equation}
  \label{derivatives_2}
  \frac{d P_{(\mathcal{D}_1,...,\mathcal{D}_t)}}{d P}=\frac{\prod_{s=1}^t\mathbb{I}(A_sf_s(H_s)>0)}{\prod_{s=1}^tp(A_s|H_s)},\quad \frac{d P_{\mathcal{D}_t}}{d P}=\frac{\mathbb{I}(A_tf_t(H_t)>0)}{p(A_t|H_t)}.
\end{equation}
Equality (i) immediately follows by (\ref{derivatives_1}). To show (ii), we first note that by conditioning on $(H_T,A_T)$, we have that for any $t<T$
\begin{align*}
  &E\bigg[\frac{R_t\prod_{t=1}^T\mathbb{I}(A_tf_t(H_t)>0)}{\prod_{t=1}^Tp(A_t|H_t)}\bigg]\\
  =&E\bigg[E\bigg[\frac{R_t\prod_{t=1}^T\mathbb{I}(A_tf_t(H_t)>0)}{\prod_{t=1}^Tp(A_t|H_t)}\bigg|H_T,A_T\bigg]\bigg]\\
  =&E\bigg[\frac{R_t\prod_{t=1}^{T-1}\mathbb{I}(A_tf_t(H_t)>0)}{\prod_{t=1}^{T-1}p(A_t|H_t)}E\bigg[\frac{\mathbb{I}(A_Tf_T(H_T)>0)}{p(A_T|H_T)}\bigg|H_T,A_T\bigg]\bigg]\\
  =&E\bigg[\frac{R_t\prod_{t=1}^{T-1}\mathbb{I}(A_tf_t(H_t)>0)}{\prod_{t=1}^{T-1}p(A_t|H_t)}\bigg].
\end{align*}
Repeating the same argument from stage $T-1$ to $t$, we have 
$$E^{\mathcal{D}}[R_t]=E\bigg[\frac{R_t\prod_{s=1}^{t}\mathbb{I}(A_sf_s(H_s)>0)}{\prod_{s=1}^{t}p(A_s|H_s)}\bigg].$$
Then, equations $(ii)$ and $(iv)$ can be obtained by applying (\ref{derivatives_2}) and recalling that by definition $E^{(\mathcal{D}_1,...,\mathcal{D}_t)}[R_t]=E[R_t(\text{sign}(f_1),...,\text{sign}(f_t))]$ and $E^{\mathcal{D}_t}[R_t]=E[R_t(\text{sign}(f_t))]$. Lastly, equation (iii) is followed by \cref{ACUTE} which implies that $R_t(\bar{a}_t)=R_t(a_t)$ for any $\bar{a}_t\in\{-1,+1\}^t$ and, consequently, $R_t(\text{sign}(f_1),...,\text{sign}(f_t))=R_t(\text{sign}(f_t))$.

\subsection{Proof of \cref{fisher_consistency} for $T=1$}
\label{A:1}

We consider $T=1$. After dropping the stage subscript, both (\ref{ow_prob}) and (\ref{f_zero_one}) are equivalent to solving
\begin{equation}
  \label{single_stage_original}
  \min_{f\in\mathcal{F}}~E\bigg[ \frac{Y\mathbb{I}(Af(H)<0)}{p(A|H)}\bigg],\quad\text{subject to }E\bigg[\frac{R \mathbb{I}(Af(H)>0)}{p(A|H)}\bigg]\le\tau,
\end{equation}
and its resulting decision is given by $\textrm{sign}(g^*)$. Without loss of generality, we assume that $Y$ is nonnegative; otherwise, we can change $Y$ to $|Y|$ and $A$ to $A*\text{sign}(Y)$, which will not change the optimal solution since the objective functions are equivalent up to a constant due to
\begin{equation*}
  \begin{split}
    E\bigg[\frac{Y\mathbb{I}(Af(H)<0)}{p(A|H)}\bigg]=&E\bigg[\frac{Y^+\mathbb{I}(Af(H)<0)}{p(A|H)}\bigg]-E\bigg[\frac{Y^-\mathbb{I}(Af(H)<0)}{p(A|H)}\bigg]\\
    =&E\bigg[\frac{Y^+\mathbb{I}(Af(H)<0)}{p(A|H)}\bigg]+E\bigg[\frac{Y^-\mathbb{I}(Af(H)>0)}{p(A|H)}\bigg]-E\bigg[\frac{Y^-}{p(A|H)}\bigg]\\
    =&E\bigg[\frac{|Y|\mathbb{I}(A*\text{sign}(Y)f(H)<0)}{p(A|H)}\bigg]-E\bigg[\frac{Y^-}{p(A|H)}\bigg],
  \end{split}
\end{equation*}
and note that $E[Y^-/p(A|H)]$ is a term that is independent of $f$. In addition, {following the notation in \cref{sec:2_2}, given random variables $(Y,R,A,H)$ and for $a=\pm1$, we define
\begin{equation*}
  \begin{split}
    m_Y(h,a)=E[Y|H=h,A=a],&\quad\quad \delta_Y(h)=m_Y(h,1)-m_Y(h,-1), \\
    m_R(h,a)=E[R|H=h,A=a],&\quad\quad \delta_R(h)=m_R(h,1)-m_R(h,-1),
  \end{split}
\end{equation*} 
and let
$$\tau_{\min}=E\bigg[R\frac{\mathbb{I}(A\delta_{R}(H)<0)}{p(A|H)}\bigg],$$
$$\tau_{\max}=E\bigg[R\frac{\mathbb{I}(A\delta_{Y}(H)>0)}{p(A|H)}\bigg].$$}

We define ${\cal M}=\left\{h: \delta_Y(h)\delta_R(h)<0\right\},$ i.e., the set of subjects where the beneficial treatment also reduces risk.
Then according to Theorem 1 in \citet{wang_learning_2018}, for any $\tau \in (\tau_{\min}, \tau_{\max})$, the optimal $g^*$ can be chosen as
  \begin{equation*}
    g^*(h)=\begin{cases}
      \textrm{sign}(\delta_Y(h)),&\text{if~}h\in\mathcal{M}\\
      1,&\text{if~}h\in\{\delta_Y(h)/\delta_R(h)>\lambda^*,\delta_Y(h)>0\}\cap\mathcal{M}^c\\
      -1,&\text{if~}h\in\{\delta_Y(h)/\delta_R(h)<\lambda^*,\delta_Y(h)>0\}\cap\mathcal{M}^c\\
      -1,&\text{if~}h\in\{\delta_Y(h)/\delta_R(h)>\lambda^*,\delta_Y(h)<0\}\cap\mathcal{M}^c\\
      1,&\text{if~}h\in\{\delta_Y(h)/\delta_R(h)<\lambda^*,\delta_Y(h)<0\}\cap\mathcal{M}^c,
    \end{cases}
  \end{equation*}
where $\lambda^*$ satisfies $E[R\mathbb{I}(Ag^*(H)>0)/p(A|H)]=\tau$. Our surrogate problem to be solved is (\ref{f_surrogate}), which is
\begin{equation}
  \label{single_stage_surrogate}
  \min_{f\in\mathcal{F}}~E\bigg[ \frac{Y\phi(Af(H))}{p(A|H)}\bigg],\quad\text{subject to }E\bigg[\frac{R\psi(Af(H),\eta)}{p(A|H)}\bigg]\le\tau.
\end{equation}
We let $f^*$ denote the solution. Our following theorem (the same version for \cref{fisher_consistency} for $T=1$) gives an explicit expression for $f^*$ so that the solution for the surrogate problem has 
the same sign as $g^*$.

\begin{theorem}
  \label[theorem]{thm:s1}
  For any fixed $\tau_{\min}<\tau<\tau_{\max}$, suppose that $P(\delta_Y(H)\delta_R(H)=0)=0$ and random variable $\delta_{Y}(H)/\delta_{R}(H)$ has distribution function with a continuous density function in the support of $H$. Then for any $\eta\in(0,1]$, $f^*(h)$ can be taken as
  \begin{equation}
    \label{f_expression}
    \begin{split}
      f^*(h)=\begin{cases}
       \textrm{sign}(\delta_Y(h)),&\text{if~}h\in\mathcal{M}\\
        1,&\text{if~}h\in\{\delta_Y(h)/\delta_R(h)>\lambda^*,\delta_Y(h)>0\}\cap\mathcal{M}^c\\
        -\eta,&\text{if~}h\in\{\delta_Y(h)/\delta_R(h)<\lambda^*,\delta_Y(h)>0\}\cap\mathcal{M}^c\\
        -1,&\text{if~}h\in\{\delta_Y(h)/\delta_R(h)>\lambda^*,\delta_Y(h)<0\}\cap\mathcal{M}^c\\
        \eta,&\text{if~}h\in\{\delta_Y(h)/\delta_R(h)<\lambda^*,\delta_Y(h)<0\}\cap\mathcal{M}^c,
             \end{cases}
    \end{split}
  \end{equation}
where $\lambda^*$ is the same one in the definition of $g^*$.
\end{theorem}

By comparing the expressions for $g^*$ and $f^*$, we immediately conclude that they have the same signs so solving (\ref{single_stage_surrogate}) leads to a Fisher consistent solution to the original problem in (\ref{single_stage_original}). The proof consists of several steps. For any decision function $f$, we say that $f$ is feasible meaning that $f$ satisfies the risk constraint in the surrogate problem (\ref{single_stage_surrogate}), and for any two feasible functions, $f_1$ and $f_2$, ``$f_1$ is non-inferior to $f_2$'' means that the objective function in (\ref{single_stage_surrogate}) is less than or equal to the one for $f_2$, and ``$f_1$ is superior to $f_2$'' if the objective function is strictly less than.

From now on, we assume $\eta\in(0,1]$ and $\tau\in (\tau_{\min},\tau_{\max})$.
By the definitions of $\mathcal{V}_{\phi}$ and $\mathcal{R}_{\psi}$, we note 
$$E\bigg[\frac{Y\phi(Af(H))}{p(A|H)}\bigg]=-E[\mathcal{V}_{\phi}(f,H)],$$
$$E\bigg[\frac{R\psi(Af(H),\eta)}{p(A|H)}\bigg]=E[\mathcal{R}_{\psi}(f, \eta, H)].$$

\noindent{\textbf{Proof of \cref{thm:s1}:}}\\
Step 1. 
We show that the value for the optimal solution, $f^*$, can be restricted within $[-1,1]$. That is, the following lemma holds.
\begin{lemma}
  \label[lemma]{lemma:1}
  For any feasible decision function $f(h)$, define $\widetilde f(h)=\min(\max(f(h),-1),1)$ as the truncated $f$ at -1 and 1. Then $\widetilde f$ is non-inferior to $f$.
\end{lemma}

\begin{proof}
  Note that $\psi(h,\eta)=\psi(1,\eta)$ for any $h>1$ and $\psi(h,\eta)=\psi(-1,\eta)$ for any $h<-1$. Thus, it follows from $\eta\le 1$ that $E[\mathcal{R}_{\psi}(\widetilde f,\eta,H)]=E[\mathcal{R}_{\psi}(f,\eta,H)]\le\tau$, so $\widetilde f$ is feasible. Moreover, it is easy to see that if $f(h)>1$, then $\widetilde f(h)=1$ so
  $$E\left[\frac{Y\phi(Af(H))}{p(A|X)}\bigg|H=h\right]=E[Y|A=-1,H=h](1+f(h))$$
  $$\ge 2E[Y|A=-1,H=h]=E\left[\frac{Y\phi(A\widetilde f(H))}{p(A|X)}\bigg|H=h\right].$$
 Similarly, if $f(h)<-1$, 
  $$E\left[\frac{Y\phi(Af(H))}{p(A|X)}\bigg|H=h\right]=E[Y|A=1,H=h](1-f(h))$$
  $$\ge 2E[Y|A=-1,H=h]=E\left[\frac{Y\phi(A\widetilde f(H))}{p(A|X)}\bigg|H=h\right].$$
 Since $f(h)=\widetilde f(h)$ when $|f(h)|\le 1$, we conclude
 $$E\left[\frac{Y\phi(A f(H))}{p(A|X)}\right]\ge E\left[\frac{Y\phi(A\widetilde f(H))}{p(A|X)}\right].$$
 Thus, \cref{lemma:1} holds.
\end{proof}

Step 2. We characterize the expression of $f^*(h)$ for $h\in \mathcal{M}$, which is the region where the beneficial treatment also reduces the risk.
\begin{lemma}
  \label[lemma]{lemma:2}
  For any feasible function $f$ with $|f|\le 1$, we define 
  $$\widetilde f(h)=f(h) \mathbb{I}(h\in {\cal M}^c) + \textrm{sign}(\delta_Y(h))\mathbb{I}(h\in \mathcal{M}).$$
 Then $\widetilde f$ is non-inferior to $f$.
\end{lemma}

\begin{proof}
For any $h\in\mathcal{M}$ with $\delta_Y(h)>0$ and $\delta_R(h)<0$, 
$\mathcal{R}_{\psi}(s,\eta,h)$ is minimized when $s\in [\eta, 1]$, while
$\mathcal{V}_{\phi}(s,h)$
is maximized at $s=1$.
Since $\widetilde f(h)=1$ for any $h\in\mathcal{M}$, we have $\mathcal{R}_{\psi}(\widetilde f(h),\eta,h)\le \mathcal{R}_{\psi}(f(h), \eta, h)$ and $\mathcal{V}_{\phi}(\widetilde f(h),h)\ge \mathcal{V}_{\phi}(f(h),h).$
The same inequalities hold for $h$ with $\delta_Y(h)<0$ and $\delta_R(h)>0$. In other words, they hold for any $h\in \mathcal{M}$.

Since $\widetilde f(h)=f(h)$ for $h\in \mathcal{M}^c$, 
 \begin{equation*}
    \begin{split}
      &E[\mathcal{R}_{\psi}(f,\eta,H)]-E[\mathcal{R}_{\psi}(\widetilde f,\eta,H)]
      =E[(\mathcal{R}_{\psi}(f,\eta,H)-\mathcal{R}_{\psi}(\widetilde f,\eta,H))\mathbb{I}(H\in\mathcal{M})]
      \geq0,
    \end{split}
  \end{equation*}
and
similarly, $E[\mathcal{V}_{\phi}(f,H)]-E[\mathcal{V}_{\phi}(\widetilde f,H)]\le 0$.
We conclude that $\widetilde f$ is non-inferior to $f$.
\end{proof}

Step 3. From steps 1 and 2, we can restrict $f$ to satisfy $|f|\le 1$ and $f(h)=\textrm{sign}(\delta_Y(h))$ for $h\in \mathcal{M}$. Furthermore, since $\tau_{\max}$ is the risk under decision rule $\textrm{sign}(\delta_{Y}(h))$, $\tau<\tau_{\max}$ implies that 
$$P(f(H)\neq \textrm{sign}(\delta_Y(H)),H\in\mathcal{M}^c)>0.$$
In this step, we wish to show that the optimal solution should attain the risk bound, i.e., $E[\mathcal{R}_{\psi}(f,\eta,H)]=\tau$. Otherwise, assume for some feasible solution $f$ such that $E[\mathcal{R}_{\psi}(f,\eta,H)]=\tau_0<\tau.$
Consider two sets
  $$\mathcal{D}^+=\{h\in\mathcal{H}:f(h)< 1,\delta_{Y}(h)>0\}\cap\mathcal{M}^c$$
  $$\mathcal{D}^-=\{h\in\mathcal{H}:f(h)> -1,\delta_{Y}(h)<0\}\cap\mathcal{M}^c,$$
then $P(\mathcal{D}^+)+P(\mathcal{D}^{-})>0$. Without loss of generality, we assume that $P(\mathcal{D}^+)>0$. We construct 
  $$\widetilde f(h)=
  \begin{cases}
    f(h),&\text{if}~h\notin \mathcal{D}^+\\
    \min\bigg(f(h)+\frac{\eta(\tau-\tau_0)}{MP(\mathcal{D}^+)},1\bigg),&\text{if}~h\in\mathcal{D}^+,
  \end{cases}
  $$
where $M$ is the bound for $R$.

For $h \in \mathcal{D}^+$, $\mathcal{V}_{\phi}(\widetilde f(h), h)>\mathcal{V}_{\phi}(f(h),h)$ since $1\geq\widetilde{f}(h)>f(h)$ and $\mathcal{V}_{\phi}(s, h)$ is an strictly increasing function of $s\in[-1,1]$ due to $\delta_Y(h)>0$. We immediately conclude
$E[\mathcal{V}_{\phi}(\widetilde f, H)]> E[\mathcal{V}_{\phi}(f, H)].$
On the other hand, $\mathcal{R}_{\psi}(s, \eta, h)$ is 
a piecewise linear function of $s$ with absolute value of slopes no larger than 
$$\frac{\max(E[R|H=h, A=1],E[R|H=h, A=-1])}{\eta}\le\frac{M}{\eta}.$$
Hence, it follows that 
  \begin{align*}
    E[\mathcal{R}_{\psi}(\widetilde f,\eta,H)]=&E[\mathcal{R}_{\psi}(\widetilde f,\eta,H)]-E[\mathcal{R}_{\psi}(f,\eta,H)]+E[\mathcal{R}_{\psi}(f,\eta,H)]\\
    \le&E[(\mathcal{R}_{\psi}(\widetilde f,\eta,H)-\mathcal{R}_{\psi}(f,\eta,H))\mathbb{I}(H\in\mathcal{D}^+)]+\tau_0\\
    \le&\frac{M}{\eta}\frac{\eta(\tau-\tau_0)}{MP(\mathcal{D}^+)}P(\mathcal{D}^+)+\tau_0=\tau.
  \end{align*}
As a result, $\widetilde f$ is superior to $f$ with a strictly larger objective function, a contradiction. In other words, the expected risk for the optimal solution should attain the bound.

With steps 1-3, we can restrict within the class
$${\cal W}=\left\{f: |f|\le 1, f(h)=\textrm{sign}(\delta_Y(h)) \textrm{ for } h\in \mathcal{M}, E[R_{\psi}(f,\eta, H)]=\tau\right\}$$
to find the optimal decision function.

Step 4. We derive the expression of the optimal function for $f$ by considering solving a Lagrange multiplier for the problem (\ref{single_stage_surrogate}):
\begin{equation}
  \label{langrange_function}
  \begin{split}
    \underset{f\in\mathcal{W}}{\max} -E\bigg[\frac{Y\phi(Af(H))}{p(A|H)}\bigg]-\nu \bigg(E\bigg[\frac{R\psi(Af(H))}{p(A|H)}\bigg]-\tau\bigg),
  \end{split}
\end{equation}
where $\nu$ is a constant to be determined by the constraint $\tau$ in ${\cal W}$. We maximize the above function by maximizing the conditional mean of the term in the expectation given $H=h$ for every $h$, which is given by
$$G(f)\equiv \mathcal{V}_{\phi}(f, h)-\nu {R}_{\psi}(f, \eta, h).$$
Note that $G(f)$ is now a function w.r.t. the value of $f$ given fixed $h$. Since $f\in [-1,1]$ and $f$ in ${\cal W}$ is already given for $h\in \mathcal{M}$, it suffices to examine that for $h\in \mathcal{M}^c$. In addition, $G(f)$ is a piecewise linear function for $f\in [-1,-\eta], (-\eta, 0]$, $(0, \eta]$ and $(\eta, 1]$. Thus,
the maximizer can only be achieved at points $-1,-\eta, 0, \eta$ and 1. 
Note that $R$ is assumed to be positive,
$G'(0)=-\nu/\eta (E[R|H=h,A=1]+E[R|H=h,A=-1])<0$ if $\nu>0$, or $>0$ if $\nu<0$. For $\nu=0$,
$G(0)=-E[Y|H=h,A=1]-E[Y|H=h,A=-1]= (G(1)+G(-1))/2$. Thus, the maximum for $G(f)$ can always be attained at $f$ which is not zero. In other words, we only need to compare the values at $f\in\{-1,-\eta,\eta,1\}$.
 
Simple calculation gives 
$$G(-1)=-2 E[Y|H=h,A=1]-\nu E[R|H=h, A=-1], $$
$$ G(-\eta)=-(1+\eta)E[Y|H=h,A=1]-(1-\eta)E[Y|H=h,A=-1]
-\nu E[R|H=h, A=-1], 
$$
$$G(\eta)=-(1-\eta)E[Y|H=h,A=1]-(1+\eta)E[Y|H=h,A=-1]-\nu E[R|H=h, A=1], $$
and
$$ G(1)=-2E[Y|H=h,A=-1]-\nu E[R|H=h, A=1].$$
When $\delta_Y(h)>0$ so $\delta_R(h)$ is also positive, it is straightforward to check $G(1)>G(\eta)$ and $G(-\eta)>G(-1)$. Note $G(1)-G(-\eta)=(1+\eta)\delta_Y(h)-\lambda \delta_R(h)$ so we immediately conclude
that the optimal value for $f$ should be 1, if $\delta_Y(h)>\lambda$, where $\lambda=\nu/(1+\eta)$, and it is $-\eta$ otherwise.
When $\delta_Y(h)\le 0$, we use the same arguments to obtain that the optimal value for $f$ should be -1 if $\delta_Y(h)>\lambda$, and it is $\eta$ otherwise. 
Therefore, the optimal function maximizing the Lagrange multiplier for any fixed $\nu$ (equivalently, $\lambda)$ has the same expression as (\ref{f_expression}). 

Next, we show that there is some positive $\lambda^*=\nu^*/(1+\eta)$ such that
$$E[R\mathbb{I}(Ag^*(H)>0)/p(A|H)]=E[R\mathbb{I}(Af^*(H)>0)/p(A|H)]=E[\mathcal{R}_{\psi}(f^*,\eta,H)]=\tau.$$
The first equality follows from the fact that $\text{sign}(g^*)=\text{sign}(f^*)$, and the second equality follows from that $R_{\psi}(s, \eta, h)$ is constant for any $s\in[-1,-\eta]$ and $s\in[\eta,1]$. To prove the existence of $\lambda^*$, we notice
  \begin{equation}
    \label{gamma_expression}
    \begin{split}
      \Gamma(\lambda)\equiv &E[R\mathbb{I}(Af^*(H)>0)/p(A|H)]\\
      =&E[E[R|H,A=1]\mathbb{I}(H\in\{\delta_Y(h)>0\}\cap\mathcal{M})]\\
      &+E[E[R|H, A=-1]\mathbb{I}(H\in\{\delta_Y(h)<0\}\cap\mathcal{M})]\\
      &+E[E[R|H,A=1]\mathbb{I}(H\in\{\delta_Y(h)/\delta_R(h)>\lambda,\delta_Y(h)>0\}\cap\mathcal{M}^c)]\\
      &+E[E[R|H,A=-1]\mathbb{I}(H\in\{\delta_Y(h)/\delta_R(h)<\lambda,\delta_Y(h)>0\}\cap\mathcal{M}^c)]\\
      &+E[E[R|H,A=-1]\mathbb{I}(H\in\{\delta_Y(h)/\delta_R(h)>\lambda,\delta_Y(h)<0\}\cap\mathcal{M}^c)]\\
      &+E[E[R|H,A=1]\mathbb{I}(H\in\{\delta_Y(h)/\delta_R(h)<\lambda,\delta_Y(h)<0\}\cap\mathcal{M}^c)]
    \end{split}
  \end{equation}
is a continuous function of $\lambda$ since $\delta_Y(H)/\delta_R(H)$ has a continuous density function. Furthermore, 
$\Gamma(\infty)=\tau_{\min}$, $\Gamma(0)=\tau_{\max}.$ Thus, there exists some $\lambda^*>0$ such that $\Gamma(\lambda^*)=\tau$.

Finally, for any $f$, based on steps 1-3, we have
$$-E\bigg[\frac{Y\phi(Af(H))}{p(A|H)}\bigg]\le \max_{f\in {\cal W}}\left\{
-E\bigg[\frac{Y\phi(Af(H))}{p(A|H)}\bigg]\right\}.$$
On the other hand, for any $f\in {\cal W}$, we have $E[R\psi(Af(H))/p(A|H)]=\tau$ and
$$-E\bigg[\frac{Y\phi(Af(H))}{p(A|H)}\bigg]-\nu^* \bigg(E\bigg[\frac{R\psi(Af(H))}{p(A|H)}\bigg]-\tau\bigg)$$
$$\le 
-E\bigg[\frac{Y\phi(Af^*(H))}{p(A|H)}\bigg]-\nu^* \bigg(E\bigg[\frac{R\psi(Af^*(H))}{p(A|H)}\bigg]-\tau\bigg).$$
The inequality above holds since $f^*$ maximizes the Lagrange function (\ref{langrange_function}) under multiplier $\nu^*$ for any $f\in\mathcal{W}$. Therefore, for any $f$ we have
$$E\bigg[\frac{Y\phi(Af(H))}{p(A|H)}\bigg]\ge 
E\bigg[\frac{Y\phi(Af^*(H))}{p(A|H)}\bigg].$$
In other words, $f^*$ given by (\ref{f_expression}) is the optimal solution to the problem (\ref{single_stage_surrogate}). We thus complete the proof of \cref{thm:s1}.

\subsection{Proof of \cref{fisher_consistency} for $T\geq2$}
\label{A:2}
 
Start from stage $T$. For any given $f_1, ..., f_{T-1}$, we consider $f_T$ maximizing
$$E\bigg[\frac{(\sum_{t=1}^T Y_t)\mathbb{I}(A_Tf_T(H_T)>0)}{p(A_T|H_T)}\frac{\prod_{t=1}^{T-1} \mathbb{I}(A_tf_t(H_t)>0)}{\prod_{t=1}^{T-1}
p(A_t|H_t)}\bigg]$$
subject to constraint
$$E\left[\frac{R_T\mathbb{I}(A_Tf_T(H_T)>0)}{p(A_T|H_T)}\frac{\prod_{t=1}^{T-1} \mathbb{I}(A_tf_t(H_t)>0)}{\prod_{t=1}^{T-1}
p(A_t|H_t)}\right]\le \tau_T.$$
Based on Theorem 1 in \citet{wang_learning_2018}, the optimal solution can be chosen as
\begin{equation*}
    \widetilde g_T^*(h)=\begin{cases}
      \textrm{sign}(\delta_{\widetilde Y}(h)),&\text{if~}h\in\widetilde {\mathcal{M}}\\
      1,&\text{if~}h\in\{\delta_{\widetilde Y}(h)/\delta_{\widetilde R}(h)>\widetilde\lambda,\delta_{\widetilde Y}(h)>0\}\cap\widetilde{\mathcal{M}}^c\\
      -1,&\text{if~}h\in\{\delta_{\widetilde Y}(h)/\delta_{\widetilde R}(h)<\widetilde\lambda,\delta_{\widetilde Y}(h)>0\}\cap\widetilde{\mathcal{M}}^c\\
      -1,&\text{if~}h\in\{\delta_{\widetilde Y}(h)/\delta_{\widetilde R}(h)>\widetilde\lambda,\delta_{\widetilde Y}(h)<0\}\cap\widetilde{\mathcal{M}}^c\\
      1,&\text{if~}h\in\{\delta_{\widetilde Y}(h)/\delta_{\widetilde R}(h)<\widetilde\lambda,\delta_{\widetilde Y}(h)<0\}\cap\widetilde{\mathcal{M}}^c,
    \end{cases}
\end{equation*}
where 
$$\delta_{\widetilde Y}(h)=(E[Q_T|H_T=h, A_T=1]-E[Q_T|H_T=h, A_T=-1])\frac{\prod_{t=1}^{T-1} \mathbb{I}(A_tf_t(H_t)>0)}{\prod_{t=1}^{T-1}
p(A_t|H_t)},$$
$$\delta_{\widetilde R}(h)=(E[R_T|H_T=h, A_T=1]-E[R_T|H_T=h, A_T=-1])\frac{\prod_{t=1}^{T-1} \mathbb{I}(A_tf_t(H_t)>0)}{\prod_{t=1}^{T-1}
p(A_t|H_t)},
$$
$\widetilde {\cal M}=\left\{h: \delta_{\widetilde Y}(h) \delta_{\widetilde R}(h)<0\right\}, $ and 
and $\widetilde\lambda$ satisfies 
$$E\left[\frac{R_T\mathbb{I}(A_T\widetilde g_T^*(H_T)>0)}{p(A_T|H_T)}\frac{\prod_{t=1}^{T-1} \mathbb{I}(A_tf_t(H_t)>0)}{\prod_{t=1}^{T-1}p(A_t|H_t)}\right]=\tau_T.$$
Note that for $h$ in the support of $H_t$ where $A_tf_t(H_t)\le 0$ for any $t=1,...,T-1$, $\widetilde g_T^*(h)$ can be any arbitrary value since it does not affect the value and risk expectations. On the other hand, recall that $g^*_T(h)$ is the function maximizing
$$E\bigg[\frac{(\sum_{t=1}^T Y_t)\mathbb{I}(A_Tf_T(H_T)>0)}{p(A_T|H_T)}\bigg]$$
subject to constraint
$$E\left[\frac{R_T\mathbb{I}(A_Tf_T(H_T)>0)}{p(A_T|H_T)}\right]\le \tau_T.$$
Based on Theorem 1 in \citet{wang_learning_2018}, $g_T^*$ is given as
\begin{equation*}
  g_T^*(h)=\begin{cases}
      \textrm{sign}(\delta_{Q_T}(h)),&\text{if~}h\in \mathcal{M}\\
      1,&\text{if~}h\in\{\delta_{Q_T}(h)/\delta_{R_T}(h)>\lambda^*,\delta_{Q_T}(h)>0\}\cap\mathcal{M}^c\\
      -1,&\text{if~}h\in\{\delta_{Q_T}(h)/\delta_{R_T}(h)<\lambda^*,\delta_{Q_T}(h)>0\}\cap\mathcal{M}^c\\
      -1,&\text{if~}h\in\{\delta_{Q_T}(h)/\delta_{R_T}(h)>\lambda^*,\delta_{Q_T}(h)<0\}\cap\mathcal{M}^c\\
      1,&\text{if~}h\in\{\delta_{Q_T}(h)/\delta_{R_T}(h)<\lambda^*,\delta_{Q_T}(h)<0\}\cap\mathcal{M}^c,
    \end{cases}
\end{equation*}
where ${\cal M}=\left\{h: \delta_{Q_T}(h) \delta_{R_T}(h)<0\right\}$, and $\lambda^*$ satisfies 
$$E\left[\frac{R_T\mathbb{I}(A_T\widetilde g_T^*(H_T)>0)}{p(A_T|H_T)}\frac{\prod_{t=1}^{T-1} \mathbb{I}(A_tf_t(H_t)>0)}{\prod_{t=1}^{T-1}
p(A_t|H_t)}\right]=\tau_T.$$

From the above two expressions, it is clear that on the set when $A_tf_t(H_t)>0$ for all $t=1,..., T-1$, $\widetilde g_T^*(h)$ takes the same form as the solution as $ g^*_T(h)$. Furthermore, due to the conclusion in Section \ref{A:0}, we have 
$$E\left[\frac{R_T\mathbb{I}(A_Tf_T(H_T)>0)}{p(A_T|H_T)}\frac{\prod_{t=1}^{T-1} \mathbb{I}(A_tf_t(H_t)>0)}{\prod_{t=1}^{T-1}
p(A_t|H_t)}\right]=E\left[\frac{R_T\mathbb{I}(A_Tf_T(H_T)>0)}{p(A_T|H_T)}\right].$$
Thus, we conclude that $\widetilde\lambda$ can be chosen to be the same as $\lambda^*$ so $\widetilde g_T^*(h)$ can be chosen to be exactly the same as $g_T^*(h)$. In other words,
$$\mathcal{V}(f_1,...,f_{T-1}, g_T^*)\ge \mathcal{V}(f_1,...,f_T)$$
and $g_T^*$ satisfies
$$E\left[\frac{R_T\mathbb{I}(A_Tg_T^*(H_T)>0)}{p(A_T|H_T)}\frac{\prod_{t=1}^{T-1} \mathbb{I}(A_tf_t(H_t)>0)}{\prod_{t=1}^{T-1}
p(A_t|H_t)}\right]=\tau_T.$$
By \cref{thm:s1}, both $g_T^*$ and $f_{T}^*$ have the same signs. Therefore, 
$$\mathcal{V}(f_1,...,f_{T-1}, f_{T}^*)\ge \mathcal{V}(f_1,...,f_T)$$
and $f_{T}^*$ satisfies
$$E\left[\frac{R_T\mathbb{I}(A_Tf_{T}^*(H_T)>0)}{p(A_T|H_T)}\right]=\tau_T.$$
Once $f_{T}^*$ is determined, we consider the $T-1$ stage. Now the original problem (\ref{ow_prob}) becomes
\begin{eqnarray*}
 \max  {\cal V}(f_1,...,f_{T-1}, f_{T}^*)=&E\bigg[\frac{(\sum_{t=1}^TY_t)\mathbb{I}(A_Tf_{T}^*(H_T)>0)}{p(A_T|H_T)}\frac{\prod_{t=1}^{T-1}\mathbb{I}(A_tf_t(H_t)>0)}{\prod_{t=1}^{T-1}p(A_t|H_t)}\bigg]&\\
 \text{subject~to~~}&E\bigg[\frac{R_t\mathbb{I}(A_tf_t(H_t)>0)}{p(A_t|H_t)}\bigg]\le \tau_t,&t=1,...,T-1.
\end{eqnarray*}
We repeat the same arguments as for stage $T$ as before, to conclude
$$\mathcal{V}(f_1,...,f_{T-1}^*, f_{T}^*)\ge \mathcal{V}(f_1,...,f_{T-1}, f_T)$$
and $f_{T-1}^*$ satisfies
$$E\left[\frac{R_{T-1}\mathbb{I}(A_{T-1}f_{T-1}^*(H_{T-1})>0)}{p(A_{T-1}|H_{T-1})}\right]=\tau_{T-1}.$$

We continue this proof till $t=1$ so conclude that $(f_{1}^*, ..., f_{T}^*)$ maximizes the multistage value and satisfies the constraints over all the stages. The above arguments also show that $f_t^*$ has the same sign as $g_{t}^*$. \cref{fisher_consistency} thus holds.


\section{Proof of \cref{risk_bound}}

Instead, we prove a more general version of \cref{risk_bound}. 

\begin{theorem}
  \label[theorem]{thm:s2}
  In addition to the conditions in \cref{fisher_consistency}, suppose that \cref{GNE} holds and $H_t$ takes value from a compact subset of $\mathbb{R}^{d_t}$ for $t=1,...,T$. Let $(\tau_1,...,\tau_T)$ and $(\delta_{0,1},...,\delta_{0,T})$ denote the constraints and corresponding constants in \cref{GNE}. Let $\delta_t>0$, $1\le x_t$, $0<\theta_{1,t}$, $0<\theta_{2,t}$, $0<\nu_{1,t}<2$, $0<\nu_{2,t}\le2$ for $t=1,...,T$. Give positive parameter $\lambda_{n,t}\rightarrow0$ and $\sigma_{n,t}\rightarrow\infty$, and let
  \begin{align*}
    \xi_{n,t}^{(1)}=c\bigg(&\frac{2M}{c_1}\sqrt{\frac{M}{c_1\lambda_{n,t}}+\sigma_{n,t}^{d_t}}+\\
    &\lambda_{n,t}\bigg(\frac{M}{c_1\lambda_{n,t}}+\sigma_{n,t}^{d_t}\bigg)\bigg)n^{-1/2}(\sigma_{n,t}^{(1-\nu_{1,t}/2)(1+\theta_{1,t})d_t/2}+2\sqrt{2x_t}+2x_t/\sqrt{n}),
  \end{align*}
  $\xi_{n,t}^{(2)}=c(\lambda_{n,t}\sigma_{n,t}^{d_t}+\sigma_{n,t}^{-\alpha_td_t})$ and $\xi_{n,t}=\xi_{n,t}^{(1)}+\xi_{n,t}^{(2)}$. In addition, let
  \begin{equation*}
    \epsilon_{n,t}'=\delta_{t}+C_{1,t}\sigma_{n,t}^{-\alpha_td_t}\eta_{n,t}^{-1}+C_{3,t}n^{-1/2}\sigma_{n,t}^{(1-\nu_{2,t}/2)(1+\theta_{2,t})d_t/2}\bigg(\frac{M}{c_1\lambda_{n,t}}+\sigma_{n,t}^{d_t}\bigg)^{\nu_{2,t}/4}\eta_{n,t}^{-\nu_{2,t}/2}
  \end{equation*}
  and
  $$h_t(n,x_t)=2\exp\bigg(-\frac{2n\delta_{0,t}^2c_1^2}{M^2}\bigg)+2\exp\bigg(-\frac{n\delta_{t}^2c_1^2}{2M^2}\bigg)+\exp(-x_t).$$
  Then for any $n\geq1$ and $(\lambda_{n,t},\sigma_{n,t},\eta_{n,t})$ such that $$C_{1,t}\sigma_{n,t}^{-\alpha_td_t}\eta_{n,t}^{-1}\le \delta_{0,t},$$
  $$C_{2,t}\sigma_{n,t}^{(1-\nu_{1,t}/2)(1+\theta_{1,t})d_t}\le1,$$ 
  $\epsilon_{n,t}'<2\delta_{0,t}$, and $x_t\geq1$, with probability at least $1-\sum_{t=1}^Th_t(n,x_t)$, we have
  \begin{equation}
    \label{value_error_bound}
    |\mathcal{V}(\widehat{f}_{1},...,\widehat{f}_{T})-\mathcal{V}(g_1^*,...,g_T^*)|\le \sum_{t=1}^T(c_1/5)^{1-t}(\xi_{n,t}+(T-t+1)M\eta_{n,t}+2c\epsilon'_{n,t}).
  \end{equation}
  Moreover, with probability at least $1-h_t(n,x_t)$ the risk induced by $\widehat{f}_{t}$ satisfies
  \begin{equation}
    \begin{split}
      \label{risk_error_bound}
      E\bigg[&\frac{R_t\mathbb{I}(A_t\widehat{f}_{t}(H_t)>0)}{p(A_t|H_t)}\bigg]\le \tau_t+\delta_{t}+C_{3,t}\sigma_{n,t}^{(1-\nu_{2,t}/2)(1+\theta_{2,t})d_t/2}\bigg(\frac{M}{c_1\lambda_{n,t}}+\sigma_{n,t}^{d_t}\bigg)^{\nu_{2,t}/4}\eta_{n,t}^{-\nu_{2,t}/2}.
    \end{split}
  \end{equation}
  Here, $c$ in front of $\xi_{n,t}^{(1)}$ is a positive constant only depends on $(\nu_{1,t},\theta_{1,t},d_t)$, $c$ in front of $\xi_{n,t}^{(2)}$ is a positive constant only depends on $(\alpha_t,d_t,K_t,M)$ and $c$ of $\epsilon_{n,t}'$ is a positive constant only depends on $(\tau_t,\delta_{0,t})$. $C_{1,t}$ is a positive constant depend on $(\alpha_t,d_t,K_t,M)$, $C_{2,t}$ is a positive constant depends on $(\nu_{1,t},\theta_{1,t},d_t)$, $C_{3,t}$ a positive constant depends on $(\nu_{2,t},\theta_{2,t},d_t,c_1,M)$.
\end{theorem}

\cref{risk_bound} can be obtained from \cref{thm:s2} by setting $\theta_t=\theta_{1,t}=\theta_{2,t}$, $\nu_t=\nu_{1,t}=\nu_{2,t}$, $x_t=\sigma_{n,t}^{(1-\nu_{t}/2)(1+\theta_{t})d_t}$, and $C_i=\sup_t C_{i,t}$ for $i=1,2,3$.
We first prove \cref{thm:s2} by for $T=1$ and then extend the result to $T\ge 2$. 

\subsection{Proof of \cref{thm:s2} for $T=1$}

Since $T=1$, we omit the subscript for the stage in this subsection, so all the notations are the same as in Section A.1.2. Since $\tau$ is necessary for the proof, we use $f^*_{\tau}$ to refer to $f^*$ that solves (\ref{single_stage_surrogate}) corresponding to $\tau$ and shifting parameter $\eta_n$.

\subsubsection{Excessive risk}
\label{sec1:1}
In this section, we prove some preliminary lemmas. \cref{lemma:3} shows that the regret from the optimal decision function solving the original problem (\ref{single_stage_original}) is bounded by the regret from the one solving the surrogate problem (\ref{single_stage_surrogate}), plus an additional biased term of order $O(\eta_n)$. \cref{lemma:4} shows that the optimal value using the surrogate loss is Lipschitz continuous with respect to $\tau$. 

\begin{lemma}
  \label[lemma]{lemma:3}
  Under the condition of \cref{thm:s1}, for any $f:\mathcal{H}\rightarrow\mathbb{R}$ and any $\eta_n\in(0,1]$, we have
  $$\mathcal{V}(f_{\tau}^*)-\mathcal{V}(f)\le E[\mathcal{V}_{\phi}(f_{\tau}^*, H)]-E[\mathcal{V}_{\phi}(f, H)]+M\eta_n.$$
\end{lemma}

\begin{proof}
  \cref{thm:s1} shows that $f_{\tau}^*$ must have expression (\ref{f_expression}) almost surely. 
  Let $\widetilde {\cal V}(f,h)=\mathbb{I}(f(h)>0)E[Y|H=h, A=1]+\mathbb{I}(f(h)\le 0)E[Y|H=h, A=-1]$.
For any $h\in\{\delta_Y(h)>0\}$, we consider the following 6 scenarios:
  \begin{enumerate}
    \item When $h\in\mathcal{M}$, $f_{\tau}^*(h)=1$ and $f(h)>0$, we have $\widetilde{\mathcal{V}}(f_{\tau}^*,h)-\widetilde{\mathcal{V}}(f,h)=0$ and 
    $$\mathcal{V}_{\phi}(f_{\tau}^*,h)-\mathcal{V}_{\phi}(f,h)=
    \begin{cases}
      (1-f(h))\delta_Y(h),&f(h)\le1\\
      (f(h)-1)m_Y(h,-1),&f(h)>1,
    \end{cases}  
    $$
    which implies $\mathcal{V}_{\phi}(f_{\tau}^*,h)-\mathcal{V}_{\phi}(f,h)\geq\widetilde{\mathcal{V}}(f_{\tau}^*,h)-\widetilde{\mathcal{V}}(f,h)$.
    \item When $h\in\mathcal{M}$, $f_{\tau}^*(h)=1$ and $f(h)\le 0$, we have $\widetilde{\mathcal{V}}(f_{\tau}^*,h)-\widetilde{\mathcal{V}}(f,h)=\delta_Y(h)$ and 
    $$\mathcal{V}_{\phi}(f_{\tau}^*,h)-\mathcal{V}_{\phi}(f,h)=
    \begin{cases}
      (1-f(h))\delta_Y(h),&f(h)\geq-1\\
      2\delta_Y(h)+(-f(h)-1)m_Y(h,1),&f(h)<-1,
    \end{cases}  
    $$
    which implies $\mathcal{V}_{\phi}(f_{\tau}^*,h)-\mathcal{V}_{\phi}(f,h)\geq\widetilde{\mathcal{V}}(f_{\tau}^*,h)-\widetilde{\mathcal{V}}(f,h)$.
    \item When $h\in\mathcal{M}^c$, $f_{\tau}^*(h)=1$ and $f(h)>0$, we have $\widetilde{\mathcal{V}}(f_{\tau}^*,h)-\widetilde{\mathcal{V}}(f,h)=0$ and
    $$\mathcal{V}_{\phi}(f_{\tau}^*,h)-\mathcal{V}_{\phi}(f,h)=
    \begin{cases}
      (1-f(h))\delta_Y(h),&f(h)\le1\\
      (f(h)-1)m_Y(h,-1),&f(h)>1,
    \end{cases}  
    $$
    in which case $\mathcal{V}_{\phi}(f_{\tau}^*,h)-\mathcal{V}_{\phi}(f,h)\geq\widetilde{\mathcal{V}}(f_{\tau}^*,h)-\widetilde{\mathcal{V}}(f,h)$.
    \item When $h\in\mathcal{M}^c$, $f_{\tau}^*(h)=1$ and $f(h)\le 0$, we have $\widetilde{\mathcal{V}}(f_{\tau}^*,h)-\widetilde{\mathcal{V}}(f,h)=\delta_Y(h)$ and
    $$\mathcal{V}_{\phi}(f_{\tau}^*,h)-\mathcal{V}_{\phi}(f,h)=
    \begin{cases}
      (1-f(h))\delta_Y(h),&f(h)\geq-1\\
      2\delta_Y(h)+(-f(h)-1)m_Y(h,1),&f(h)<-1,
    \end{cases}  
    $$
    in which case $\mathcal{V}_{\phi}(f_{\tau}^*,h)-\mathcal{V}_{\phi}(f,h)\geq\widetilde{\mathcal{V}}(f_{\tau}^*,h)-\widetilde{\mathcal{V}}(f,h)$.
    \item When $h\in\mathcal{M}^c$, $f_{\tau}^*(h)=-\eta_n$ and $f(h)>0$, we have $\widetilde{\mathcal{V}}(f_{\tau}^*,h)-\widetilde{\mathcal{V}}(f,h)=-\delta_Y(h)$ and
    $$\mathcal{V}_{\phi}(f_{\tau}^*,h)-\mathcal{V}_{\phi}(f,h)=
    \begin{cases}
      -f(h)\delta_Y(h)-\eta_n\delta_Y(h),&f(h)\le1\\
      -\delta_Y(h)-\eta_n\delta_Y(h)+(f(h)-1)m_Y(h,-1),&f(h)>1.
    \end{cases}  
    $$
    Thus, $\mathcal{V}_{\phi}(f_{\tau}^*,h)-\mathcal{V}_{\phi}(f,h)\geq-\delta_Y(h)-\eta_n\delta_Y(h)=\widetilde{\mathcal{V}}(f_{\tau}^*,h)-\widetilde{\mathcal{V}}(f,h)-\eta_n\delta_Y(h)$.
    \item When $h\in\mathcal{M}^c$, $f_{\tau}^*(h)=-\eta_n$ and $f(h)\le 0$, we have $\widetilde{\mathcal{V}}(f_{\tau}^*,h)-\widetilde{\mathcal{V}}(f,h)=0$ and
    $$\mathcal{V}_{\phi}(f_{\tau}^*,h)-\mathcal{V}_{\phi}(f,h)=
    \begin{cases}
     -f(h)\delta_Y(h)-\eta_n\delta_Y(h),&f(h)\geq-1\\
      (f(h)-1)m_Y(h,1)+(1-\eta_n)\delta_Y(h),&f(h)<-1.
    \end{cases} 
    $$
    Thus, $\mathcal{V}_{\phi}(f_{\tau}^*,h)-\mathcal{V}_{\phi}(f,h)\geq-\eta_n\delta_Y(h)=\widetilde{\mathcal{V}}(f_{\tau}^*,h)-\widetilde{\mathcal{V}}(f,h)-\eta_n\delta_Y(h)$.
  \end{enumerate}
  \vspace{-5pt}
  Hence, by combining all these cases, we conclude that
  $$\widetilde{\mathcal{V}}(f_{\tau}^*,h)-\widetilde{\mathcal{V}}(f,h)\le \mathcal{V}_{\phi}(f_{\tau}^*,h)-\mathcal{V}_{\phi}(f,h)+M\eta_n$$
  for any $\eta_n\in(0,1]$ and any decision function $f$. The same argument holds for any $h$ such that $\delta_Y(h)<0$. Consequently, since ${\cal V}(f)=E[\widetilde{\mathcal{V}}(f,H)]$,we have 
  $$\mathcal{V}(f_{\tau}^*)-\mathcal{V}(f)\le E[\mathcal{V}_{\phi}(f_{\tau}^*, H)]-E[\mathcal{V}_{\phi}(f, H)]+M\eta_n.$$
\end{proof}

\begin{lemma}
  \label[lemma]{lemma:4}
  For any $\delta>0$ and $\tau$ such that $[\tau-2\delta,\tau+2\delta]\subseteq(\tau_{\min},\tau_{\max})$, 
  $E[\mathcal{V}_{\phi}(f_{\tau}^*, H)]$, as a function of $\tau$, is Lipschitz continuous at $\tau$. 
\end{lemma}

\begin{proof}
Let $\tau_1=\tau$ and $\tau_2$ be any number in $[\tau-2\delta, \tau+2\delta]$. Without loss of generality, we assume $\tau_2<\tau_1$. We also let $f_1^*$ and $f_2^*$ be the optimal decision functions solving (\ref{single_stage_surrogate}) for $\tau_1$ and $\tau_2$, respectively, and their corresponding $\lambda^*$'s values are denoted as $\lambda_1$ and $\lambda_2$. According to (\ref{f_expression}), it is easy to verify that 
  \begin{equation*}
    \begin{split}
      &E[\mathcal{V}_{\phi}(f_1^*, H)]-E[\mathcal{V}_{\phi}(f_2^*, H)]\\
      =&E\bigg[(1+\eta_n)\delta_Y(H)\mathbb{I}\bigg(\lambda_1\le\frac{\delta_Y(H)}{\delta_R(H)}\le \lambda_2 \bigg)\mathbb{I}(\delta_Y(H)>0)\mathbb{I}(H\in\mathcal{M}^c)\bigg]\\
      &-E\bigg[(1+\eta_n)\delta_Y(H)\mathbb{I}\bigg(\lambda_1\le\frac{\delta_Y(H)}{\delta_R(H)}\le \lambda_2 \bigg)\mathbb{I}(\delta_Y(H)<0)\mathbb{I}(H\in\mathcal{M}^c)\bigg]\\
      =&(1+\eta_n)E\bigg[|\delta_Y(H)|\mathbb{I}\bigg(\lambda_1\le\frac{\delta_Y(H)}{\delta_R(H)}\le \lambda_2\bigg)\mathbb{I}(H\in\mathcal{M}^c)\bigg].
    \end{split}
  \end{equation*}
  On the other hand 
  \begin{equation*}
    \begin{split}
      \tau_1-\tau_2
      =&E[\mathcal{R}_{\psi}(f_1^*,\eta_n,H)\mathbb{I}(H\in\mathcal{M}^c)]-E[\mathcal{R}_{\psi}(f_2^*,\eta_n,H)\mathbb{I}(H\in\mathcal{M}^c)]\\
      =&E\bigg[\delta_R(H)\mathbb{I}\bigg(\lambda_1\le\frac{\delta_Y(H)}{\delta_R(H)}\le \lambda_2 \bigg)\mathbb{I}(\delta_Y(H)>0)\mathbb{I}(H\in\mathcal{M}^c)\bigg]\\
      &-E\bigg[\delta_R(H)\mathbb{I}\bigg(\lambda_1\le\frac{\delta_Y(H)}{\delta_R(H)}\le \lambda_2 \bigg)\mathbb{I}(\delta_Y(H)<0)\mathbb{I}(H\in\mathcal{M}^c)\bigg]\\
      =&E\bigg[|\delta_R(H)|\mathbb{I}\bigg(\lambda_1\le\frac{\delta_Y(H)}{\delta_R(H)}\le \lambda_2\bigg)\mathbb{I}(H\in\mathcal{M}^c)\bigg].
    \end{split}
  \end{equation*}
The above two equations imply that 
  \begin{equation*}
    \begin{split}
      &E[\mathcal{V}_{\phi}(f_1^*, H)]-E[\mathcal{V}_{\phi}(f_2^*, H)]\\
      =&(1+\eta_n)E\bigg[\frac{|\delta_Y(H)|}{|\delta_R(H)|}|\delta_R(H)|\mathbb{I}\bigg(\lambda_1\le\frac{\delta_Y(H)}{\delta_R(H)}\le \lambda_2\bigg)\mathbb{I}(H\in\mathcal{M}^c)\bigg]\\
      \le&2\lambda_2E\bigg[|\delta_R(H)|\mathbb{I}\bigg(\lambda_1\le\frac{\delta_Y(H)}{\delta_R(H)}\le \lambda_2\bigg)\mathbb{I}(H\in\mathcal{M}^c)\bigg]\\
      \le&2\lambda_2(\tau_1-\tau_2).
    \end{split}
  \end{equation*}
  The lemma holds since $\lambda_2$ is no larger than $\lambda^*$-value corresponding to $\tau-2\delta$.
\end{proof}

\subsubsection{Approximation error in RKHS}
\label{sec1:2}
In this section, we prove a series of lemmas to quantify the approximation error of $\widehat{f}$, where $\widehat{f}$ denotes the solution of the single-stage empirical problem 
\begin{equation}
  \label{single_stage_empirical}
  \begin{split}
    \argmin_{f\in\mathcal{G}}~&\frac{1}{n}\sum_{i=1}^nY_i\frac{\phi(A_if(H_i))}{p(A_i|H_i)}+\lambda_n\|f\|_{\mathcal{G}}^2\\
    \text{subject to }&\frac{1}{n}\sum_{i=1}^nR_i\frac{\psi(A_if(H_i),\eta_n)}{p(A_i|H_i)}\le\tau,
  \end{split}
\end{equation}
resulted from restricting $\widehat{f}$ to be a function in Gaussian RKHS $\mathcal{G}$. 

The section is organized as follows: \cref{lemma:5} provides an approximation of $f^*_{\tau}$ using functions in Gaussian RKHS; in \cref{lemma:6}, we quantify the difference of risk under shifted ramp loss between $f^*_{\tau}$ and its approximation in $\mathcal{G}$; in \cref{lemma:7}, we show that $\|\widehat{f}\|_{\mathcal{G}}$ is bounded with high probability; in \cref{lemma:8}, we show that $\mathcal{A}_n(\tau)$ defined later changes continuously w.r.t. $\tau$ and the approximation error is given \cref{lemma:9}.

For convenience, we define 
$$\mathfrak{L}_{\phi}(f)=Y\frac{\phi(Af(H))}{p(A|H)},\quad \mathbb{P}_n[\mathfrak{L}_{\phi}(f)]=\frac{1}{n}\sum_{i=1}^nY_i\frac{\phi(A_if(H_i))}{p(A_i|H_i)},$$
$$\mathfrak{R}_{\psi}(f,\eta_n)=R\frac{\psi(Af(H),\eta_n)}{p(A|H)},\quad \mathbb{P}_n[\mathfrak{R}_{\psi}(f,\eta_n)]=\frac{1}{n}\sum_{i=1}^nR_i\frac{\psi(A_if(H_i),\eta_n)}{p(A_i|H_i)},$$
where $\mathbb{P}_n$ denotes the empirical distribution. Recall $\mathcal{G}=\mathcal{G}(\sigma_n)$ denote the Gaussian Reproducing Kernel Hilbert Space (RKHS) with bandwidth $\sigma_n^{-1}$, we define
$$\mathcal{A}(\tau)=\bigg\{f\in\mathcal{G}\bigg|E[\mathfrak{R}_{\psi}(f,\eta_n)]\le\tau\bigg\},$$
$$\mathcal{A}_{n}(\tau)=\bigg\{f\in\mathcal{G}\bigg|\mathbb{P}_n[\mathfrak{R}_{\psi}(f,\eta_n)]\le\tau\bigg\},$$
where $\mathcal{A}_{n}(\tau)$ is equivalent to the definition of the feasible region of the empirical problem with $T=1$.
We also define $\bar{\mathcal{H}}=3\mathcal{H}$,
$$
\bar{\delta}_Y(h)=
\begin{cases}
    \delta_Y(h),&\text{if}~|h|\le1\\
    \delta_Y(h/|h|),&\text{if}~|h|>1,
\end{cases}\quad
\bar{\delta}_R(h)=
\begin{cases}
    \delta_R(h),&\text{if}~|h|\le1\\
    \delta_R(h/|h|),&\text{if}~|h|>1.
\end{cases}
$$
Following the notation in \cref{GNE} and omitting the index $t$ when $T=1$, we define
$$\bar{H}_{a,b,\tau}=\bigg\{h\in\bar{\mathcal{H}}:a\bar{\delta}_Y(h)>0,b(\bar{\delta}_Y(h)-\lambda^*\bar{\delta}_R(h))>0\bigg\}$$
so $\Delta_{\tau}(h)=\sum_{a,b\in \{-1,1\}}\text{dist}(h,\bar{\mathcal{H}}/\bar{H}_{a,b,\tau})\mathbb{I}(h\in \bar{H}_{a,b,\tau}),$
where $a,b\in\{-1,1\}$ and $\lambda^*$ is the multiplier associated witn $f^*_{\tau}$ so function $\Delta$ depends on $\tau$, and 
$$
\bar{f}_{\tau}(h)=
\begin{cases}
    1,&\text{if}~h\in\bar{H}_{1,1}\\
    \eta_n,&\text{if}~h\in\bar{H}_{-1,1}\\
    -1,&\text{if}~h\in\bar{H}_{-1,-1}\\
    -\eta_n,&\text{if}~h\in\bar{H}_{1,-1}\\
    0,&\text{otherwise}.
\end{cases}
$$
Thus, $\bar{f}_{\tau}$ can be viewed as an extension of $f_{\tau}^*$ from 
of ${\cal H}$ to $\bar{\mathcal{H}}$. Our first lemma is to determine the pointwise approximation error of $f_{\tau}^*$ using the RKHS. Note that we assumed $\mathcal{H}$ is a compact subset of $\mathcal{G}$, without loss of generality, from now on we assume that $\mathcal{H}\subseteq\mathcal{B}_{\mathcal{G}}$ where $\mathcal{B}_{\mathcal{G}}$ denotes the unit ball in $\mathcal{G}$. We use $d$ to denote the dimension of $\mathcal{H}$.

\begin{lemma}
  \label[lemma]{lemma:5}
  Let $\check{f}_{\tau}=(\sigma_n^2/\pi)^{d/4}\bar{f}_{\tau}$ and define linear operator 
  $$V_{\sigma}\check{f}(x)=\frac{(2\sigma)^{d/2}}{\pi^{d/4}}\int_{\mathbb{R}^d}e^{-2\sigma^2\|x-y\|_2^2}\check{f}(y)dy.$$
  Then, we have 
  \begin{equation}
    \label{boundness_vg}
    \|V_{\sigma_n}\check{f}_{\tau}\|_{\mathcal{G}}^2\le c\sigma_n^d,
  \end{equation}
  and
  \begin{equation}
    \label{projection_error}
    |V_{\sigma_n}\check{f}_{\tau}(h)-f_{\tau}^*(h)|\le 8e^{-\sigma_n^2\Delta_{\tau}(h)^2/2d}.
  \end{equation}
  holds for all $h\in\mathcal{H}$, where $c$ is a constant depending on dimension $d$.
\end{lemma}
\begin{remark}
  Note that $V_{\sigma}\check{f}_{\tau}$ is an approximation of $f_{\tau}^*$ in $\mathcal{G}$. Thus, \cref{lemma:5} quantifies the distance between the true optimal decision function and its approximation at each point $h$.
\end{remark}
\begin{proof}
  Since $\mathcal{H}\subset\mathcal{B}_{\mathcal{G}}$ and $\check{f}_{\tau}=(\sigma_n^2/\pi)^{d/4}\bar{f}_{\tau}$, we can easily obtain that the $L_2$ norm of $\check{f}_{\tau}$ satisfies
  $$\|\check{f}_{\tau}\|^2_{2}\le \text{Vol}(d)^2\bigg(\frac{81}{\pi}\bigg)^{d/2}\sigma_n^{d}=c\sigma_n^{d},$$
  where $\text{Vol}(d)$ is the volume of $\mathcal{B}_{\mathcal{G}}$ (see equation (25) from \citet{steinwart_fast_2007}) so $c$ is a positive constant depends only on $d$. Moreover, it has been shown in \citet{steinwart_explicit_2006} that $V_{\sigma}:L^2(\mathbb{R}^d)\rightarrow\mathcal{G}(\sigma)$ is an isometric isomorphism and the inequality above implies 
  $$\|V_{\sigma_n}\check{f}_{\tau}\|_{\mathcal{G}}^2=\|\check{f}_{\tau}\|_{2}^2\le c\sigma_n^d.$$
  \par 
  We now start proving (\ref{projection_error}). By the construction of $\bar{f}_{\tau}$, it is straightforward to see that $\bar{f}_{\tau}(h)=f_{\tau}^*(h)$ for all $h\in\mathcal{H}$. Note for any $h\in H_{1,1}$ we have 
  {\footnotesize
  \begin{equation*}
    \begin{split}
      V_{\sigma_n}\check{f}_{\tau}(h)=&\bigg(\frac{2\sigma_n^2}{\pi}\bigg)^{d/2}\int_{\mathbb{R}^d}e^{-2\sigma_n^2\|h-y\|^2}\bar{f}_{\tau}(y)dy\\
      =&\bigg(\frac{2\sigma_n^2}{\pi}\bigg)^{d/2}\bigg[\int_{B(h,\Delta_{\tau}(h))}e^{-2\sigma_n^2\|h-y\|^2}\bar{f}_{\tau}(y)dy+\int_{\mathbb{R}^d/B(h,\Delta_{\tau}(h))}e^{-2\sigma_n^2\|h-y\|^2}\bar{f}_{\tau}(y)dy\bigg],
    \end{split}
  \end{equation*}
  }where $B(h,r)$ is the ball of radius $r$ centering at $h$ under Euclidean norm. By Lemma 4.1 in \citet{steinwart_fast_2007}, the construction of $\bar{f}_{\tau}$ guarantees that $B(h,\Delta_{\tau}(h))\subseteq \bar{H}_{1,1}$ for all $h\in H_{1,1}$. It then follows that for any $h\in H_{1,1}$
  \begin{align*}
      |V_{\sigma_n}\check{f}_{\tau}-f_{\tau}^*(h)|
      =&|V_{\sigma_n}\check{f}_{\tau}(h)-\bar{f}_{\tau}(h)|\\
      =&\bigg|V_{\sigma_n}\check{f}_{\tau}(h)-\bigg(\frac{2\sigma_n^2}{\pi}\bigg)^{d/2}\int_{\mathbb{R}^d}e^{-2\sigma_n^2\|h-y\|^2}dy\bigg|\\
      =&\bigg|\bigg(\frac{2\sigma_n^2}{\pi}\bigg)^{d/2}\int_{\mathbb{R}^d/B(h,\Delta_{\tau}(h))}e^{-2\sigma_n^2\|h-y\|^2}[\bar{f}_{\tau}(y)-1]dy\bigg|\\
      {\le}&2P(|U|\geq\Delta_{\tau}(h)),
  \end{align*}
  where the last step uses the fact that $|\bar{f}_{\tau}-1|_{\infty}\le2$, and $U$ follows the spherical Gaussian distribution on $\mathbb{R}^d$ with parameter $\sigma_n$. Following inequality (3.5) from \citet{ledoux_probability_1991}, we have 
  $$P(|U|\geq\Delta_{\tau}(h))\le 4e^{-\sigma_n^2\Delta_{\tau}^2(h)/2d}.$$
  Similarly, we can obtain the same bound for $h\in \bar H_{-1,1}, \bar H_{1,-1}$ and $H_{-1,-1}$. As a conclusion, we have
  $$|V_{\sigma_n}\check{f}_{\tau}(h)-f_{\tau}^*(h)|\le 8e^{-\sigma_n^2\Delta_{\tau}^2(h)/2d}$$
  for any $h\in\mathcal{H}$.
\end{proof}

In the next lemma, we show that under \cref{GNE}, the difference of the risk under shifted ramp loss between $f^*_{\tau'}$ and its approximation $V_{\sigma_n}\check{f}_{\tau'}$ is uniformly bounded by $O(\sigma_n^{-\alpha d}\eta_n^{-1})$ for any $\tau'\in[\tau-2\delta_0,\tau+2\delta_0]$; moreover when $n$ is sufficiently large, $V_{\sigma_n}\check{f}_{\tau-2\delta_0}$ will belong to the empirical feasible region $\mathcal{A}_n(\tau)$ with high probability when $c\sigma_n^{-\alpha d}\eta_n^{-1}\le \delta_0$. 

\begin{lemma}
  \label[lemma]{lemma:6}
  For any $\tau'\in[\tau-2\delta_0,\tau+2\delta_0]$, 
  \begin{equation}
    \label{find_f_1}
    |E[\mathfrak{R}_{\psi}(V_{\sigma_n}\check{f}_{\tau'},\eta_n)]-E[\mathfrak{R}_{\psi}(f_{\tau'}^*,\eta_n)]|\le c\sigma_n^{-\alpha d}\eta_{n}^{-1},
  \end{equation}
  where $c$ is a constant depending on $(\alpha,d,K,M)$. Moreover, for any $\sigma_n$ and $\eta_n$ such that $c\sigma_n^{-\alpha d}\eta_n^{-1}\le \delta_0$, with probability $1-2\exp\big(\frac{-2n\delta^2_0c_1^2}{M^2}\big)$, we have $V_{\sigma_n}\check{f}_{\tau-2\delta_0}\in\mathcal{A}_{n}(\tau)$.
\end{lemma}
\begin{proof}
  First note that for any measurable function $f_1,f_2:\mathcal{H}\rightarrow\mathbb{R}$, we always have 
  \begin{align*}
    &E[\mathfrak{R}_{\psi}(f_1,\eta_n)]-E[\mathfrak{R}_{\psi}(f_2,\eta_n)]\\
    =&E\bigg[E[R|H,A=1][\psi(f_1(H),\eta_n)-\psi(f_2(H),\eta_n)]\\
    &+E[R|H,A=-1][\psi(-f_1(H),\eta_n)-\psi(-f_2(H),\eta_n)]\bigg]\\
    \le&2M\eta_n^{-1}E[|f_1(H)-f_2(H)|].
  \end{align*}
  Using result (\ref{projection_error}) in \cref{lemma:5} and \cref{GNE}, we can obtain
  \begin{align*}
    |E[\mathfrak{R}_{\psi}(V_{\sigma_n}\check{f}_{\tau'},\eta_n)]-E[\mathfrak{R}_{\psi}(f_{\tau'}^*,\eta_n)]|\le&\eta_n^{-1}16ME[e^{-\sigma_n^2\Delta_{\tau'}(H)^2/2d}]\\
    \le&16MK(2d)^{\alpha d/2}\sigma_n^{-\alpha d}\eta_n^{-1}\\
    =&c\sigma_n^{-\alpha d}\eta_n^{-1}.
  \end{align*}
  To prove the remaining part of the lemma, we now let $\tau'=\tau-2\delta_0$. We note that $\mathfrak{R}_{\psi}(V_{\sigma_n}\check{f}_{\tau'},\eta_n)$ is bounded by $M/c_1$. Based on Hoeffding's inequality, we can obtain 
  \begin{equation}
    \label{find_f_2}
    P\bigg[|\mathbb{P}_n[\mathfrak{R}_{\psi}(V_{\sigma_n}\check{f}_{\tau'}, \eta_n)]-E[\mathfrak{R}_{\psi}(V_{\sigma_n}\check{f}_{\tau'}, \eta_n)]|\geq\delta_0\bigg]\le 2\exp\bigg(\frac{-2n\delta_0^2c_1^2}{M^2}\bigg).
  \end{equation}
  According to (\ref{find_f_1}) and under the choice of $(\sigma_n,\eta_n)$, we have
  \begin{equation}
    \label{find_f_3}
    \begin{split}
      &|E[\mathfrak{R}_{\psi}(V_{\sigma_n}\check{f}_{\tau'},\eta_n)]-(\tau-2\delta_0)|\\
      =&|E[\mathfrak{R}_{\psi}(V_{\sigma_n}\check{f}_{\tau'},\eta_n)]-E[\mathfrak{R}_{\psi}(f^*_{\tau},\eta_n)]|\\
      \le&c\sigma_n^{-\alpha d}\eta_n^{-1}\le\delta_0.
    \end{split}
  \end{equation}
  Combining (\ref{find_f_2}) and (\ref{find_f_3}), we obtain
  $$P\bigg[\mathbb{P}_n[\mathfrak{R}_{\psi}(V_{\sigma_n}\check{f}_{\tau'}, \eta_n)]\geq \tau\bigg]\le 2\exp\bigg(\frac{-2n\delta_0^2c_1^2}{M^2}\bigg),$$
  which implies that $V_{\sigma_n}\check{f}_{\tau'}=V_{\sigma_n}\check{f}_{\tau-2\delta_0}\in\mathcal{A}_{n}(\tau)$ with probability at least $1-2\exp\big(\frac{-2n\delta_0^2c_1^2}{M^2}\big)$.
\end{proof}

In \cref{lemma:7}, we show that $\|\widehat{f}\|_{\mathcal{G}}$ is bounded with high probability.

\begin{lemma}
  \label[lemma]{lemma:7}
  $\widehat{f}_{\tau}$ satisfies
  \begin{equation}
    \label{bound_f_hat}
    P\bigg(\|\widehat{f}_{\tau}\|_{\mathcal{G}}^2\le c\bigg(\frac{M}{c_1\lambda_n}+\sigma_n^d\bigg)\bigg)\geq1-2\exp\bigg(\frac{-2n\delta_0^2c_1^2}{M^2}\bigg),
  \end{equation}
  for any choice of $c\sigma_n^{-\alpha d}\eta_{n}^{-1}\le \delta_0$. Here, the constant $c$ in front of $\sigma_n^d$ only depends on dimension $d$ and the constant in front of $\sigma_n^{-\alpha d}\eta_n^{-1}$ is equal to the constant of the same term in \cref{lemma:6}.
\end{lemma}
\begin{proof}
  From the last claim of \cref{lemma:6}, we have $V_{\sigma_n}\check{f}_{\tau-2\delta_0}\in \mathcal{A}_{n}(\tau)$ holds with probability at least $1-2\exp\big(\frac{-2n\delta_0^2c_1^2}{M^2}\big)$. Using and (\ref{boundness_vg}) of \cref{lemma:5}, under the choice of $(\sigma_n,\eta_n)$ we have
  \begin{align*}
      \lambda_n\|\widehat{f}\|_{\mathcal{G}}^2&\le \mathbb{P}_n[\mathfrak{L}_{\phi}(\widehat{f})]+\lambda_n\|\widehat{f}\|_{\mathcal{G}}^2\le \mathbb{P}_n[\mathfrak{L}_{\phi}(V_{\sigma_n}\check{f}_{\tau-2\delta_0})]+\lambda_n\|V_{\sigma_n}\check{f}_{\tau-2\delta_0}\|_{\mathcal{G}}^2
      \le c\bigg(\frac{M}{c_1}+\lambda_n\sigma_n^d\bigg),
  \end{align*}
  which gives (\ref{bound_f_hat}). 
\end{proof}

\cref{lemma:7} implies that, instead of $\mathcal{A}(\tau)$ and $\mathcal{A}_{n}(\tau)$, we can concentrate on the sets given by
$$\mathcal{A}(\tau,\mathcal{C}_n)=\bigg\{f\in\mathcal{G}\bigg|\|f\|_{\mathcal{G}}\le \mathcal{C}_n,E[\mathfrak{R}_{\psi}(f,\eta_{n})]\le\tau\bigg\},$$
$$\mathcal{A}_{n}(\tau,\mathcal{C}_n)=\bigg\{f\in\mathcal{G}\bigg|\|f\|_{\mathcal{G}}\le \mathcal{C}_n,\mathbb{P}_n[\mathfrak{R}_{\psi}(f,\eta_{n})]\le\tau\bigg\}, $$
where $\mathcal{C}_n=c\sqrt{\frac{M}{c_1\lambda_n}+\sigma_n^d}$. This is because $\widehat f$ belongs to them with a high probability. 

We further study the relationships among $\mathcal{A}(\tau,\mathcal{C}_n)$ and $\mathcal{A}_{n}(\tau,\mathcal{C}_n)$. The proof will use a general covering number property for Gaussian RKHS from \citet{steinwart_fast_2007}, which is stated as \cref{prop:1} in Section A.2.1.4.

\begin{lemma}
  \label[lemma]{lemma:8}
  For any $\delta>0$ with probability at least $1-\exp\big(-\frac{n\delta^2c_1^2}{2M^2}\big)$, we have 
  \begin{equation}
    \label{A_inclusion}
    \mathcal{A}(\tau-\epsilon_n,\mathcal{C}_n)\subset \mathcal{A}_{n}(\tau,\mathcal{C}_n)\subset\mathcal{A}(\tau+\epsilon_n,\mathcal{C}_n),
  \end{equation}
  where 
  $$\epsilon_n=c\sigma_n^{(1-\nu_2/2)(1+\theta_2)d/2}\bigg(\frac{M}{c_1\lambda_n}+\sigma_n^d\bigg)^{\nu_2/4}\eta_{n}^{-\nu_2/2}+\delta$$
  for $0<\nu_2\le2$ and $\theta_2>0$. Moreover, let 
  $$\epsilon'_n=\epsilon_n+c\sigma_n^{-\alpha d}\eta_{n}^{-1},$$
  then for any $\lambda_n$ and $\sigma_n$ such that $\epsilon'_n\le2\delta_0$, we have $V_{\sigma_n}\check{f}_{\tau-\epsilon_n'}\in \mathcal{A}(\tau-\epsilon_n,\mathcal{C}_n)$ and
  $$|E[\mathfrak{L}_{\phi}(V_{\sigma_n}\check{f}_{\tau-\epsilon_n'})]-E[\mathfrak{L}_{\phi}(f^*_{\tau-\epsilon_n'})]|\le c\sigma_n^{-\alpha d}.$$
  Here, the constants in front of $\sigma_n^{-ad}$ and $\sigma_n^{-ad}\eta_n^{-1}$ are equal to the constants in \cref{lemma:7}. $c$ in front of $\epsilon_n$ is a constant only dependent on $(M,c_1,\nu_2,\theta_2,d)$.
\end{lemma}
\begin{proof}
  To prove (\ref{A_inclusion}), it is sufficient to show that with probability $1-\exp\big(-\frac{n\delta^2c_1^2}{2M^2}\big)$ we have
  \begin{equation}
    \label{concentration}
    \underset{f\in\mathfrak{R}_{\psi,\eta_{n}}\circ\mathcal{B}_{\mathcal{G}}(\mathcal{C}_n)}{\sup}|\mathbb{P}_n[f]-E[f]|\le \epsilon_n,
  \end{equation}
  where $\mathfrak{R}_{\psi,\eta_{n}}\circ\mathcal{B}_{\mathcal{G}}(\mathcal{C}_n)=\{\mathfrak{R}_{\psi}(f,\eta_{n})|f\in\mathcal{B}_{\mathcal{G}}(\mathcal{C}_n)\}$ and $\mathcal{B}_{\mathcal{G}}(\mathcal{C}_n)$ denotes the closed ball in $\mathcal{G}$ with radius $\mathcal{C}_n$. By Theorem 4.10 from \citet{wainwright_high-dimensional_2019}, we have that
  $$\underset{f\in\mathfrak{R}_{\psi,\eta_{n}}\circ\mathcal{B}_{\mathcal{G}}(\mathcal{C}_n)}{\sup}|\mathbb{P}_n[f]-E[f]|\le 2\text{Rad}_n(\mathfrak{R}_{\psi,\eta_{n}}\circ\mathcal{B}_{\mathcal{G}}(\mathcal{C}_n))+\delta$$
  holds with probability $1-\exp\big(-\frac{n\delta^2c_1^2}{2M^2}\big)$, where $\text{Rad}_n(\mathcal{F})$ is the Rademacher complexity of some functional set $\mathcal{F}$ defined as
  $$\text{Rad}_n(\mathcal{F})=E_{X}\bigg[E_{\epsilon}\bigg[\underset{f\in\mathcal{F}}{\sup}~\bigg|\frac{1}{n}\sum_{i=1}^n\epsilon_if(X_i)\bigg|\bigg]\bigg],\quad\epsilon_i\sim i.i.d.~P(\epsilon_i=\pm1)=0.5.$$
  Following the proof in Example 5.24 from \citet{wainwright_high-dimensional_2019}, by Dudley's entropy integral the Rademacher complexity of $\mathfrak{R}_{\psi,\eta_{n}}\circ\mathcal{B}_{\mathcal{G}}(\mathcal{C}_n)$ is upper bound by 
  \begin{equation}
    \label{rademacher}
    \begin{split}
      \text{Rad}_n(\mathfrak{R}_{\psi,\eta_{n}}\circ\mathcal{B}_{\mathcal{G}}(\mathcal{C}_n))\le &E\bigg[\frac{24}{\sqrt{n}}\int_{0}^{2\frac{M}{c_1}}\sqrt{\log \mathcal{N}(\mathfrak{R}_{\psi,\eta_{n}}\circ\mathcal{B}_{\mathcal{G}}(\mathcal{C}_n),\epsilon, L_2(\mathbb{P}_n))}d\epsilon\bigg]\\
      \overset{(i)}{\le}&E\bigg[\frac{24}{\sqrt{n}}\int_{0}^{2\frac{M}{c_1}}\sqrt{\log \mathcal{N}\bigg(\mathcal{B}_{\mathcal{G}},\frac{\eta_{n} c_1}{M\mathcal{C}_n}\epsilon, L_2(\mathbb{P}_n)\bigg)}d\epsilon\bigg]\\
      \overset{(ii)}{\le}&c\sigma_n^{(1-\nu_2/2)(1+\theta_2)d/2}\bigg(\frac{M}{c_1\lambda_n}+c_d^2\sigma_n^d\bigg)^{\nu_2/4}\eta_{n}^{-\nu_2/2},
    \end{split}
  \end{equation}
  where to obtain $(i)$ we have used the fact that $\mathfrak{R}_{\psi,\eta_{n}}$ is a Lipschitz function of $f$ with Lipschitz constant $\frac{M}{c_1\eta_{n}}$, and in $(ii)$ we used the covering number property of $\mathcal{B}_{\mathcal{G}}$ stated in \cref{prop:1}.
  \par 
  For the second part of the lemma, $V_{\sigma_n}\check{f}_{\tau-\epsilon_n'}\in \mathcal{A}(\tau-\epsilon_n,\mathcal{C}_n)$ is a direct conclusion of (\ref{find_f_1}) from \cref{lemma:6} since 
  \begin{equation*}
    \begin{split}
      &E[\mathfrak{R}_{\psi}(V_{\sigma_n}\check{f}_{\tau-\epsilon_n'},\eta_n)]\\
      \le &|E[\mathfrak{R}_{\psi}(V_{\sigma_n}\check{f}_{\tau-\epsilon_n'},\eta_n)]-E[\mathfrak{R}_{\psi}(f^*_{\tau-\epsilon_n'},\eta_n)]|+|E[\mathfrak{R}_{\psi}(f^*_{\tau-\epsilon_n'},\eta_n)]|\\
      \le&\tau-\epsilon_n'+c\sigma_n^{-\alpha d}\eta_n^{-1}=\tau-\epsilon_n.
    \end{split}
  \end{equation*}
  Note that for any $f_1,f_2:\mathcal{H}\rightarrow[-1,1]$ we always have
  \begin{align*}
      &E[\mathfrak{L}_{\phi}(f_1)]-E[\mathfrak{L}_{\phi}(f_2)]\\
      =&E\bigg[E[Y|H,A=1][\phi(f_1(H))-\phi(f_2(H))]+E[Y|H,A=-1][\phi(-f_1(H))-\phi(-f_2(H))]\bigg]\\
      \le&E[|\delta_Y(H)||f_1(H)-f_2(H)|].
  \end{align*}
  Hence, using (\ref{projection_error}) in \cref{lemma:5} and \cref{GNE} we have
  \begin{align*}
  |E[\mathfrak{L}_{\phi}(V_{\sigma_n}\check{f}_{\tau-\epsilon_n'})]-&E[\mathfrak{L}_{\phi}(f^*_{\tau-\epsilon_n'})]|\le\\
  &8E\big[|\delta_Y(H)|e^{-\sigma_n^2\Delta_{\tau-\epsilon_n'}^2(H)/2d}\big]\le 8MK(2d)^{\alpha d/2}\sigma_n^{-\alpha d}=c\sigma_n^{-\alpha d}.
  \end{align*}
  This completes the proof for the lemma.
\end{proof}

As a corollary of \cref{lemma:8}, we can establish the risk error bound (\ref{risk_error_bound}) stated in \cref{thm:s2} for $T=1$. We state this result as \cref{corollary:1} below 

\begin{corollary}
  \label[corollary]{corollary:1}
  Suppose $(\sigma_n,\eta_n)$ satisfy the requirement in \cref{lemma:7}, then for any $0<\nu_2\le2$, $\theta_2>0$ and $\delta>0$ with probability at least $1-2\exp\big(\frac{-2n\delta_0^2c_1^2}{M^2}\big)-2\exp\big(-\frac{n\delta^2c_1^2}{2M^2}\big)$ we have
  $$E\bigg[\frac{R\mathbb{I}(A\widehat{f}(H_t)>0)}{p(A|H)}\bigg]\le\tau+\delta+cn^{-1/2}\sigma_n^{(1-\nu_2/2)(1+\theta_2)d/2}\bigg(\frac{M}{c_1\lambda_n}+\sigma_n^{d_t}\bigg)^{\nu_2/4}\eta_n^{-\nu2/2}.$$
  Here, $c$ is a constant only depends on $(M,c_1,\nu_2,\theta_2,d)$.
\end{corollary}

\begin{proof}
  \cref{lemma:7} implies that $\widehat{f}$ is bounded by $\mathcal{C}_n$ with probability at least $1-2\exp\big(\frac{-2n\delta_0^2c_1^2}{M^2}\big)$. Moreover, the concentration inequality (\ref{concentration}) of \cref{lemma:8} implies that 
  $$E[\mathfrak{R}_{\psi}(f,\eta_n)]-\mathbb{P}_n[\mathfrak{R}_{\psi}(f,\eta_n)]\le\delta+cn^{-1/2}\sigma_n^{(1-\nu_2/2)(1+\theta_2)d/2}\bigg(\frac{M}{c_1\lambda_n}+\sigma_n^{d_t}\bigg)^{\nu_2/4}\eta_n^{-\nu2/2}$$
  holds with probability at least $1-2\exp\big(-\frac{n\delta^2c_1^2}{2M^2}\big)$ for any $\delta>0$ and $f\in\mathcal{B}_{\mathcal{G}}(\mathcal{C}_n)$. The result holds since $\mathbb{P}_n[\mathfrak{R}_{\psi}(\widehat{f},\eta_n)]\le\tau$ the choice of $\widehat{F}$ and note that 
  $$E\bigg[\frac{R\mathbb{I}(A\widehat{f}(H)>0)}{p(A|H)}\bigg]\le E[\mathfrak{R}_{\psi}(\widehat{f},\eta_n)].$$
\end{proof}

\cref{lemma:8} indicates that 
$$V_{\sigma_n}\check{f}_{\tau-\epsilon_n'}\in\mathcal{A}(\tau-\epsilon_n,\mathcal{C}_n)\subseteq\mathcal{A}_n(\tau,\mathcal{C}_n)$$
with high probability. In \cref{lemma:9}, we will show that $V_{\sigma_n}\check{f}_{\tau-\epsilon_n'}$ can be used to quantify the approximation error caused by RKHS.

\begin{lemma}
  \label[lemma]{lemma:9}
  Under the condition of \cref{lemma:8}, we have 
  $$\underset{f\in\mathcal{A}(\tau-\epsilon_n,\mathcal{C}_n)}{\inf}(E[\mathfrak{L}_{\phi}(f)] +\lambda_n\|f\|_{\mathcal{G}}^2-E[\mathfrak{L}_{\phi}(f^*_{\tau-\epsilon_n'})])\le\xi_{n}^{(2)}.$$
\end{lemma}

\begin{proof}
  Let $\check{f}_{\tau-\epsilon_n'}=(\sigma_n^2/\pi)^{d/4}\bar{f}_{\tau-\epsilon_n'}$, then from \cref{lemma:8} we have 
  $$|E[\mathfrak{L}_{\phi}(V_{\sigma_n}\check{f}_{\tau-\epsilon_n'})]-E[\mathfrak{L}_{\phi}(f^*_{\tau-\epsilon_n'})]|\le c\sigma_n^{-\alpha d},$$
  and $V_{\sigma_n}\check{f}_{\tau-\epsilon_n'}\in\mathcal{A}(\tau-\epsilon_n,\mathcal{C}_n)$. Moreover, (\ref{boundness_vg}) from \cref{lemma:5} gives that $\|V_{\sigma_n}\check{f}_{\tau-\epsilon_n'}\|^2_{\mathcal{G}}\le c\sigma_n^d$. Hence, we have 
  \begin{align*}
      &\underset{f\in\mathcal{A}(\tau-\epsilon_n,\mathcal{C}_n)}{\inf}\bigg[E[\mathfrak{L}_{\phi}(f)] +\lambda_n\|f\|^2_{\mathcal{G}}-E[\mathfrak{L}_{\phi}(f^*_{\tau-\epsilon_n'})]\bigg]\\
      =&\underset{f\in\mathcal{A}(\tau-\epsilon_n,\mathcal{C}_n)}{\inf}\bigg[E[\mathfrak{L}_{\phi}(f)]+\lambda_n\|f\|^2_{\mathcal{G}}-E[\mathfrak{L}_{\phi}(V_{\sigma_n}\check{f}_{\tau-\epsilon_n'})]-\lambda_n\|V_{\sigma_n}\check{f}_{\tau-\epsilon_n'}\|_{\mathcal{G}}^2\bigg]+\lambda_n\|V_{\sigma_n}\check{f}_{\tau-\epsilon_n'}\|^2_{\mathcal{G}}\\
      &+E[\mathfrak{L}_{\phi}(V_{\sigma_n}\check{f}_{\tau-\epsilon_n'})]-E[\mathfrak{L}_{\phi}(f^*_{\tau-\epsilon_n'})]\\
      \le&c(\lambda_n\sigma_n^d+\sigma_n^{-\alpha d})\equiv \xi_{n}^{(2)}.
  \end{align*}
\end{proof}

\subsubsection{Completing the proof to \cref{thm:s2} for T=1}
\label{sec1:3}
\noindent
We first establish the error bound for the regret (\ref{value_error_bound}). Since the Fisher consistency of \cref{thm:s1} indicates $\mathcal{V}(g^*)=\mathcal{V}(f^*_{\tau})$ and using the excessive risk inequality in \cref{lemma:3} we have 
\begin{equation*}
  \label{excessive_risk}
  \mathcal{V}(f^*_{\tau})-\mathcal{V}(\widehat{f})\le E[\mathcal{V}_{\phi}(f^*_{\tau}, H)]-E[\mathcal{V}_{\phi}(\widehat{f}, H)]+M\eta_{n},
\end{equation*}
the proof is completed if we can show
\begin{equation}
  \label{need_to_proof}
  E[\mathcal{V}_{\phi}(f^*_{\tau}, H)]-E[\mathcal{V}_{\phi}(\widehat{f}, H)]=E[\mathfrak{L}_{\phi}(\widehat{f})]-E[\mathfrak{L}_{\phi}(f^*_{\tau})]\le \xi_{n}+2\lambda_0\epsilon'_{n}=\xi_{n}+c\epsilon'_{n}
\end{equation}
holds with probability at least $1-h(n,x)$, where $\lambda_0$ denotes the $\lambda^*$-value for $(\tau-2\delta_0)$ which is a constant only depends on $(\tau,\delta_0)$.

According to \cref{lemma:7}, we have shown that $\|\widehat{f}\|_{\mathcal{G}}$ is bounded by $\mathcal{C}_n=c\sqrt{\frac{M}{c_1\lambda_n}+\sigma_n^d}$ with probability at least $1-2\exp\big(\frac{-2n\delta_{0}^2c_1^2}{M^2}\big)$. Hence, similar to proof of \cref{corollary:1}, we can restrict to set $\mathcal{B}_{\mathcal{G}}(\mathcal{C}_n)$, and replace $\mathcal{A}(\tau)$ and $\mathcal{A}_n(\tau)$ by $\mathcal{A}(\tau,\mathcal{C}_n)$ and $\mathcal{A}_{n}(\tau,\mathcal{C}_n)$ correspondingly with high probability.

To prove (\ref{need_to_proof}), we note that the left-hand side of the inequality can be composed as
\begin{equation}
  \label{decomposition}
  \begin{split}
      &E[\mathfrak{L}_{\phi}(\widehat{f})]-E[\mathfrak{L}_{\phi}(f^*_{\tau})]\\
      \le&E[\mathfrak{L}_{\phi}(\widehat{f})] + \lambda_n\|\widehat{f}\|^2_{\mathcal{G}}-E[\mathfrak{L}_{\phi}(f^*_{\tau})]\\
      \le&E[\mathfrak{L}_{\phi}(\widehat{f})] + \lambda_n\|\widehat{f}\|^2_{\mathcal{G}} - \underset{f\in\mathcal{A}_{n}(\tau,\mathcal{C}_n)}{\inf}(E[\mathfrak{L}_{\phi}(f)]+\lambda_n\|f\|^2_{\mathcal{G}}) \\
      &+\underset{f\in\mathcal{A}_{n}(\tau,\mathcal{C}_n)}{\inf}(E[\mathfrak{L}_{\phi}(f)] + \lambda_n\|f\|^2_{\mathcal{G}})-E[\mathfrak{L}_{\phi}(f^*_{\tau})]\\
      \le&\underbrace{E[\mathfrak{L}_{\phi}(\widehat{f})] + \lambda_n\|\widehat{f}\|^2_{\mathcal{G}} - \underset{f\in\mathcal{A}_{n}(\tau,\mathcal{C}_n)}{\inf}(E[\mathfrak{L}_{\phi}(f)] + \lambda_n\|f\|^2_{\mathcal{G}})}_{(I)}\\
      &+\underbrace{\underset{f\in\mathcal{A}_{n}(\tau,\mathcal{C}_n)}{\inf}(E[\mathfrak{L}_{\phi}(f)] +\lambda_n\|f\|_{\mathcal{G}}^2)-\underset{f\in\mathcal{A}(\tau-\epsilon_n,\mathcal{C}_n)}{\inf}(E[\mathfrak{L}_{\phi}(f)] +\lambda_n\|f\|^2_{\mathcal{G}})}_{(II)}\\
      &+\underbrace{\underset{f\in\mathcal{A}(\tau-\epsilon_n,\mathcal{C}_n)}{\inf}(E[\mathfrak{L}_{\phi}(f)] +\lambda_n\|f\|^2_{\mathcal{G}})-E[\mathcal{L}_{\phi}(f^*_{\tau-\epsilon_n'})]}_{(III)}+\underbrace{E[\mathfrak{L}_{\phi}(f^*_{\tau-\epsilon_n'})]-E[\mathfrak{L}_{\phi}(f^*_{\tau})]}_{(IV)}.
  \end{split}
\end{equation}

Using the inclusion result from \cref{lemma:8}, we have 
$$\mathcal{A}(\tau-\epsilon_n,\mathcal{C}_n)\subseteq\mathcal{A}_{n}(\tau,\mathcal{C}_n)\subseteq\mathcal{A}(\tau+\epsilon_n,\mathcal{C}_n)$$
holds with probability no more than $2\exp\big(-\frac{n\delta_0^2c_1^2}{2M^2}\big)$, and $\mathcal{A}(\tau-\epsilon_n,\mathcal{C}_n)\subseteq\mathcal{A}_{n}(\tau,\mathcal{C}_n)$ implies $(II)<0$. Using the approximation error bound obtained in \cref{lemma:9}, $(III)$ is bounded by $\xi_{n}^{(2)}$. For term $(IV)$, using the Lipschitz continuity property of the value function obtained in \cref{lemma:4} we have
$$|E[\mathfrak{L}_{\phi}(f^*_{\tau-\epsilon_n'})]-E[\mathfrak{L}_{\phi}(f^*_{\tau})]|\le 2\lambda_0\epsilon_n'.$$
In addition, $\mathcal{A}_{n}(\tau,\mathcal{C}_n)\subseteq\mathcal{A}(\tau+\epsilon_n,\mathcal{C}_n)$ and \cref{lemma:4} implies that
\begin{align*}
  &E[\mathfrak{L}_{\phi}(\widehat{f})]-E[\mathfrak{L}_{\phi}(f^*_{\tau})]\\
  \geq&E[\mathfrak{L}_{\phi}(\widehat{f})]-E[\mathfrak{L}_{\phi}(f^*_{\tau+\epsilon_n'})]+E[\mathfrak{L}_{\phi}(f^*_{\tau+\epsilon_n'})]-E[\mathfrak{L}_{\phi}(f^*_{\tau})]\\
  \geq&E[\mathfrak{L}_{\phi}(f^*_{\tau+\epsilon_n'})]-E[\mathfrak{L}_{\phi}(f^*_{\tau})]=O(\epsilon_n').
\end{align*}
which provides a lower bound for the difference of the surrogate reward between $\widehat{f}$ and $f^*_{\tau}$. 

Hence, it remains to derive an upper bound for $(I)$. Let 
$$\mathcal{W}=\{\mathfrak{L}_{\phi}(f_1)+\lambda_n\|f_1\|^2_{\mathcal{G}}-\mathfrak{L}_{\phi}(f_2)-\lambda_n\|f_2\|^2_{\mathcal{G}}:f_1\in\mathcal{B}_{\mathcal{G}}(\mathcal{C}_n), f_2\in\mathcal{B}_{\mathcal{G}}(\mathcal{C}_n)\},$$
and note that $\mathcal{A}_n(\tau,\mathcal{C}_n)\subset\mathcal{B}_{\mathcal{G}}(\mathcal{C}_n)$, then following a similar argument as the proof of Theorem 5.3 of \citet{steinwart_fast_2007}, it is sufficient to control the probability of 
$$P(\exists w\in\mathcal{W}\text{~with~}\mathbb{P}_n(w)\leq0\text{~and~}E(w)\geq\xi_n^{1}).$$
To achieve this, we aim at applying \cref{prop:3} to $\mathcal{W}$, which requiring that for any $w\in\mathcal{W}$, one has $\|w\|_{\infty}\leq B$ for some constant $B>0$ and the $\epsilon$-covering number of $\mathcal{N}(B^{-1}\mathcal{W},\epsilon,L_2(\mathbb{P}_n))$ satisfies that 
$$\underset{\mathbb{P}_n}{\sup}\log\mathcal{N}(B^{-1}\mathcal{W},\epsilon,L_2(\mathbb{P}_n))\leq l\epsilon^{-p}$$
for some $l>0$ and $p>0$.

To verify the first condition, we first note that by the definition of $\mathcal{W}$, it is sufficient to verify the condition for
$$\mathcal{W}'=\{\mathfrak{L}_{\phi}(f)+\lambda_n\|f\|^2_{\mathcal{G}}:f\in\mathcal{B}_{\mathcal{G}}\}(\mathcal{C}_n),$$
which will then yield the expected conditions for $\mathcal{W}$ up to a constant. For any $f\in\mathcal{B}_{\mathcal{G}}(\mathcal{C}_n)$, we first note that 
\begin{align*}
  \|\mathfrak{L}_{\phi}(f)+\lambda_n\|f\|^2_{\mathcal{G}}\|_{\infty}&\le\frac{M}{c_1}\|f\|_{\infty}+\lambda_n\|\widetilde{f}\|^2_{\mathcal{G}}\\&\le \frac{cM}{c_1}\sqrt{\frac{M}{c_1\lambda_n}+\sigma_n^d}+c^2\lambda_n\bigg(\frac{M}{c_1\lambda_n}+\sigma_n^d\bigg)=B,
\end{align*}
since $\|f\|_{\infty}\le\|f\|_{\mathcal{G}}$ for the Gaussian RKHS, which indicates that $\|w\|_{\infty}\leq B$ for any $w\in\mathcal{W}'$. Furthermore, by the sub-additivity of the entropy we have
\begin{align*}
    \log\mathcal{N}(B^{-1}\mathcal{W'},2\epsilon,L_2(\mathbb{P}_n))\le&\underbrace{\log\mathcal{N}(B^{-1}\{\mathfrak{L}_{\phi}(f):f\in\mathcal{B}_{\mathcal{G}}(\mathcal{C}_n)\},\epsilon,L_2(\mathbb{P}_n))}_{(V)}\\
    &+\underbrace{\log\mathcal{N}(B^{-1}\{\lambda_n\|f\|_{\mathcal{G}}^2:f\in\mathcal{B}_{\mathcal{G}}(\mathcal{C}_n)\},\epsilon,L_2(\mathbb{P}_n))}_{(VI)}.
\end{align*}
For $(V)$ we have 
\begin{align*}
(V)\le&\log\mathcal{N}\bigg(\mathcal{B}_{\mathcal{G}}(\mathcal{C}_n),\frac{c_1B\epsilon}{M},L_2(\mathbb{P}_n)\bigg)\\
    =&\log\mathcal{N}\bigg(\mathcal{B}_{\mathcal{G}},\frac{c_1B\epsilon}{M}\bigg(c\sqrt{\frac{M}{c_1\lambda_n}+\sigma_n^d}\bigg)^{-1},L_2(\mathbb{P}_n)\big)\\
    \le&\log\mathcal{N}(\mathcal{B}_{\mathcal{G}},c\epsilon,L_2(\mathbb{P}_n)),
\end{align*}
since $\mathfrak{L}_{\phi}$ is a 1-Lipschitz function of $f$ and $\frac{c_1B}{M}\bigg(c\sqrt{\frac{M}{c_1\lambda_n}+\sigma_n^d}\bigg)^{-1}$ converges to a finite positive constant for sufficient small $\lambda_n$ and large $\sigma_n$.
Moreover, for $(VI)$ we have that for $\epsilon>0$, 
$$(VI)\le\log\bigg(c\sqrt{\frac{M}{c_1\lambda_n}+\sigma_n^d}\bigg/(B\epsilon)\bigg)=\log\bigg(c\sqrt{\frac{M}{c_1\lambda_n}+\sigma_n^d}\bigg/B\bigg)-\log\epsilon\le c(-\log\epsilon).$$
Combining the upper bound of $(V)$ and $(VI)$ and using the covering number property for $\log\mathcal{N}(\mathcal{B}_{\mathcal{G}},\epsilon,L_2(\mathbb{P}_n))$ given in the \cref{prop:1},
we have 
\begin{align*}
    \underset{\mathbb{P}_n}{\sup}\log\mathcal{N}(B^{-1}\mathcal{W},\epsilon,L_2(\mathbb{P}_n))\le&\underset{\mathbb{P}_n}{\sup}\log\mathcal{N}(\mathcal{B}_{\mathcal{G}},c\epsilon,L_2(\mathbb{P}_n))-\log\epsilon\\
    \le&c\sigma_n^{(1-\nu_1/2)(1+\theta_1)d}\epsilon^{-\nu_1},
\end{align*}
for any $0<\nu_1<2$, $\theta_1>0$, and some positive constant $c$ which only depends on $(\nu_1,\theta_1,d,M,c_1)$. Therefore, let $B=\frac{2cM}{c_1}\sqrt{\frac{M}{c_1\lambda_n}+\sigma_n^d}+2c^2\lambda_n\bigg(\frac{M}{c_1\lambda_n}+\sigma_n^d\bigg)$, $l=c\sigma_n^{(1-\nu_1/2)(1+\theta_1)d}$ and $p=\nu_1$, the \cref{prop:3} implies that (I) is lower bounded by
\begin{align*}
  &P^*([E[\mathfrak{L}_{\phi}(\widehat{f})] + \lambda_n\|\widehat{f}\|^2_{\mathcal{G}} - \underset{f\in\mathcal{A}_{n}(\tau,\mathcal{C}_n)}{\inf}(E[\mathfrak{L}_{\phi}(f)] + \lambda_n\|f\|^2_{\mathcal{G}})]\geq\xi_{n}^{(1)})\\
  \leq&P(\exists w\in\mathcal{W}\text{~with~}\mathbb{P}_n(w)\leq0\text{~and~}E(w)\geq\xi_n^{1}).\\
  \le&e^{-x},
\end{align*}
where 
$$\xi_{n}^{(1)}=c\bigg(\frac{2M}{c_1}\sqrt{\frac{M}{c_1\lambda_n}+\sigma_n^d}+2\lambda_n\bigg(\frac{M}{c_1\lambda_n}+\sigma_n^d\bigg)\bigg)n^{-1/2}(\sigma_n^{(1-\nu_1/2)(1+\theta_1)d/2}+2\sqrt{2x}+2x/\sqrt{n})$$
for some positive $c$ only depends on $(\nu_1,\theta_1,d)$, which completes the proof of (\ref{value_error_bound}). The risk inequality (\ref{risk_error_bound}) is guaranteed by \cref{corollary:1} and this completes the proof for $T=1$.

\subsubsection{Statement of Propositions}

In this section, we give the statements of all propositions used for establishing \cref{thm:s2}. The first proposition states that the $\epsilon$-covering number of $\mathcal{B}_{\mathcal{G}}$ under $L_2(\mathbb{P}_n)$ is uniformly with polynomial order in terms of $\sigma_n$ and $\epsilon$. This result was first established as Theorem 2.1 in \citet{steinwart_explicit_2006}.

\begin{proposition}{(\citet[Theorem 2.1]{steinwart_fast_2007})}
  \label[proposition]{prop:1}
  For any $\epsilon>0$, we have 
  $$\underset{\mathbb{P}_n}{\sup}\log\mathcal{N}(\mathcal{B}_{\mathcal{G}},\epsilon,L_2(\mathbb{P}_n))\le c\sigma_n^{(1-\nu/2)(1+\theta)d}\epsilon^{-\nu}$$
  for any $0<\nu\le2$ and $\theta>0$. Here, $\mathcal{B}_{\mathcal{G}}$ is the closed unit ball in $\mathcal{G}$ w.r.t. $\|\cdot\|_{\mathcal{G}}$ and $\mathcal{N}(\cdot,\epsilon,L_2(\mathbb{P}_n))$ is the covering number of $\epsilon$-ball w.r.t. empirical $L_2(\mathbb{P}_n)$ norm 
  $$\|f\|_{L_2(\mathbb{P}_n)}=\bigg(\frac{1}{n}\sum_{i=1}^nf(X_i)^2\bigg)^{1/2}.$$
  $c$ is a constant only depends on $(\nu,\theta,d)$.
\end{proposition}

\cref{prop:3} quantifies the stochastic error of $\widehat{f}$, which is a modification of Theorem 5.1 of \citet{steinwart_fast_2007}. Two preliminary propositions used to establish \cref{prop:3} are stated as \cref{prop:4} and \cref{prop:5} at the end of this section. 

\begin{proposition}
  \label[proposition]{prop:3}
  Let P be a probability measure on $\mathcal{Z}$ and $\mathcal{W}$ be a set of bounded measurable functions from $\mathcal{Z}$ to $\mathbb{R}$. Suppose that $\mathcal{W}$ is separable w.r.t. $\|\cdot\|_{\infty}$ and $\|w\|_{\infty}\le B<\infty$ for all $w\in\mathcal{W}$. Let $\mathcal{S}=\{E(w)-w:w\in\mathcal{W}\}$. Then for all $n\geq1$, $h\geq1$ and 
  $$\zeta_n=3E[\underset{s\in\mathcal{S}}{\sup}~\mathbb{P}_n(s)]+2\sqrt{2}B\sqrt{\frac{h}{n}}+2B\frac{h}{n},$$ 
  we have 
  $$P^*(\exists w\in\mathcal{W}\text{~with~}\mathbb{P}_n(w)\le0\text{~and~}E(w)\geq\zeta_n)\leq e^{-h}.$$
  Moreover, when 
  $$\underset{\mathbb{P}_n}{\sup}\log\mathcal{N}(B^{-1}\mathcal{W},\epsilon,L_2(\mathbb{P}_n))\leq l\epsilon^{-p}$$ 
  we have 
  $$E[\underset{s\in\mathcal{S}}{\sup}~\mathbb{P}_n(s)]\geq cB\bigg(\frac{l}{n}\bigg)^{1/2}.$$
  Here, $c$ is a positive constant that only depends on $p$.
\end{proposition}
\begin{proof}
  By the assumption of $\mathcal{W}$ and the definition of $\mathcal{S}$, it is obvious that $E(s)=0$, $\|s\|_{\infty}\le2B$ and $E(s^2)\le 4B^2$ for all $s\in\mathcal{S}$. Moreover, it is also easy to verify that $\mathcal{S}$ is separable w.r.t. $\|\cdot\|_{\infty}$ given $\mathcal{W}$ is separable w.r.t. $\|\cdot\|_{\infty}$. Note that 
  \begin{align*}
      P^*(\exists &w\in\mathcal{W}\text{~with~}\mathbb{P}_n(w)\le0\text{~and~}E(w)\geq\zeta_n)\\
      \le&P^*(\exists w\in\mathcal{W}\text{~with~}E(w)-\mathbb{P}_n(w)\geq\zeta_n)\\
      \le&P^n(\underset{s\in\mathcal{S}}{\sup}~\mathbb{P}_n(s)\geq\zeta_n).
  \end{align*}
  Using Theorem 5.3 from \citet{steinwart_fast_2007}, which is stated as \cref{prop:5}, with $b=2B$ and $\iota=4B^2$, we have 
  \begin{align*}
      P^n(\underset{s\in\mathcal{S}}{\sup}~\mathbb{P}_n(s)\geq\zeta_n)= P^n\bigg(\underset{s\in\mathcal{S}}{\sup}~\mathbb{P}_n(s)\geq 3E[\underset{s\in\mathcal{S}}{\sup}~\mathbb{P}_n(s)]+2\sqrt{2}B\sqrt{\frac{h}{n}}+2B\frac{h}{n}\bigg)\le e^{-h},
  \end{align*}
  which completes the first part of the proposition. To prove the second part of the proposition, we note that by the definition of $\mathcal{S}$, we have
  $$E[\underset{s\in\mathcal{S}}{\sup}~\mathbb{P}_n(s)]\le E\bigg[\underset{w\in\mathcal{W},E(w^2)\le B^2}{\sup}|E(w)-\mathbb{P}_n(w)|\bigg]=\omega_n(\mathcal{W},B^2),$$
  where $\omega_n(\mathcal{W},\xi)$ is the modulus of the continuity of $\mathcal{W}$. Define the local Rademacher complexity of $\mathcal{W}$ to be 
  $$\text{Rad}_n(\mathcal{W},\xi)=E_Z\bigg[E_{\epsilon}\bigg[\underset{w\in\mathcal{W},E(w^2)\le \xi}{\sup}\bigg|\frac{1}{n}\sum_{i=1}^n\epsilon_if(Z_i)\bigg|\bigg]\bigg],$$
  where $\{\epsilon_i\}$ are $n$ i.i.d. Rademacher random variables. According to \citet{van_der_vaart_weak_1996}, we have 
  $$\omega_n(\mathcal{W},\xi)\le 2\text{Rad}_n(\mathcal{W},\xi).$$
  Using the property that for $\forall r>0$
  $$\text{Rad}_n(r\mathcal{W},\xi)=r \text{Rad}_n(\mathcal{W},r^{-2}\xi)$$
  and applying Proposition 5.5 of \citet{steinwart_fast_2007}, which is stated as \cref{prop:4}, under the condition on the covering number of $\mathcal{W}$, we have 
  \begin{align*}E[\underset{s\in\mathcal{S}}{\sup}~\mathbb{P}_n(s)]\le&\omega_n(\mathcal{W},B^2)\le2\text{Rad}_n(\mathcal{W},B^2)
      \le2B \text{Rad}_n(B^{-1}\mathcal{W},1)
      \le2cB\bigg(\frac{l}{n}\bigg)^{1/2},
  \end{align*}
  which completes the second part of the proposition.
\end{proof}

\begin{proposition}{(Steinwart and Scovel (2007, Proposition 5.5))}
  \label[proposition]{prop:4}
  Let $\mathcal{W}$ be a class of measurable functions from $\mathcal{Z}$ to $[-1,1]$ which is separable w.r.t. $\|\cdot\|_{\infty}$ and let $P$ be a probability measure on $\mathcal{Z}$. Assume that there are constants $q>0$ and $0<p<2$ with 
  $$\underset{\mathbb{P}_n}{\sup}\log N(\mathcal{W},\epsilon,L_2(\mathbb{P}_n))\le q\epsilon^{-p}$$
  for all $\epsilon>0$. Then there exists a constant $c$ depending only on $p$ such that for all $n\geq1$ and all $\epsilon>0$ we have 
  $$\text{Rad}_n(\mathcal{W},\epsilon)\le c\max\bigg\{\epsilon^{1/2-p/4}\bigg(\frac{q}{n}\bigg)^{1/2},\bigg(\frac{q}{n}\bigg)^{2/(2+p)}\bigg\}.$$
\end{proposition}

\begin{proposition}{(Steinwart and Scovel (2007, Theorem 5.3))}
  \label[proposition]{prop:5} 
  Let P be a probability measure on $\mathcal{Z}$ and $\mathcal{W}$ be a set of bounded measurable functions from $\mathcal{Z}$ to $\mathbb{R}$ which is separable w.r.t. $\|\cdot\|_{\infty}$ and satisfies $E(w)=0$ for all $w\in\mathcal{W}$. Furthermore, let $b>0$ and $\iota\geq0$ be constants with $\|w\|_{\infty}\le b$ and $E(w^2)\le\iota$ for all $w\in\mathcal{W}$. Then for all $x\geq1$ and all $n\geq1$ we have 
  $$P^n\bigg(\underset{w\in\mathcal{W}}{\sup}~\mathbb{P}_n(w)>3E[\underset{w\in\mathcal{W}}{\sup}~\mathbb{P}_n(w)]+\sqrt{\frac{2x\iota}{n}}+\frac{bx}{n}\bigg)\le e^{-x}.$$ 
\end{proposition}

\subsection{Proof of \cref{thm:s2} for $T\geq2$}

We first prove (\ref{value_error_bound}) is \cref{thm:s2}. To this end, we define 
$$\mathfrak{L}_{\phi,t}(f_t;f_{t+1},...,f_T)=E\bigg[\frac{(Y_t+U_{t+1}(H_{t+1};f_{t+1},...,f_T))}{p(A_t|H_t)}\phi(A_tf_t(H_t))\bigg],$$
and
\begin{equation}
  \label{tilde_v}
  \widetilde{V}_t=\sup_{f_t\in\mathcal{A}_t(\tau_t)}\mathcal{V}_t(f_t,\widehat{f}_{t+1},...,\widehat{f}_{T}),
\end{equation}
where
$$\mathcal{V}_t(g_t,...,g_T)=E\bigg[\frac{(\sum_{j=t}^TY_j)\prod_{j=t}^T\mathbb{I}(A_jg_j(H_j)>0)}{\prod_{j=t}^Tp(A_j|H_j)}\bigg].$$
Note that the Fisher consistency in \cref{fisher_consistency} indicates that $\mathcal{V}_t(g_t^*,...,g_T^*)=\mathcal{V}_t(f_t^*,...,f_T^*)$, and it is sufficient to derive an upper bound for $\mathcal{V}_t(f^*_{t},...,f^*_{T})-\mathcal{V}_t(\widehat{f}_{t},...,\widehat{f}_{T})$. 

First, note that by repeating the same argument as for $T=1$, we can show $\|\widehat{f}_{t}\|_{\mathcal{G}_t}$ is bounded by $\mathcal{C}_{n,t}=c\sqrt{\frac{(T-1+t)M}{c_1\lambda_{n,t}}+\sigma_{n,t}^{d_t}}$ with probability at least $1-2\exp\bigg(-\frac{2n\delta_{0,t}^2c_1^2}{(T-t+1)^2M^2}\bigg)$ for any $t=1,..,T$. Hence, we can replace $\mathcal{A}_t(\tau_t)$ in (\ref{tilde_v}) by $\mathcal{A}_{t}(\tau_t,\mathcal{C}_{n,t})$ with high probability and obtain
\begin{align*}
  &\mathcal{V}_t(f^*_{t},...,f^*_{T})-\mathcal{V}_t(\widehat{f}_{t},...,\widehat{f}_{T})\\
  =&\mathcal{V}_t(f^*_{t},...,f^*_{T})-\widetilde{V}_t+\widetilde{V}_t-\mathcal{V}_t(\widehat{f}_{t},...,\widehat{f}_{T})\\
  \le& \underbrace{\mathcal{V}_t(f^*_{t},...,f^*_{T})-\widetilde{V}_t}_{(I)}+\underbrace{\mathfrak{L}_{\phi,t}(\widehat{f}_{t};\widehat{f}_{t+1},...,\widehat{f}_{T})-\underset{f_t\in\mathcal{A}_t(\tau_t,\mathcal{C}_{n,t})}{\inf}\mathfrak{L}_{\phi,t}(f_t;\widehat{f}_{t+1},...,\widehat{f}_{T})}_{(II)} + (T-t+1)M\eta_{n,t},
\end{align*}
where
$$\mathcal{A}_{t}(\tau_t,\mathcal{C}_{n,t})=\bigg\{f\in\mathcal{G}_t\bigg|\|f\|_{\mathcal{G}_t}\le\mathcal{C}_{n,t},E\bigg[\frac{R_t\psi(A_tf(H_t),\eta_{n,t})}{p(A_t|H_t)}\bigg]\le \tau_t\bigg\},$$
and to obtain the last inequality we have used the fact that $|Q_t|_{\infty}\le (T-t+1)M$ and the excessive risk inequality of \cref{lemma:3} to replace the difference under 0-1 loss in terms of $\mathcal{V}_t$ by the difference under hinge loss in terms of $\mathfrak{L}_{\phi,t}$. 

For $(I)$, we have 
\begin{align*}
  (I)\le&\mathcal{V}_t(f^*_{t},...,f^*_{T})-\sup_{f_t\in\mathcal{A}_t(\tau_t,\mathcal{C}_{n,t})}[\mathcal{V}_t(f_t,\widehat{f}_{t+1},...,\widehat{f}_{T})-\mathcal{V}_t(f_t,f^*_{t+1}...,f^*_{T})+\mathcal{V}_t(f_t,f^*_{t+1}...,f^*_{T})]\\
  \le & \mathcal{V}_t(f^*_{t},...,f^*_{T})-\sup_{f_t\in\mathcal{A}_t(\tau_t,\mathcal{C}_{n,t})}\mathcal{V}_t(f_t,f^*_{t+1},...,f^*_{T})+c_1^{-1}|\mathcal{V}_{t+1}(f^*_{t+1},...,f^*_{T})-\mathcal{V}_{t+1}(\widehat{f}_{t+1},...,\widehat{f}_{T})|\\
  =&c_1^{-1}|\mathcal{V}_{t+1}(f^*_{t+1},...,f^*_{T})-\mathcal{V}_{t+1}(\widehat{f}_{t+1},...,\widehat{f}_{T})|+(T-t+1)M\eta_{n,t},\\
  &+\underbrace{\inf_{f_t\in\mathcal{A}_t(\tau_t,\mathcal{C}_{n,t})}\mathfrak{L}_{\phi,t}(f_t;f^*_{t+1},...,f^*_T)-\mathfrak{L}_{\phi,t}(f_t^*;f_{t+1}^*,...,f_T^*)}_{(III)}
\end{align*}
where again we have used the fact that $|Q_t|_{\infty}\le (T-t+1)M$ and the excessive risk inequality of \cref{lemma:3}. To bound the last term in $(I)$, recall that $f^*_{t,\tau}$ denotes the solution of (\ref{f_surrogate}) at stage $t$ with risk constraint $\tau_t$ replaced by $\tau$. Then the second part of \cref{lemma:8} indicates that $V_{\sigma_{n,t}}\check{f}_{t,\tau_t-\epsilon_{n,t}'}\in\mathcal{A}_t(\tau_t-\epsilon_{n,t},\mathcal{C}_{n,t})\subseteq\mathcal{A}_t(\tau_t,\mathcal{C}_{n,t})$, where $\check{f}_{t,\tau_t-\epsilon_{n,t}'}=(\sigma_{n,t}^2/\pi)\bar{f}_{t,\tau_t-\epsilon_{n,t}'}$, and
$$|\mathfrak{L}_{\phi,t}(V_{\sigma_{n,t}}\check{f}_{t,\tau_t-\epsilon_{n,t}'};f_{t+1}^*,...,f_T^*)-\mathfrak{L}_{\phi,t}(f^*_{t,\tau_t-\epsilon_{n,t}'};f_{t+1}^*,...,f_T^*)|\le c\sigma_{n,t}^{-\alpha_t d_t}.$$
Therefore, we have 
\begin{equation*}
  \begin{split}
    (III)\le &\mathfrak{L}_{\phi,t}(V_{\sigma_{n,t}}\check{f}_{t,\tau_t-\epsilon_{n,t}'};f_{t+1}^*,...,f_T^*)-\mathfrak{L}_{\phi,t}(f_t^*;f_{t+1}^*,...,f_T^*)\\
    \le&|\mathfrak{L}_{\phi,t}(V_{\sigma_{n,t}}\check{f}_{t,\tau_t-\epsilon_{n,t}'};f_{t+1}^*,...,f_T^*)-\mathfrak{L}_{\phi,t}(f^*_{t,\tau_t-\epsilon_{n,t}'};f_{t+1}^*,...,f_T^*)|\\
    &+\mathfrak{L}_{\phi,t}(f^*_{t,\tau_t-\epsilon_{n,t}'};f_{t+1}^*,...,f_T^*)-\mathfrak{L}_{\phi,t}(f^*_{t};f_{t+1}^*,...,f_T^*)\\
    \le&c(\sigma_{n,t}^{-\alpha_td_t}+\epsilon_{n,t}')\le O(\epsilon_{n,t}')
  \end{split}
\end{equation*}
where to obtain the last inequality we used the Lipschitz continuity of the value function in \cref{lemma:5} and by definition $f^*_t=f^*_{t,\tau_t}$.

For $(II)$, we have
\begin{equation}
  \label{II_decomposition}
  \begin{split}
    (II)\le& \mathfrak{L}_{\phi,t}(\widehat{f}_{t};\widehat{f}_{t+1},...,\widehat{f}_{T}) + \lambda_{n,t}\|\widehat{f}_{t}\|_{\mathcal{G}_t}^2-\underset{f_t\in\mathcal{A}_t(\tau_t,\mathcal{C}_{n,t})}{\inf}\mathfrak{L}_{\phi,t}(f_t;\widehat{f}_{t+1},...,\widehat{f}_{T})\\
    =&\bigg[\mathfrak{L}_{\phi,t}(\widehat{f}_{t};\widehat{f}_{t+1},...,\widehat{f}_{T}) + \lambda_{n,t}\|\widehat{f}_{t}\|_{\mathcal{G}_t}^2\\
    &-\inf_{f_t\in\mathcal{A}_{t,n}(\tau_t,\mathcal{C}_{n,t})}\bigg(\mathfrak{L}_{\phi,t}(f_t;\widehat{f}_{t+1},...,\widehat{f}_{T}) + \lambda_{n,t}\|f_t\|_{\mathcal{G}_t}^2\bigg)\bigg]\\
    &+\bigg[\inf_{f_t\in\mathcal{A}_{t,n}(\tau_t,\mathcal{C}_{n,t})}\bigg(\mathfrak{L}_{\phi,t}(f_t;\widehat{f}_{t+1},...,\widehat{f}_{T}) + \lambda_{n,t}\|f_t\|_{\mathcal{G}_t}^2\bigg)\\
    &-\underset{f_t\in\mathcal{A}_t(\tau_t,\mathcal{C}_{n,t})}{\inf}\mathfrak{L}_{\phi,t}(f_t;\widehat{f}_{t+1},...,\widehat{f}_{T})\bigg],
  \end{split}
\end{equation}
where 
$$\mathcal{A}_{t,n}(\tau,\mathcal{C}_{n,t})=\bigg\{f\in\mathcal{G}_t\bigg|\|f\|_{\mathcal{G}_t}\le\mathcal{C}_{n,t},\frac{1}{n}\sum_{i=1}^n\frac{R_{it}\psi(A_{it}f(H_{it}),\eta_{n,t})}{p(A_{it}|H_{it})}\le \tau_t\bigg\}.$$
The first term on the right-hand side of the inequality (\ref{II_decomposition}) can be bounded by
\begin{align*}
  &\mathfrak{L}_{\phi,t}(\widehat{f}_{t};\widehat{f}_{t+1},...,\widehat{f}_{T}) + \lambda_{n,t}\|\widehat{f}_{t}\|_{\mathcal{G}_t}^2-\inf_{f_t\in\mathcal{A}_{t,n}(\tau_t,\mathcal{C}_{n,t})}\bigg(\mathfrak{L}_{\phi,t}(f_t;\widehat{f}_{t+1},...,\widehat{f}_{T}) + \lambda_{n,t}\|f_t\|_{\mathcal{G}_t}^2\bigg)\\
  \le&2c_1^{-1}|\mathcal{V}_{t+1}(f^*_{t+1},...,f^*_{T})-\mathcal{V}_{t+1}(\widehat{f}_{t+1},...,\widehat{f}_{T})|\\
  &+\underbrace{\bigg[\mathfrak{L}_{\phi,t}(\widehat{f}_{t};f^*_{t+1},...,f^*_{T}) + \lambda_{n,t}\|\widehat{f}_{t}\|_{\mathcal{G}_t}^2-\inf_{f_t\in\mathcal{A}_{t,n}(\tau_t,\mathcal{C}_{n,t})}\bigg(\mathfrak{L}_{\phi,t}(f_t;f^*_{t+1},...,f^*_{T}) + \lambda_{n,t}\|f_t\|_{\mathcal{G}_t}^2\bigg)\bigg]}_{(IV)},
\end{align*}
and $(IV)$ is equal to stochastic error term $(I)$ in (\ref{decomposition}) for the proof of $T=1$ with $Y$ being replaced by $Q_t$. Note that $|Q_t|\le (T-t+1)M$ and consequently $(IV)$ can be bounded using exactly the same argument as for term $(I)$ of (\ref{decomposition}), which turns out to have order $O(\xi_{n,t}^{(1)})$ with probability at least $1-\exp(-x_t)$. For the second term of (\ref{II_decomposition}), we have
\begin{align*}
  &\inf_{f_t\in\mathcal{A}_{t,n}(\tau_t,\mathcal{C}_{n,t})}\bigg(\mathfrak{L}_{\phi,t}(f_t;\widehat{f}_{t+1},...,\widehat{f}_{T}) + \lambda_{n,t}\|f_t\|_{\mathcal{G}_t}^2\bigg)
  -\underset{f_t\in\mathcal{A}_t(\tau_t,\mathcal{C}_{n,t})}{\inf}\mathfrak{L}_{\phi,t}(f_t;\widehat{f}_{t+1},...,\widehat{f}_{T})\\
  \le&2c_1^{-1}|\mathcal{V}_{t+1}(f^*_{t+1},...,f^*_{T})-\mathcal{V}_{t+1}(\widehat{f}_{t+1},...,\widehat{f}_{T})|\\
  &+\underbrace{\bigg[\inf_{f_t\in\mathcal{A}_{t,n}(\tau_t,\mathcal{C}_{n,t})}\bigg(\mathfrak{L}_{\phi,t}(f_t;f^*_{t+1},...,f^*_{T}) + \lambda_{n,t}\|f_t\|_{\mathcal{G}_t}^2\bigg)-\underset{f_t\in\mathcal{A}_t(\tau_t,\mathcal{C}_{n,t})}{\inf}\mathfrak{L}_{\phi,t}(f_t;f^*_{t+1,\tau_{t+1}},...,f^*_{T})\bigg]}_{(V)}.
\end{align*}
Note that term $(V)$ is similar to the sum of the remaining terms in (\ref{decomposition}) also with $Y$ being replaced by $Q_t$. Hence, following the same argument as for $T=1$, $(V)$ can be similarly decomposed to terms $(II)-(IV)$ in (\ref{decomposition}) and bounded separately, which turns out to have order $O(\xi_{n,t}^{(2)})+O(\epsilon_{n,t}')$ in total with probability at least $1-2\exp\big(-\frac{n\delta_t^2c_1^2}{2(T-t+1)^2M^2}\big)$.
Combing these results, we conclude that with probability at least $1-h_n(t,x_t)$,
\begin{equation}
  \label{thm:s2:inequality1}
  \begin{split}
    \mathcal{V}_t(f^*_{t},...,f^*_{T})-\mathcal{V}_t(\widehat{f}_{t},...,\widehat{f}_{T}) \le& 5c_1^{-1}|\mathcal{V}_{t+1}(f^*_{t+1},...,f^*_{T})-\mathcal{V}_{t+1}(\widehat{f}_{t+1},...,\widehat{f}_{T})|\\
    &+c\eta_{n,t}^{-1}+\underbrace{c(\epsilon_{n,t}')}_{(III)}+\underbrace{c(\xi_{n,t}^{(1)}+\xi_{n,t}^{(2)}+\epsilon_{n,t}')}_{(IV)+(V)}\\
    \le&5c_1^{-1}|\mathcal{V}_{t+1}(f^*_{t+1},...,f^*_{T})-\mathcal{V}_{t+1}(\widehat{f}_{t+1},...,\widehat{f}_{T})|\\
    &+c(\xi_{n,t}+\epsilon_{n,t}'+\eta_{n,t}^{-1})
  \end{split}
\end{equation}
for some constant $c$.

On the other hand, according to \cref{lemma:8}, similar to the prove when $T=1$ we can show that $\widehat{f}_{t}\in\mathcal{A}_{t,n}(\tau_t,\mathcal{C}_{n,t})\subseteq\mathcal{A}_{t}(\tau_t+\epsilon_{n,t}',\mathcal{C}_{n,t})$ with probability at least $1-2\exp(\frac{n\delta_{t}^2c_1^2}{2(T-t+1)^2M^2})$. Therefore, we have 
\begin{equation}
  \label{thm:s2:inequality2}
  \begin{split}
    &\mathcal{V}_t(f^*_{t},...,f^*_{T})-\mathcal{V}_t(\widehat{f}_{t},...,\widehat{f}_{T})\\
    \geq&\mathcal{V}_t(f^*_{t},...,f^*_{T})-\underset{f_t\in\mathcal{A}_{t}(\tau_t+\epsilon_{n,t}',\mathcal{C}_{n,t})}{\sup}\mathcal{V}_t(f_t,\widehat{f}_{t+1},...,\widehat{f}_{T})\\
    \geq&\mathcal{V}_t(f^*_{t},...,f^*_{T})-\mathcal{V}_t(f^*_{t, \tau_{t}+\epsilon_{n,t}'},f^*_{t+1},...,f^*_{T})\\
    &-c_1^{-1}|\mathcal{V}_t(f^*_{t+1},...,f^*_{T})-\mathcal{V}_t(\widehat{f}_{t+1},...,\widehat{f}_{T})|\\
    \geq&c\epsilon_{n,t}'-c_1^{-1}|\mathcal{V}_t(f^*_{t+1},...,f^*_{T})-\mathcal{V}_t(\widehat{f}_{t+1},...,\widehat{f}_{T})|.
  \end{split}
\end{equation}

Finally, by combining (\ref{thm:s2:inequality1}) and (\ref{thm:s2:inequality2}), we obtain that with probability at least $1-h_n(t,x_t)$, 
\begin{equation*}
  \begin{split}
    |\mathcal{V}_t(f^*_{t},...,f^*_{T})-\mathcal{V}_t(\widehat{f}_{t},...,\widehat{f}_{T})|\le 5c_1^{-1}&|\mathcal{V}_{t+1}(f^*_{t+1},...,f^*_{T})-\mathcal{V}_{t+1}(\widehat{f}_{t+1},...,\widehat{f}_{T})|\\
    &+c(\xi_{n,t}+\epsilon_{n,t}'+\eta_{n,t}^{-1}).
  \end{split}
\end{equation*}
Hence, (\ref{value_error_bound}) in \cref{thm:s2} follows by induction starting from $t=T$ to $1$. The error bound of risk (\ref{risk_error_bound}) can be established by repeating the same proof as \cref{corollary:1} for each stage. This completes the proof of \cref{thm:s2}.


\subsection*{{Appendix B: DC Algorithm for Solving Single-Stage BR-DTRs}}
\label{appendix:B}

In this section, we describe the DC algorithm for solving BR-DTRs at stage $t$. The algorithm was originally proposed in \citet{wang_learning_2018} for $T=1$. Given estimated rules $(\widehat{f}_{t+1},...,\widehat{f}_{T})$, one can calculate $\{\widehat{Y}_{it}\}$ and $\{\widehat{A}_{it}\}$ from (\ref{new_y_a}). We aim to solve the optimization problem 
\begin{equation*}
  \begin{alignedat}{2}
    \underset{\boldsymbol\beta\in\mathbb{R}^d,\beta_0\in\mathbb{R}}{\min}\quad&C_{n,t}\sum_{i=1}^n\frac{\widehat{Y}_{it}}{p(A_{it}|H_{it})}\phi\big(\widehat{A}_{it}(K_{i,t}\boldsymbol\beta+\beta_0)\big)+\frac{1}{2}\boldsymbol\beta^T K_t\boldsymbol\beta\\
    \text{subject~to}\quad&\sum_{i=1}^n\frac{R_{it}}{p(A_{it}|H_{it})}\psi\big(A_{it}(K_{i,t}\boldsymbol\beta+\beta_0),\eta\big)\leq n\tau_t,\\
  \end{alignedat}
\end{equation*}
where $C_{n,t}=(2n\lambda_{n,t})^{-1}$. Here, $K_t$ is the $n$-by-$n$ kernel matrix of stage $t$ defined by $K_{ij}=K_t(H_{it},H_{jt})$ where $K_t(\cdot,\cdot)$ is the inner product equipped by RKHS $\mathcal{G}_t$ and $K_{i,t}$ is the i-th row of $K_t$.

Note that the shifted ramp loss can be decomposed as $\psi(x,\eta)=\eta^{-1}(x+\eta)_{+}-\eta^{-1}(x)_{+}.$ By applying the DC algorithm, given $\boldsymbol\beta^{(s)}$ and $\beta_{0}^{(s)}$, we update ($\boldsymbol\beta^{(s+1)},\beta_0^{(s+1)}$) by solving the optimization problem 
\begin{equation*}
  \begin{alignedat}{2}
    \underset{\boldsymbol\beta\in\mathbb{R}^{d_t},\beta_0\in\mathbb{R}}{\min}\quad&C_{n,t}\sum_{i=1}^n\frac{\widehat{Y}_{it}}{p(A_{it}|H_{it})}\phi\big(\widehat{A}_{it}(K_{i,t}\boldsymbol\beta+\beta_0)\big)+\frac{1}{2}\boldsymbol\beta^T K_t\boldsymbol\beta\\
    \text{subject~to}\quad&\sum_{i=1}^n\frac{R_{it}}{p(A_{it}|H_{it})}\bigg[\big\{A_{it}(K_{i,t}\boldsymbol\beta+\beta_0)+\eta\big\}_{+}-C_{it}^{(s)}A_{it}(K_{i,t}\boldsymbol\beta+\beta_0)\bigg]\leq n\eta\tau_t,
  \end{alignedat}
\end{equation*}
where $C_{it}^{(s)}=\mathbb{I}(A_{it}(K_{i,t}\boldsymbol\beta^{(s)}+\beta_0^{(s)})>0)$. Similar to standard SVM, we introduce slacking variables $\xi_i\geq1-\widehat{A}_{it}(K_{i,t}\boldsymbol\beta+\beta_0)$, $\xi_i\geq0$ to replace $\phi\big(\widehat{A}_{it}(K_{i,t}\boldsymbol\beta+\beta_0)\big)$ in the objective function. Moreover, we introduce additional slacking variables $\zeta_i\geq A_{it}(K_{i,t}\boldsymbol\beta+\beta_0)+\eta$, $\zeta_i\geq0$ to replace $\big\{A_{it}(K_{i,t}\boldsymbol\beta+\beta_0)+\eta\big\}_{+}$ in the risk constraint. After plugging the slacking variables, the optimization problem becomes 
\begin{equation}
  \label{dc_problem}
  \begin{alignedat}{2}
    \underset{\boldsymbol\beta\in\mathbb{R}^{d_t},\beta_0\in\mathbb{R}}{\min}\quad&C_{n,t}\sum_{i=1}^n\frac{\widehat{Y}_{it}}{p(A_{it}|H_{it})}\xi_i+C_{n,t}\sum_{i=1}^n\frac{\zeta_i}{n}+\frac{1}{2}\boldsymbol\beta^T K_t\boldsymbol\beta\\
    \text{subject~to}\quad&\sum_{i=1}^n\frac{R_{it}}{p(A_{it}|H_{it})}\bigg[\zeta_i-C_{it}^{(s)}A_{it}(K_{i,t}\boldsymbol\beta+\beta_0)\bigg]\leq n\eta\tau_t,\\
    &1-\widehat{A}_{it}(K_{i,t}\boldsymbol\beta+\beta_0)\leq\xi_i,~~0\leq\xi_i,\\
    & A_{it}(K_{i,t}\boldsymbol\beta+\beta_0)+\eta\leq\zeta_i,~~0\leq\zeta_i,~~\text{for~~}i=1,..,n.
  \end{alignedat}
\end{equation}
The additional term $C_{n,t}\sum_{i=1}^n\frac{\zeta_i}{n}$ in the objective function is to guarantee that the slacking variable $\zeta_i$ is equal to $\big\{A_{it}(K_{i,t}\boldsymbol\beta+\beta_0)+\eta\big\}_{+}$. For fixed tuning parameter $C_{n,t}$, this optimization problem will be equivalent to the original problem as the additional term will eventually vanish when the sample size $n$ increases. 

The Lagrange function of (\ref{dc_problem}) is given by 
\begin{equation*}
  \begin{split}
    L=&C_{n,t}\bigg(\sum_{i=1}^n\frac{\widehat{Y}_{it}}{p(A_{it}|H_{it})}\xi_i+\sum_{i=1}^n\frac{\zeta_i}{n}\bigg)+\frac{1}{2}\boldsymbol\beta^TK_t\boldsymbol\beta\\
    &-\pi\bigg[n\eta\tau_t-\sum_{i=1}^n\frac{R_{it}}{p(A_{it}|H_{it})}\bigg(\zeta_i-\sum_{i=1}^nC_{it}^{(s)}A_{it}(K_{i,t}\boldsymbol\beta+\beta_0)\bigg)\bigg]\\
    &-\sum_{i=1}^n\alpha_i\bigg[\xi_i-1+\widehat{A}_{it}(K_{i,t}\boldsymbol\beta+\beta_0)\bigg]-\sum_{i=1}^n\mu_i\xi_i-\sum_{i=1}^n\kappa_i\bigg[\zeta_i-\eta-A_{it}(K_{i,t}\boldsymbol\beta+\beta_0)\bigg]-\sum_{i=1}^n\rho_i\kappa_i.
  \end{split}
\end{equation*}
Taking derivatives w.r.t. $\xi_i$, $\zeta_i$, $\boldsymbol\beta$ and $\beta_0$, one can obtain that the optimal Lagrange multipliers $\boldsymbol\alpha=(\alpha_1,...,\alpha_n^T)$, $\boldsymbol\kappa=(\kappa_1,...,\kappa_n)^T$, $\boldsymbol\mu=(\mu_1,...,\mu_n)$, $\boldsymbol\rho=(\rho_1,...,\rho_n)$ and $\pi$ satisfy
\begin{equation*}
  \begin{split}
    C_{n,t}\boldsymbol{V}_{t,Y}-\boldsymbol\alpha-\boldsymbol\mu&=\boldsymbol0,\\
    C_{n,t}\boldsymbol{1}/n+\pi\boldsymbol{V}_{t,R}-\boldsymbol\kappa-\boldsymbol\rho&=\boldsymbol0,\\
    \boldsymbol\beta-\pi \boldsymbol{V}_{t,R,A,C}^{(s)}-\widehat{A}_t\boldsymbol\alpha+A_{t}\boldsymbol\kappa&=\boldsymbol0,\\
    \pi\boldsymbol{1}^T\boldsymbol{V}_{t,R,A,C}^{(s)}+\boldsymbol{1}^T\widehat{A}_t\boldsymbol\alpha-\boldsymbol{1}^TA_t\boldsymbol\kappa&=0,
  \end{split}
\end{equation*}
where $\boldsymbol{1}$ and $\boldsymbol{0}$ denote the $n$-by-$1$ vectors with all entries equal to 1 and 0 respectively,
{\everymath{\displaystyle}
\begin{equation*}
  \boldsymbol{V}_{t,Y}=
  \begin{bmatrix}
    \frac{\widehat{Y}_{1t}}{p(A_{1t}|H_{1t})}\\
    \vdots\\
    \frac{\widehat{Y}_{nt}}{p(A_{nt}|H_{nt})}
  \end{bmatrix},\quad
  \boldsymbol{V}_{t,R}=
  \begin{bmatrix}
    \displaystyle
    \frac{R_{1t}}{p(A_{1t}|H_{1t})}\\
    \vdots\\
    \frac{R_{nt}}{p(A_{nt}|H_{nt})}
  \end{bmatrix},\quad
  \boldsymbol{V}_{t,R,A,C}^{(s)}=
  \begin{bmatrix}
    \displaystyle
    \frac{R_{1t}}{p(A_{1t}|H_{1t})}A_{1t}C_{1t}^{(s)}\\
    \vdots\\
    \frac{R_{nt}}{p(A_{nt}|H_{nt})}A_{nt}C_{nt}^{(s)}
  \end{bmatrix}.
\end{equation*}
}Here, we abuse the notation and define $\widehat{A}_t=\text{diag}\{(\widehat{A}_{1t},...,\widehat{A}_{nt})\}$ and $A_t=\text{diag}\{(A_{1t},...,A_{nt})\}$. Plugging the equations back to $L$ and note that $\boldsymbol\alpha\geq0$, $\boldsymbol\kappa\geq\boldsymbol0$, $\boldsymbol\mu\geq\boldsymbol0$, $\boldsymbol\rho\geq\boldsymbol0$ and $\pi\geq0$, after some algebra one can obtain that the dual problem of (\ref{dc_problem}) w.r.t. $\boldsymbol\omega=(\pi,\boldsymbol\alpha^T,\boldsymbol\kappa^T)^T$ is given by
\begin{equation*}
  \begin{alignedat}{2}
    \underset{\boldsymbol\omega}{\min}\quad&\frac{1}{2}\boldsymbol\omega^T(H^TK_tH)\boldsymbol\omega-\boldsymbol\omega^T\boldsymbol{l}_{\eta,\tau_t}\\
    \text{subject~to}\quad&\boldsymbol{a}\leq W\boldsymbol\omega\leq \boldsymbol{b},\\
    &\boldsymbol 0_{(2n+1)\times1}\leq\boldsymbol\omega\leq \boldsymbol{u},
  \end{alignedat}
\end{equation*}
where 
{\everymath{\displaystyle}
\begin{equation*}
  H=\begin{bmatrix}
    \displaystyle
    \boldsymbol{V}_{t,R,A,C}^{(s)}&\widehat{A}_t&-A_t
  \end{bmatrix},
  \quad
  W=\begin{bmatrix}
    \displaystyle
    \boldsymbol{V}_{t,R}&\boldsymbol{0}_{n\times n}&-I_n\\
    \boldsymbol{1}^T\boldsymbol{V}_{t,R,A,C}^{(s)}&\boldsymbol{1}^T\widehat{A}_t&-\boldsymbol{1}^TA_t
  \end{bmatrix},\quad 
\end{equation*}
}$\boldsymbol{l}_{\eta,\tau_t}=(-n\eta\tau_t,\textbf{1}^T,\eta\textbf{1}^T)^T$, $\boldsymbol{a}=(-C_{n,t}\textbf{1}^T/n,0)^T$, $\boldsymbol{b}=(\infty\textbf{1}^T,0)^T$ and $\boldsymbol{u}=(\infty,C_{n,t}\boldsymbol{V}_{t,Y}^T,\infty\textbf{1}^T)^T$. Note that the optimization w.r.t. $\boldsymbol\omega$ is a standard quadratic optimization problem, which can be solved efficiently using standard software. Denote the optimal solution of the previous optimization problem by $\widehat{\boldsymbol\omega}^{(s)}$, we update $\boldsymbol\beta$ by
$$\boldsymbol\beta^{(s+1)}=\widehat{\pi}^{(s)}\boldsymbol{V}_{t,R,A,C}^{(s)}+\widehat{A}_t\widehat{\boldsymbol\alpha}^{(s)}-A_t\widehat{\boldsymbol\kappa}^{(s)}.$$ 
The new $\beta_0^{(s+1)}$ can be determined via grid search such that the original objective function is maximized among values that satisfy the constraint given $\boldsymbol\beta=\boldsymbol\beta^{(s+1)}$.
We stop the iteration when the termination condition $\max(|\boldsymbol\beta^{(s+1)}-\boldsymbol\beta^{(s)}|_{\infty},|\beta_0^{(s+1)}-\beta^{(s)}|)\leq\epsilon$ is satisfied or reaches the maximum iteration limitation. Let $\widehat{\boldsymbol\beta}=(\widehat{\beta}_1,...,\widehat{\beta}_n)^T$ and $\widehat{\beta}_0$ denote the final solution returned by DC iteration, then the final estimated decision function at stage $t$ is given by $\widehat{f}_{t}(\cdot)=\sum_{i=1}^nK_t(H_{it},\cdot)\widehat{\beta}_i+\widehat{\beta}_0.$


\newpage
\appendix
\label{appendix:C}
\renewcommand{\thesection}{C.\arabic{section}}
\renewcommand{\thesubsection}{C.\arabic{section}.\arabic{subsection}} 
\renewcommand{\thesubsubsection}{C.\arabic{section}.\arabic{subsection}.\arabic{subsubsection}} 
\setcounter{table}{0}
\setcounter{figure}{0}

\section*{Appendix C: Additional Simulation Results}
\section{Simulation results for setting I with $\tau=1.5$ and setting II with $\tau=1.3$}
\begin{figure}[hbp!]
  \centering
  \textbf{Setting I $\tau=1.5$}\\
  ~\\
  \includegraphics[scale = 0.25]{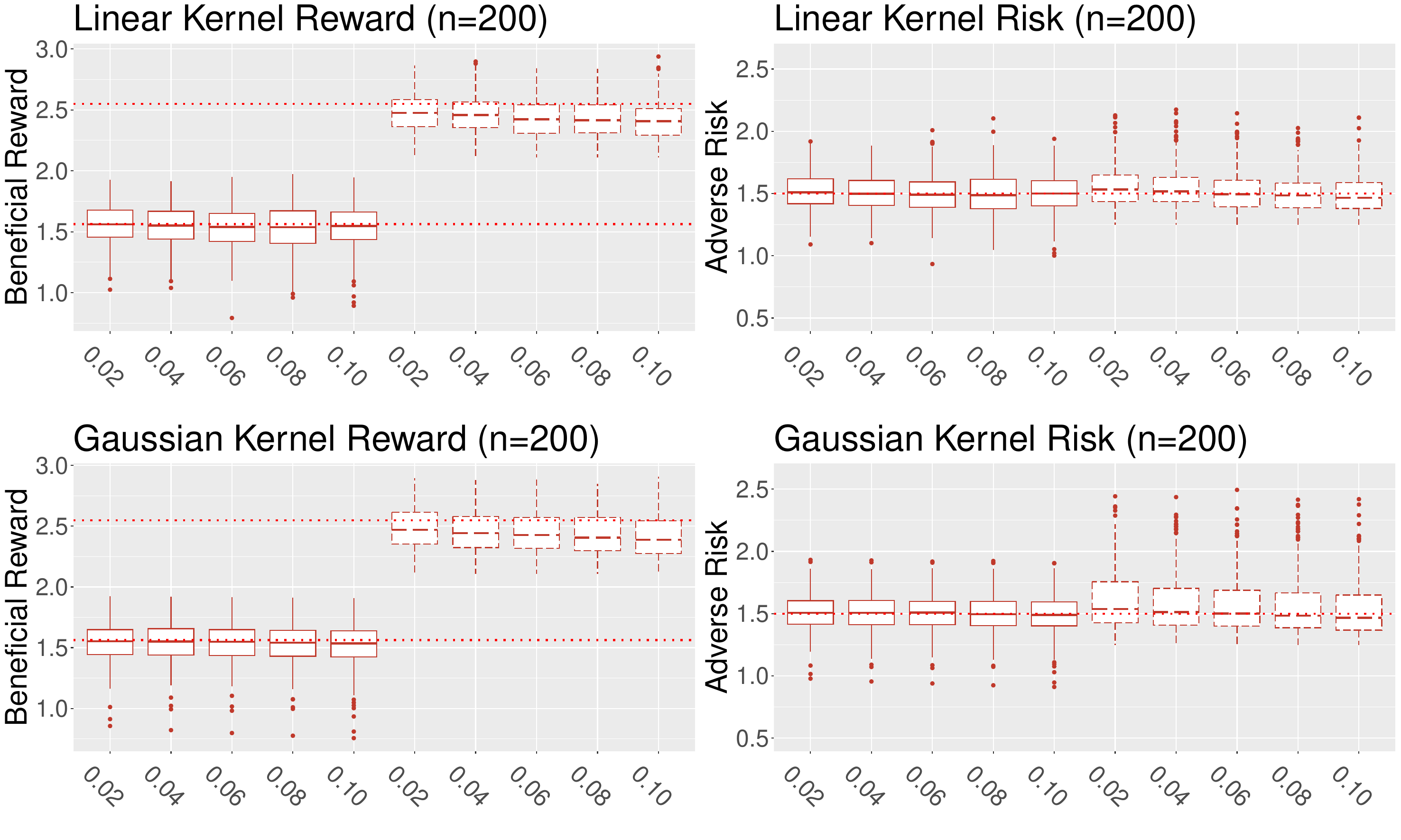}\\
  \includegraphics[scale = 0.25]{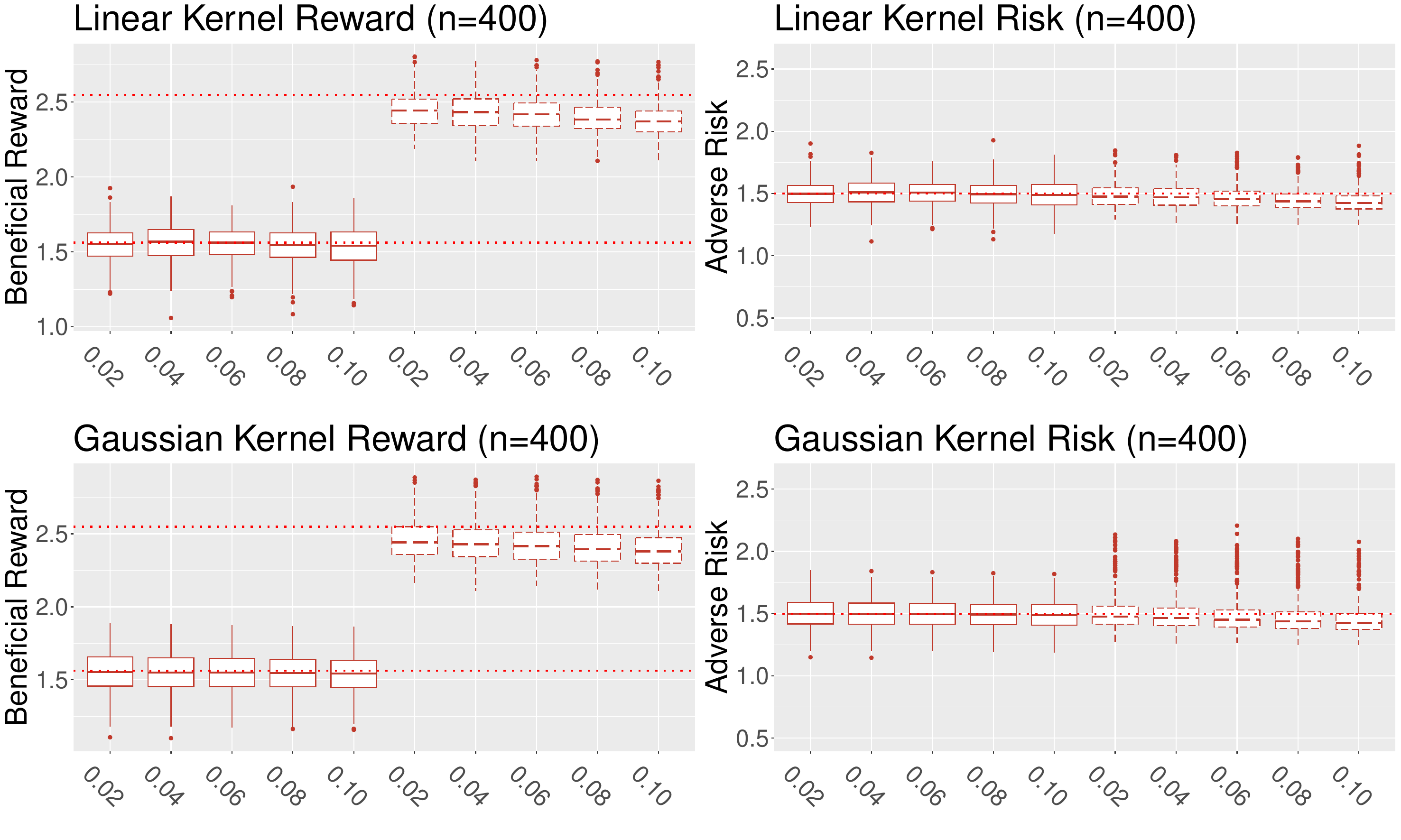}\\
  \vspace{-10pt}
  \includegraphics[scale = 0.3]{legend.pdf}
  \vspace{-10pt}
  \caption{\scriptsize Estimated reward/risk on independent testing data set for simulation setting I, training sample size $n=\{200,400\}$, $\eta=\{0.02,0.04,...,0.1\}$ (x-axis) under linear kernel or Gaussian kernel. The dashed line in reward plots refers to the theoretical optimal reward under given constraints. The dashed line in risk plots represents the risk constraint $\tau=1.5$.}
  \label{sfig:1}
\end{figure}

\newpage
\begin{figure}[hbp!]
  \centering
  \textbf{Setting II $\tau=1.3$}\\
  ~\\
  \includegraphics[scale = 0.25]{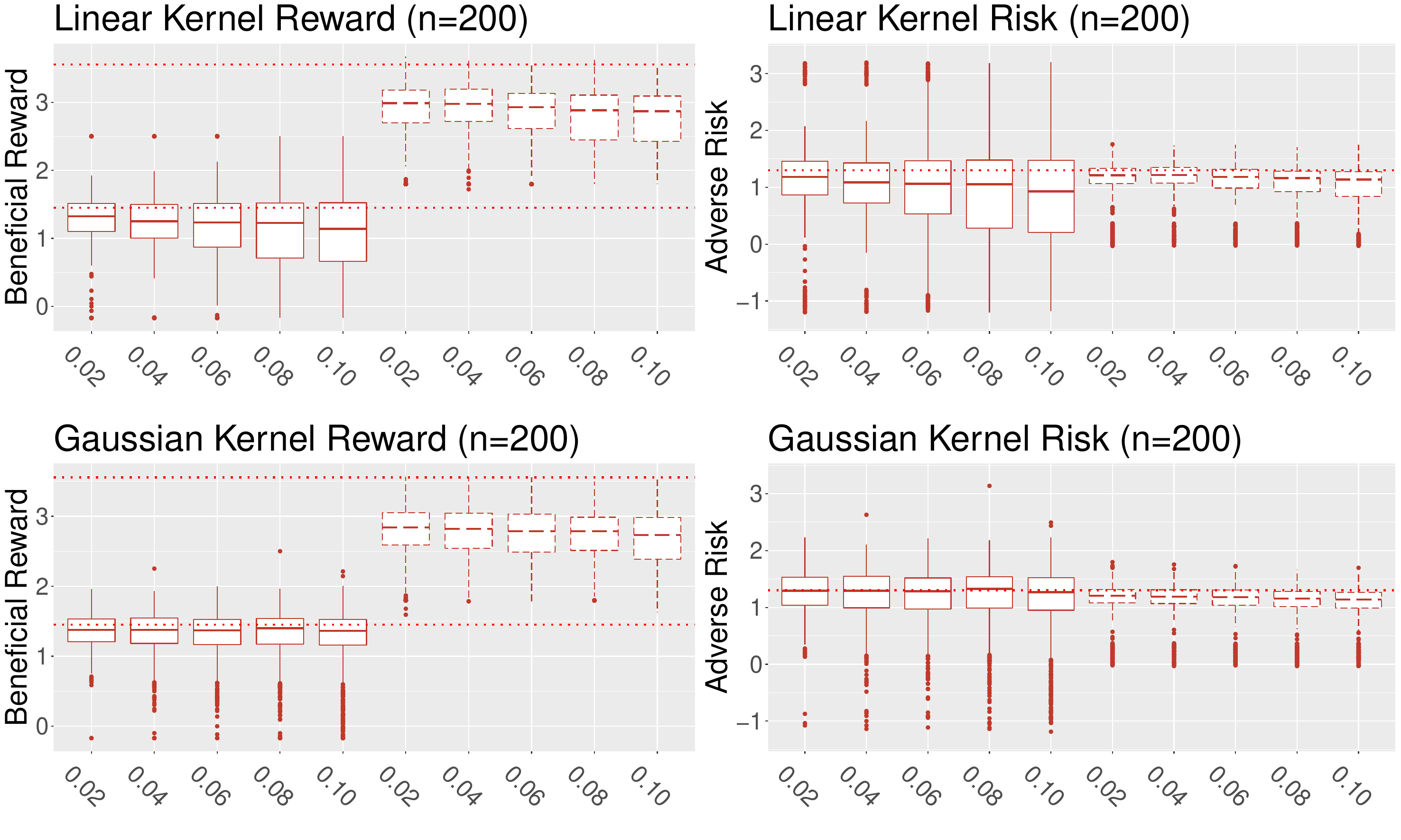}\\
  \includegraphics[scale = 0.25]{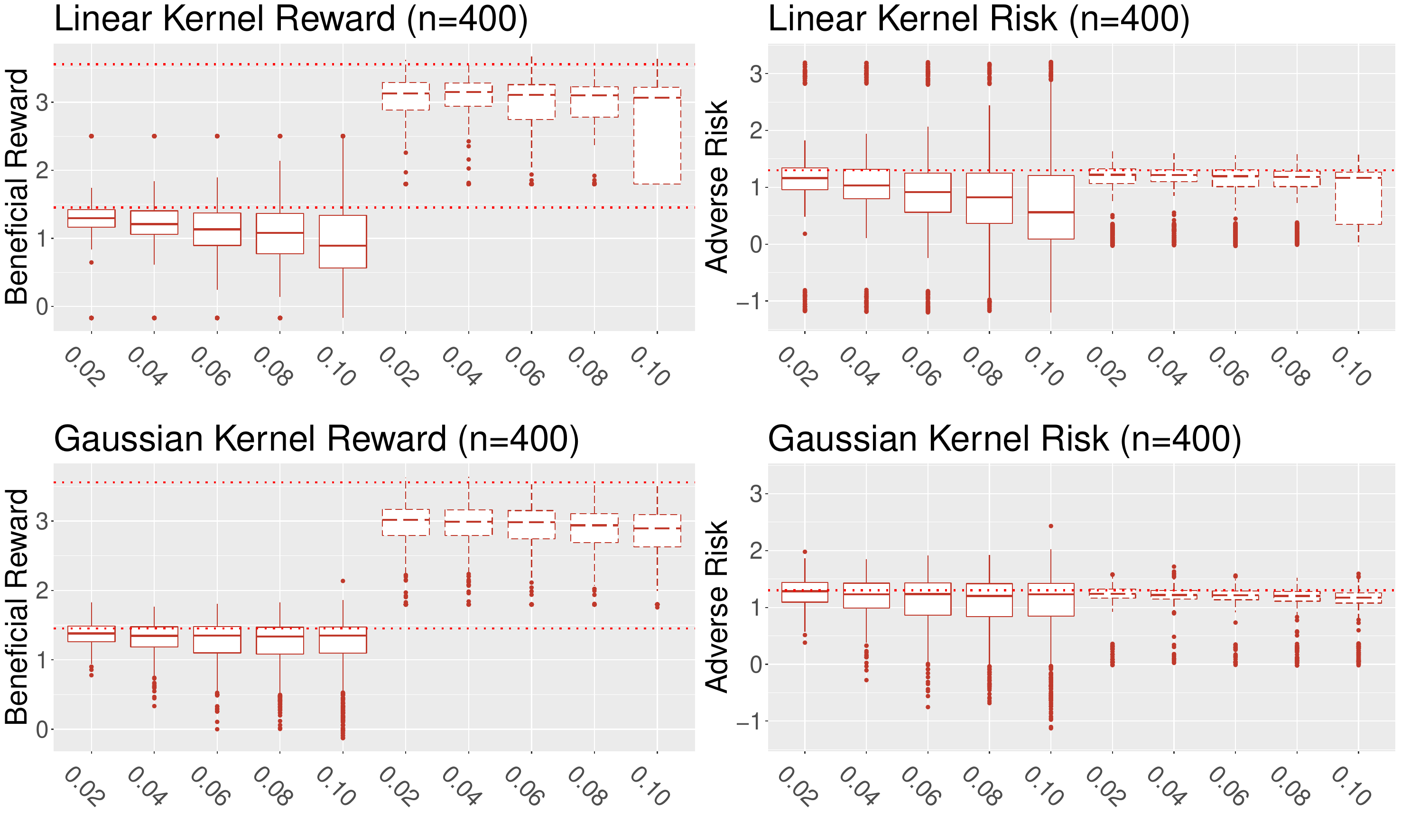}\\
  \vspace{-10pt}
  \includegraphics[scale = 0.3]{legend.pdf}
  \vspace{-10pt}
  \caption{\scriptsize Estimated reward/risk on independent testing data set for simulation setting II, training sample size $n=\{200,400\}$, $\eta=\{0.02,0.04,...,0.1\}$ (x-axis) under linear kernel or Gaussian kernel. The dashed line in reward plots refers to the theoretical optimal reward under given constraints. The dashed line in risk plots represents the risk constraint $\tau=1.3$.}
  \label{sfig:2}
\end{figure}

\afterpage{
\begin{landscape}
  \renewcommand{\arraystretch}{1.25}
  \begin{table}[t]
    \centering
    \scriptsize
    \caption{\label{stable1}\scriptsize Estimated reward/risk on independent testing data for setting I - $\tau_1=\tau_2=1.5$, setting II - $\tau_1=\tau_2=1.3$ and $n=400$ under 3 different methods using linear/Gaussian kernel. Results are reported in the same format as \cref{table1}.}
    \begin{tabular}{lcccccccccc}
      \hline
      &      &        & \multicolumn{4}{c}{Linear Kernel}                               & \multicolumn{4}{c}{Gaussian Kernel}                             \\\cline{4-11}
      Setting    & $\eta$  & Method & Reward - II & Risk - II & Cumulative Reward & Risk - I & Reward - II & Risk - II & Cumulative Reward & Risk - I \\
      \hline\\
      Setting I  & 0.02 & BR-DTRs & 1.557(0.080)    & 1.502(0.070)  & 2.443(0.082)   & 1.475(0.066) & 1.551(0.096)    & 1.497(0.084)  & 2.436(0.091)   & 1.475(0.068) \\
               & 0.02 & Naive  & ---             & ---           & 2.339(0.096)   & 1.460(0.083)   & ---             & ---           & 2.318(0.115)   & 1.448(0.097) \\
               & 0.04 & BR-DTRs & 1.568(0.082)    & 1.510(0.072)  & 2.429(0.087)   & 1.467(0.066)   & 1.546(0.096)    & 1.493(0.084)  & 2.425(0.089)   & 1.463(0.066) \\
               & 0.04 & Naive  & ---             & ---           & 2.319(0.098)   & 1.430(0.088)   & ---             & ---           & 2.304(0.113)   & 1.428(0.096) \\
               & 0.06 & BR-DTRs & 1.563(0.075)    & 1.507(0.066)  & 2.420(0.081)   & 1.455(0.059)   & 1.545(0.095)    & 1.492(0.084)  & 2.414(0.091)   & 1.450(0.063) \\
               & 0.06 & Naive  & ---             & ---           & 2.306(0.100)   & 1.416(0.084)   & ---             & ---           & 2.292(0.108)   & 1.411(0.093) \\
               & 0.08 & BR-DTRs & 1.546(0.082)    & 1.493(0.071)  & 2.387(0.075)   & 1.439(0.059)   & 1.541(0.096)    & 1.489(0.084)  & 2.393(0.091)   & 1.438(0.061) \\
               & 0.08 & Naive  & ---             & ---           & 2.274(0.112)   & 1.397(0.083)   & ---             & ---           & 2.277(0.110)   & 1.400(0.089) \\
               & 0.1  & BR-DTRs & 1.544(0.093)    & 1.489(0.082)  & 2.371(0.068)   & 1.421(0.054)   & 1.539(0.093)    & 1.486(0.082)  & 2.375(0.086)   & 1.424(0.060) \\
               & 0.1  & Naive  & ---             & ---           & 2.273(0.093)   & 1.383(0.080)   & ---             & ---           & 2.268(0.104)   & 1.387(0.083) \\
               &      & AOWL   & 1.983(0.010)    & 2.149(0.044)  & 3.257(0.018)   & 2.678(0.096)   & 1.914(0.030)    & 2.099(0.083)  & 3.212(0.036)   & 2.584(0.218) \\
      \hline\\
      Setting II & 0.02 & BR-DTRs & 1.297(0.132)    & 1.159(0.196)  & 3.128(0.178)   & 1.221(0.116) & 1.380(0.114)    & 1.285(0.169)  & 3.019(0.184)   & 1.240(0.079) \\
               & 0.02 & Naive  & ---             & ---           & 1.797(0.000)   & 0.166(0.000) & ---             & ---           & 1.797(0.000)   & 0.166(0.000) \\
               & 0.04 & BR-DTRs & 1.209(0.168)    & 1.034(0.252)  & 3.150(0.157)   & 1.215(0.099) & 1.347(0.139)    & 1.231(0.206)  & 2.991(0.192)   & 1.218(0.077) \\
               & 0.04 & Naive  & ---             & ---           & 1.797(0.000)   & 0.166(0.000) & ---             & ---           & 1.797(0.000)   & 0.166(0.000) \\
               & 0.06 & BR-DTRs & 1.131(0.238)    & 0.917(0.351)  & 3.109(0.179)   & 1.193(0.129) & 1.349(0.174)    & 1.233(0.253)  & 2.983(0.194)   & 1.213(0.078) \\
               & 0.06 & Naive  & ---             & ---           & 1.797(0.000)   & 0.166(0.000) & ---             & ---           & 1.797(0.000)   & 0.166(0.000) \\
               & 0.08 & BR-DTRs & 1.080(0.299)    & 0.821(0.440)  & 3.102(0.172)   & 1.182(0.115) & 1.335(0.168)    & 1.202(0.247)  & 2.937(0.199)   & 1.202(0.088) \\
               & 0.08 & Naive  & ---             & ---           & 1.797(0.000)   & 0.166(0.000) & ---             & ---           & 1.797(0.000)   & 0.166(0.000) \\
               & 0.1  & BR-DTRs & 0.893(0.394)    & 0.559(0.585)  & 3.067(0.205)   & 1.164(0.14)  & 1.350(0.155)    & 1.229(0.235)  & 2.893(0.217)   & 1.173(0.087) \\
               & 0.1  & Naive  & ---             & ---           & 1.797(0.000)   & 0.166(0.000) & ---             & ---           & 1.797(0.000)   & 0.166(0.000) \\
               &      & AOWL   & 2.440(0.064)    & 3.017(0.002)  & 5.188(0.000)   & 2.839(0.000) & 2.424(0.080)    & 3.018(0.002)  & 5.188(0.000)   & 2.839(0.000)\\
      \hline
    \end{tabular}
  \end{table}
\end{landscape}}

\section{An additional simulation with an unknown and unbalanced study design}
{In this section, we conduct an additional simulation study under setting II but now assume that the treatment assignment depends on covariates as
\begin{align*}
  \text{logit}~p(A_1|H_1)= -X_1+0.25,\quad\text{logit}~p(A_2|H_2)= -X_1+X_2-0.25,
\end{align*}
and treatment probabilities are unknown and will be estimated from data as in observational studies. 
We conduct the simulation following a similar procedure as described in \cref{sec:simulation} with $\tau_1=\tau_2=1.5$ except that in the training step, we estimate the unknown treatment assignment probability via the Lasso logistic regression using all observed features as predictors. The results are presented in \cref{sfig:3} and \cref{stable2} and the conclusions are similar to the findings in \cref{sec:simulation}.}

{\section{A simulated application to personalized promotion in E-commerce}
In this section, we simulate a toy personalized promotion problem and apply BR-DTRs to explore the performance on more general constrained decision-making problems. In this simulation, we let $T=4$ and $\{A_t\}_{t=1}^4$ denote the promotion action taken at wave $t=1,...,4$ with $A_t\in\{-1,+1\}$ denoting two available promotions. We use $\{(Y_t,C_t)\}_{t=1}^4$ to denote the instant commercial reward and the promotion cost after taking actions $\{A_t\}_{t=1}^4$. We assume that five baseline features $(X_1,...,X_5)$ are available for each customer. In this simulation, the instant commercial rewards and the promotion costs are generated according to
\begin{align*}
  Y_1&=1+X_2+(1+A_1)(X_1+X_2)+\epsilon_{Y_1},\\
  R_1&=1+2(1+A_1)(2X_1+1)+\epsilon_{R_1},\\
  Y_2&=1.5+X_2+(1+A_2)(Y_1/4+A_1/2+1/2)+\epsilon_{Y_2},\\
  R_2&=1+2(1+A_2)(X_1+1)+\epsilon_{R_2},\\
  Y_3&=1.5+X_2+(1+A_3)(Y_2/4+A_2/2+1/2)+\epsilon_{Y_3},\\
  R_3&=1+2(1+A_3)(X_1+1)+\epsilon_{R_3},\\
  Y_4&=1.5+X_2+(1+A_4)(Y_3/4+A_3/2+1/2)+\epsilon_{Y_4},\\
  R_4&=1+2(1+A_4)(X_1+1)+\epsilon_{R_4},
\end{align*}
where $\{\epsilon_{Y_t}\}_{t=1}^4$ and $\{\epsilon_{R_t}\}_{t=1}^4$ denote independent random noisy terms with distribution $\text{Unif}[-0.5,0.5]$.}

{We assume that the promotion has been delivered to $n=400$ customers in a pilot study with equal probability, i.e., $P(A_t|X_1,...,X_5)=0.5$ for $t=1,...,4$. Our goal is to learn an optimal strategy from observed data 
$$\{(A_{i1},Y_{i1},R_{i1},...,A_{i4},Y_{i4},R_{i4},X_{i1},...,X_{i5})\}_{i=1}^n$$
to determine which promotion should be delivered to customers so that the cumulative commercial reward $Y=\sum_{t=1}^4Y_t$ is maximized at a cost no larger than $\{\tau_t\}_{t=1}^4$ for each wave according to customers' baseline features, and the promotion costs and instant rewards from the previous waves of promotions. We note that BR-DTRs can be applied to handle the problem by treating $\{Y_t\}_{t=1}^4$ and $\{C_t\}_{t=1}^4$ as the reward and stagewise risks respectively, with $H_1=(X_1,...,X_5)$, $H_2=(H_1,A_1,Y_1,R_1)$, $H_3=(H_2,A_2,Y_2,R_2)$, and $H_4=(H_3,A_3,Y_3,R_3)$. }

{To evaluate the performance of BR-DTRs, we repeat the simulation 100 times and assess the estimated rules on independent testing data with $N=5,000$. We apply both the linear and the Gaussian kernel and the estimation following the same setting as \cref{sec:simulation} except that $\{C_{n,t}\}_{t=1}^4$ are selected from tuning grid $2^{\{-2,-1,0,1,2\}}$ via two-folds cross-validation. We repeat the analyses with $\tau=\tau_1=\cdots=\tau_4$ equal to 4, 5 and 6 respectively and fixing $\eta=\eta_1=\cdots=\eta_4=0.02$. {The complete results are displayed in \cref{stable3}.} The findings are similar to \cref{sec:simulation}, which indicate that BR-DTRs still yields estimated rules with stagewise risks controlled below or close to the prespecified constraints for more general decision-making problems.}


\section*{Appendix D: Flowchart of the Study Design of the DURABLE Trial}
\begin{figure}[hbp!]
  \centering
  \includegraphics[scale=0.5]{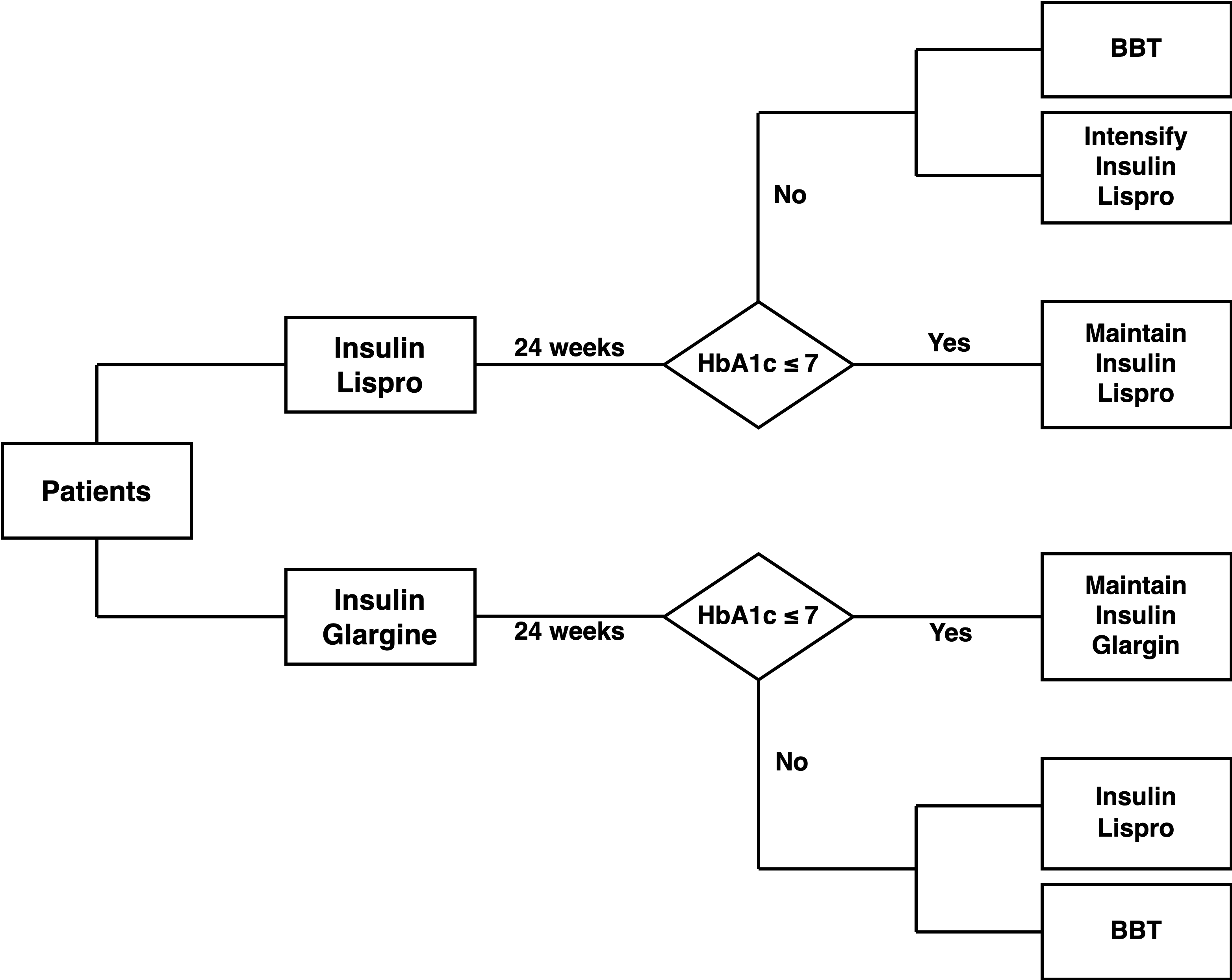}
  \caption{Study design of the DURABLE trial.}
\end{figure}

\begin{figure}[hbp!]
  \centering
  \includegraphics[scale = 0.25]{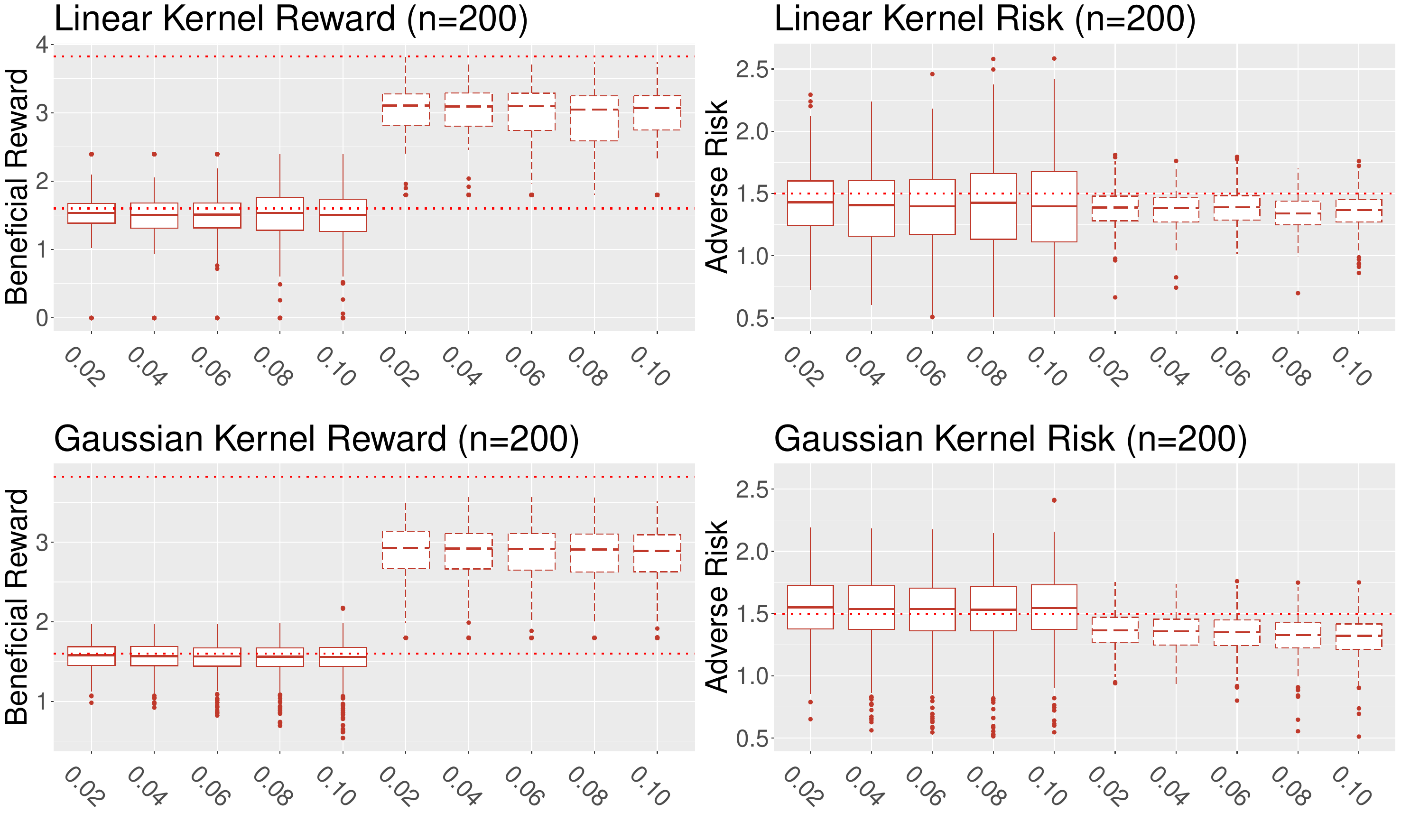}
  \includegraphics[scale = 0.25]{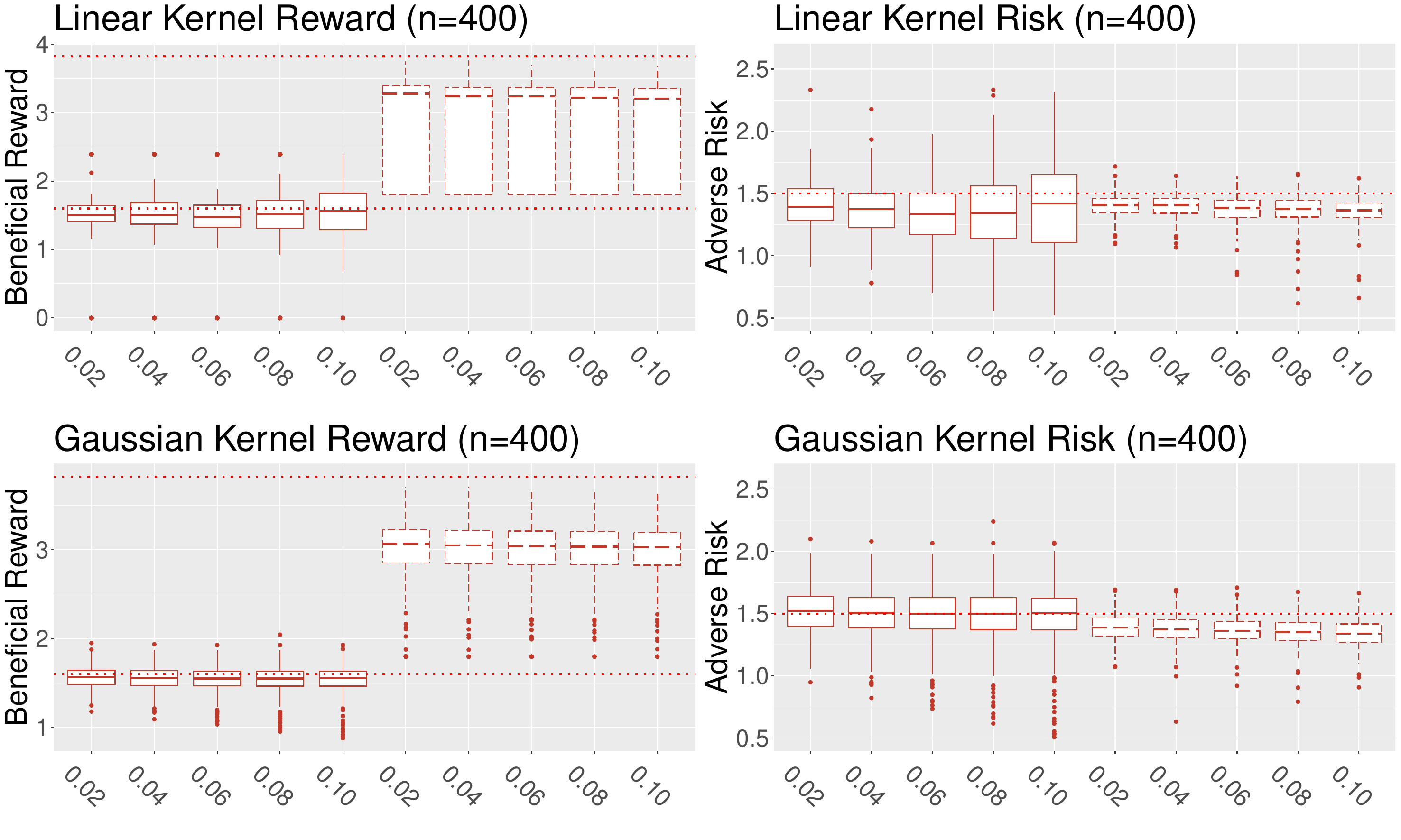}\\
  \vspace{-10pt}
  \includegraphics[scale = 0.3]{legend.pdf}
  \vspace{-10pt}
  \caption{\scriptsize Estimated reward/risk on independent testing data set for the additional simulation study with an unknown and unbalanced design. Results are reported in the way as \cref{fig:1,fig:2}.}
  \label{sfig:3}
\end{figure}

\afterpage{
\begin{landscape}
  \renewcommand{\arraystretch}{1.5}
  \begin{table}[t]
    \centering
    \scriptsize
    \caption{\label{stable2}\scriptsize Estimated reward/risk on independent testing data of the additional simulation study with an unknown and unbalanced design under 3 different methods using linear/Gaussian kernel. Results are reported in the same format as \cref{table1}.}
    \begin{tabular*}{\linewidth}{@{\extracolsep{\fill}}cccccccccc}
      \hline
      &        & \multicolumn{4}{c}{Linear Kernel}                               & \multicolumn{4}{c}{Gaussian Kernel}                             \\
      \cline{3-10}
      $\eta$  & Method & Reward - II & Risk - II & Cumulative Reward & Risk - I & Reward - II & Risk - II & Cumulative Reward & Risk - I \\
      \hline
       0.02 & BR-DTRs & 1.515(0.115) & 1.428(0.166) & 3.276(0.169) & 1.360(0.101)   & 1.563(0.077) & 1.519(0.117) & 3.057(0.180) & 1.386(0.071) \\
       0.02 & Naive  & ---          & ---          & 2.973(0.206) & 1.372(0.093)   & ---          & ---          & 2.923(0.169) & 1.378(0.080) \\
       0.04 & BR-DTRs & 1.501(0.139) & 1.409(0.206) & 3.247(0.188) & 1.354(0.112)   & 1.555(0.082) & 1.503(0.121) & 3.045(0.182) & 1.373(0.070) \\
       0.04 & Naive  & ---          & ---          & 2.932(0.227) & 1.355(0.100)   & ---          & ---          & 2.901(0.172) & 1.357(0.080) \\
       0.06 & BR-DTRs & 1.486(0.166) & 1.381(0.246) & 3.223(0.176) & 1.314(0.129)   & 1.551(0.082) & 1.496(0.123) & 3.045(0.179) & 1.360(0.068) \\
       0.06 & Naive  & ---          & ---          & 2.910(0.232) & 1.328(0.103)   & ---          & ---          & 2.886(0.174) & 1.334(0.085) \\
       0.08 & BR-DTRs & 1.504(0.206) & 1.407(0.298) & 3.220(0.202) & 1.317(0.128)   & 1.550(0.085) & 1.493(0.129) & 3.035(0.190) & 1.351(0.069) \\
       0.08 & Naive  & ---          & ---          & 2.918(0.234) & 1.321(0.110)   & ---          & ---          & 2.870(0.180) & 1.319(0.087) \\
       0.1  & BR-DTRs & 1.549(0.261) & 1.472(0.388) & 3.201(0.231) & 1.302(0.130)   & 1.549(0.084) & 1.498(0.127) & 3.028(0.177) & 1.336(0.075) \\
       0.1  & Naive  & ---          & ---          & 2.898(0.240) & 1.299(0.112)   & ---          & ---          & 2.855(0.181) & 1.304(0.088) \\
            & AOWL   & 2.382(0.030) & 2.798(0.001) & 5.309(0.000) & 2.876(0.000)   & 2.400(0.012) & 2.798(0.001) & 5.309(0.000) & 2.876(0.000) \\
      \hline
    \end{tabular*}
  \end{table}

  \begin{table}[t]
    \centering
    \scriptsize
    {\caption{\label{stable3}\scriptsize Simulation results of the additional personalized promotion example. Results are reported in the same format as \cref{table1}.}}
    \begin{tabular*}{\linewidth}{@{\extracolsep{\fill}}llcccccc}
      \hline
      $\tau$ & Kernel & Cumulative Reward & Wave 1 Cost & Wave 2 Cost & Wave 3 Cost & Wave 4 Cost \\
      \hline
      4 & Linear & 14.358(2.455) & 4.074(0.942) & 4.059(0.086) & 4.150(0.112) & 4.242(0.236) \\
      4 & Gaussian & 16.896(1.534) & 5.074(0.423) & 4.059(0.086) & 4.163(0.098) & 4.294(0.228) \\
      5 & Linear & 17.780(1.208) & 5.204(0.362) & 4.059(0.086) & 5.508(0.338) & 5.495(0.216) \\
      5 & Gaussian & 21.353(1.001) & 6.146(0.287) & 5.210(0.433) & 5.562(0.221) & 5.428(0.263) \\
      6 & Linear & 21.532(0.870) & 6.168(0.316) & 6.213(0.215) & 6.468(0.222) & 6.360(0.205) \\
      6 & Gaussian & 23.114(0.620) & 6.859(0.289) & 6.208(0.192) & 6.447(0.232) & 6.299(0.191) \\
      \hline
      $\infty$ &   &  25.394 & 8.972 & 7.034 & 7.186 & 7.051 \\
      \hline
    \end{tabular*}
  \end{table}
\end{landscape}}



\newpage
\bibliography{BRDTR_jmlr}

\end{document}